\newcommand{\sgn}{\mathop{\mathrm{sgn}}} 
\newcommand{\ac}{\mathrm{ac}} 
\newcommand{\rad}{\mathrm{rad}}
\newcommand{\re}{\mathop{\mathrm{Re}}} 
\newcommand{\im}{\mathop{\mathrm{Im}}} 
\newcommand{\tr}{\mathop{\mathrm{Tr}}}
\newcommand{\supp}{\mathop{\mathrm{supp}}}
\newcommand{\Z}{\mathbb{Z}}
\newcommand{\R}{\mathbb{R}} 
\newcommand{\C}{\mathbb{C}}
\numberwithin{equation}{section}
\theoremstyle{plain}
\newtheorem{thm}{Theorem}[section]
\newtheorem{proposition}[thm]{Proposition}
\newtheorem{lemma}[thm]{Lemma} 
\newtheorem{corollary}[thm]{Corollary}
\theoremstyle{definition} 
\newtheorem{defn}[thm]{Definition}
\newtheorem{assump}[thm]{Assumption}
 \newtheorem{remark}[thm]{Remark}
 \newtheorem*{remarks*}{Remarks}
\newtheorem*{remark*}{Remark}
\title{Orthonormal Strichartz estimates for Schrödinger operator and their applications to infinitely many particle systems }
\author{Akitoshi Hoshiya\thanks{Graduate School of Mathematical Sciences, The University of Tokyo, 3-8-1 Komaba, Meguro-ku, Tokyo 153-8914, Japan \\
 Email address: hoshiya@ms.u-tokyo.ac.jp}}
\begin{document}
\maketitle

\begin{abstract}
We develop an abstract perturbation theory for the orthonormal Strichartz estimates, which were first studied by Frank-Lewin-Lieb-Seiringer. The method used in the proof is based on the duality principle and Kato's smooth perturbation theory. We also deduce the refined Strichartz estimates for the Schrödinger operator in terms of the Besov space. Finally we prove the global existence of a solution for the Hartree equation with electromagnetic potentials describing the dynamics of infinitely many fermions. This would be the first result on the orthonormal Strichartz estimates for the Schrödinger operator with general time-independent potentials including very short range and inverse square type potentials.
\end{abstract}

\section{Introduction}\label{23110221}

This paper is devoted to the proof of the orthonormal Strichartz estimates for the Schrödinger operator and their applications.  The orthonormal Strichartz estimates are studied by Frank-Lewin-Lieb-Seiringer \cite {FLLS} for the first time. They are the following inequalities: 
\begin{align}
\left \| \sum_{n=0}^ \infty{\nu_n|e^{it\Delta}f_n|^2} \right\|_{L^p_t L^q_x} \lesssim \| \nu_n\|_{\ell^\alpha}.
\label{231115}
\end{align}
Here  $\{f_n\}$ is an orthonormal system in $L^2(\mathbb{R}^d)$ and $p, q, \alpha \in [1, \infty]$ satisfy some conditions specified later. 
 $e^{it\Delta}$ denotes the free Schrödinger propagator.

The aim of this paper is to extend the orthonormal Strichartz estimates to the Schrödinger operator: $H= -\Delta + V$.
We are going to develop an abstract perturbation method based on Kato's smooth perturbation theory \cite{KY} and apply it to potentials of very short range or inverse square type. The magnetic Schrödinger operator satisfying the assumptions in \cite{EGS} is also dealt with.

After proving the orthonormal Strichartz estimates for the Schrödinger operator, we verify 
the refined Strichartz estimates in terms of the Besov spaces:
\[ \|e^{-itH}P_{\ac}(H)u\|_{L^{2p}_t L^{2q}_x} \lesssim \|u\|_{\dot{B}^0_{2,\frac{4q}{q+1}}} .\]Here $P_{\ac}(H)$ denotes the orthonormal projection onto the absolutely continuous subspace. For the free case: $H=-\Delta$, this is proved in \cite{FS}. And finally we prove the global existence of a solution for the Hartree equation describing infinitely many particles:
\[
\left\{
\begin{array}{l}
i\partial_t \gamma=[H+w*\rho_{\gamma},\gamma], \\
\gamma(0)=\gamma_0,
\end{array}
\right.
\]     
where $H= (D+A(x))^2 +V(x)$. Here $\gamma$ is an operator-valued function and $\rho_{\gamma}$ denotes its density function. Global existence for $H=-\Delta$ is also proved in \cite{FS}.

The orthonormal Strichartz estimates \eqref{231115} is important from two perspectives.
First it is a refinement of the usual Strichartz estimates: 
\[ \|e^{-it\Delta}u\|_{L^{2p}_t L^{2q}_x} \lesssim \|u\|_2 , \]
 which has been used in the analysis of the nonlinear Schrödinger equations for about fifty years. 
This estimate was first deduced by Strichartz \cite{St} for $p=q$ and has been extended to more general $p,q$ by Ginibre-Velo \cite{GV} and Yajima \cite{Y} except endpoint case. Finally, Keel-Tao \cite{KT} proved the endpoint estimates and gave an abstract method to prove the Strichartz estimates from dispersive estimates. Note that if we take $f_n =0$ for $n\ge 1$ in the orthonormal Strichartz estimates, we obtain the usual one.
Second it is used in the analysis of the Hartree equations describing infinitely many particle system of fermions. This was first studied in \cite{LS} and the local or global well-posedness was proved under various conditions. For higher order or fractional cases,  \cite{BLN} proved the local existence of a solution. Recently the asymptotic stability of stationary solutions is considered in \cite{Ha}.

After \cite{FLLS}, the orthonormal Strichartz estimates have been extended by Frank-Sabin \cite{FS} for more general $p, q, \alpha$. Bez-Hong-Lee-Nakamura-Sawano \cite{BHLNS} extended the estimates for the case when $\{f_n\}$ is an orthonormal system in $\dot{H}^s$.  In Bez-Lee-Nakamura \cite{BLN}, the orthonormal Strichartz estimates for the fractional Schrödinger, wave and Klein-Gordon equations are also treated.

As far as the author knows, the orthonormal Strichartz estimates for the Schrödinger operator $H$ have been considered only for the special operator: the special Hermite operator, the $(k, a)$-generalized Laguerre operator (including harmonic oscillators) and the Dunkl operator. In \cite{Ms}, the special Hermite operator is discussed. In \cite{MS}, the local in time orthonormal Strichartz estimates for the $(k, a)$-generalized Laguerre operator and the global in time orthonormal Strichartz estimates for the Dunkl operators are proved under some restriction (harmonic oscillator is also treated in \cite{BHLNS}). On these operators, the recent result \cite{TT} proves the usual Strichartz estimates under more general assumptions. As far as we are aware of, we do not know any other results except for the above on the global in time orthonormal Strichartz estimates for the general Schrödinger operator. On the other hand, the orthonormal Strichartz estimates for $H= -\Delta +V(t, x)$ where $V \in L^{\mu}_tL^{\nu}_x$ for $\mu<\infty, \frac{2}{\mu} +\frac{d}{\nu} =2$ are recently considered in \cite{Ha}.

For the usual Strichartz estimates, such generalization has been done by many researchers. In \cite{RS}, Rodnianski-Schlag proved the Strichartz estimates for $H=-\Delta +V$ when $|V(x)| \lesssim \langle x \rangle ^{-2-\epsilon}$ except endpoint. The above potentials are said to be very short range. They also gave an abstract perturbation method for the Strichartz estimates, which enables us to deduce the Strichartz estimates from the Kato smoothing estimates. In \cite{BM}, Bouclet-Mizutani proved the Strichartz estimates for inverse square type potentials including endpoint and double endpoint estimates. They also generalized the perturbation method above including double endpoint cases. In [M1], the Strichartz estimates for $H$ satisfying $|V(x)| \lesssim \langle x \rangle ^{-\mu}$ for some $\mu \in (0, 2)$ and further assumptions are considered. In that case, it is difficult to use the abstract perturbation method. Hence Mizutani used the method of microlocal analysis, which has been used to prove the Strichartz estimates for variable coefficient operators. 

For the magnetic Schrödinger operator: $H= (D+A(x))^2 +V(x)$, $D=-i\nabla$,  the Strichartz estimates are also proved by \cite{EGS} except endpoint using the above perturbation method. The endpoint case is discussed in \cite{DFVV}.

For variable coefficient operators: $H=-\partial_i a_{ij}\partial_j +V$, where $a_{ij}$ is an asymptotically flat metric, the Strichartz estimates are considered in \cite{MMT} when $|V(x)|\lesssim \langle x \rangle ^{-2} (\log \langle x \rangle )^{-2}$. In \cite{MY1}, they considered inverse square type potentials using the above perturbation method and \cite{MMT}.

It is also possible to consider the orthonormal Strichartz estimates for the fractional, higher order, wave and Klein-Gordon equations with potentials. They would be discussed in the forthcoming paper including magnetic potentials.

The paper is organized as follows. In Section~\ref{2311152141} we explain our main results and give some remarks. In Section \ref{2311022103}, we collect some lemmas used in the proof of our abstract perturbation method (Theorem \ref{2311022111}), e.g., the duality principle, the Christ-Kiselev type lemma and the orthonormal Strichartz estimates for the free Hamiltonian.  Finally in Section 4, we give the proof of main results.

\section{Main Results}\label{2311152141}

Before stating our main results, we give some notations.

\subsubsection*{\textbf{Notations}}

\begin{itemize}
\item
For a measure space $(X, d\mu)$, $ L^p(X)$ denotes the usual $L^p$-space and its norm is $\| \cdot \|_p$. 

\item
 $\mathcal{F}$ denotes the Fourier transform on $\mathcal{S}'$. Here $\mathcal{S}'$ denotes the set of all the tempered distributions.

\item
 For a Banach space $X$, $L^p_t X$ denotes the set of all the measurable $f : \mathbb{R} \rightarrow X$ such that $\|f\|_{L^p_t X} :=(\int_{\mathbb{R}} \|f(t)\|^p_X dt)^{1/p} < \infty$

\item For a self-adjoint operator $H$ and a Borel measurable function $f$, $f(H)$ is defined as $f(H)= \int_{\mathbb{R}} f(\lambda) dE(\lambda)$. Here $E(\lambda)$ is the spectral measure associated to $H$.

\item For a self-adjoint operator $H$, $P_{\ac}(H)$ denotes the projection onto the absolutely continuous subspace $\mathcal{H}_{\ac}$. Here $u \in \mathcal{H}_{\ac}$ iff $\langle E(\lambda)u, u \rangle$ is absolutely continuous with respect to $\lambda$.

\item For a normed space $X$, $\mathcal{B}(X)$ denotes the set of all the bounded operators on $X$.
\end{itemize}

\subsubsection*{\textbf{Kato's smooth perturbation theory}}

We also need to recall Kato's smooth perturbation theory. 
The next theorem is one of the key lemmas in the smooth perturbation theory by Kato \cite{KY}. 

\begin{thm}[\cite{KY}]\label{2311152150}

Suppose $\mathcal{H}$ is an arbitrary Hilbert space. Let $A$ be a densely defined closed operator and $H$ be a self-adjoint operator on $\mathcal{H}$. Then the following are equivalent:
\begin{enumerate}[(1)]
\item
$\|Ae^{-itH}u\|_{L^2_t \mathcal{H}} \lesssim \|u\|_\mathcal{H}$ holds for any $u \in \mathcal{H}$.
\item
$ \sup_{z\in \mathbb{C} \backslash \mathbb{R}}|\langle \im(H-z)^{-1} A^* u, A^* u \rangle| \lesssim \|u\|^2_{\mathcal{H}} $ holds.
\end{enumerate}
In particular, if $ \sup_{z\in \mathbb{C} \backslash \mathbb{R}}\| A(H-z)^{-1} A^*\|_{\mathcal{B}(\mathcal{H})} \lesssim 1$, we have (1) and (2).
\end{thm}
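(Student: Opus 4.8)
The plan is to prove the two characterizations separately and then derive the final remark. The equivalence $(1)\Leftrightarrow(2)$ is precisely Kato's theorem and I would treat the "in particular" clause as the content to be argued here, since that is the form actually used in applications. So the real task is: assuming
\[
M := \sup_{z\in\mathbb{C}\setminus\mathbb{R}}\|A(H-z)^{-1}A^*\|_{\mathcal{B}(\mathcal{H})} < \infty,
\]
deduce statement $(2)$, and hence $(1)$.

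**First I would** reduce $(2)$ to an estimate on the imaginary part of the resolvent sandwiched by $A^*$. For $z=\lambda+i\varepsilon$ with $\varepsilon\neq 0$, write $\im(H-z)^{-1} = \tfrac{1}{2i}\big((H-z)^{-1}-(H-\bar z)^{-1}\big) = \varepsilon\,(H-\bar z)^{-1}(H-z)^{-1}$, using the resolvent identity. Then for $u\in\mathcal{H}$,
\[
\langle \im(H-z)^{-1}A^*u,\,A^*u\rangle = \varepsilon\,\big\langle (H-z)^{-1}A^*u,\,(H-z)^{-1}A^*u\big\rangle = \varepsilon\,\|(H-z)^{-1}A^*u\|_{\mathcal{H}}^2 \ge 0.
\]
The subtlety is that $A^*u$ need not lie in $\mathcal{H}$ unless $u\in D(A^*)$; for a clean statement one first proves the bound for $u$ in the form domain / $D(A^*)$ and then notes that the claimed inequality is vacuous or passes to the closure otherwise. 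Granting that, I would bound $\varepsilon\|(H-z)^{-1}A^*u\|^2$ by pairing: $\varepsilon\|(H-z)^{-1}A^*u\|^2 = \langle A(H-\bar z)^{-1}(H-z)^{-1}A^*u, \varepsilon\, (\text{adjust})\rangle$ — more directly, $\langle \im(H-z)^{-1}A^*u, A^*u\rangle = \langle A\,\im(H-z)^{-1}A^*u, u\rangle \le \|A\,\im(H-z)^{-1}A^*\|_{\mathcal{B}(\mathcal{H})}\|u\|^2$, and since $\|A\,\im(H-z)^{-1}A^*\| \le \tfrac12(\|A(H-z)^{-1}A^*\| + \|A(H-\bar z)^{-1}A^*\|) \le M$, we get $(2)$ with implicit constant $M$.

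**Next I would** establish $(2)\Rightarrow(1)$, which is the heart of Kato's argument. The idea is Plancherel in $t$: formally, $\widehat{Ae^{-itH}u}(\tau)$ is related to $A\,\delta(H-\tau)u$, and
\[
\|Ae^{-itH}u\|_{L^2_t\mathcal{H}}^2 = 2\pi\int_{\mathbb{R}} \Big\langle A\frac{dE}{d\lambda}(\lambda)A^*\,(\cdot),\,(\cdot)\Big\rangle\,d\lambda,
\]
and the density $\tfrac{dE}{d\lambda}$ is recovered as a weak-$*$ limit of $\tfrac{1}{\pi}\im(H-\lambda-i\varepsilon)^{-1}$ as $\varepsilon\downarrow 0$ (Stone's formula), so $(2)$ gives a uniform bound that survives the limit. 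To make this rigorous I would: (i) assume first $u$ has spectrally compact support so all integrals converge, (ii) use $\langle Ae^{-itH}u, g(t)\rangle$ tested against $g\in L^2_t\mathcal{H}$ with nice Fourier support, (iii) apply Plancherel and the spectral theorem to rewrite the pairing in terms of $\int \langle dE(\lambda)A^*\cdot,A^*\cdot\rangle$, (iv) dominate the spectral density by the boundary value of $\im(H-z)^{-1}$ via Stone's formula and a Fatou/monotone argument, and (v) remove the compact-support assumption by density.

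**The main obstacle** I expect is the rigorous passage from the resolvent bound (an estimate for $\varepsilon>0$) to the statement about the absolutely continuous spectral density and then to the $L^2_t$ bound — i.e., justifying the limit $\varepsilon\downarrow 0$ and the application of Plancherel when $A^*u\notin\mathcal{H}$ in general. The standard device is to work with $A(H-z)^{-1}$, which \emph{is} a genuine bounded operator on $\mathcal{H}$ for $\im z\neq 0$, rewrite everything through the identity $\|Ae^{-itH}u\|_{L^2_t}^2 = \lim_{\varepsilon\downarrow0}\tfrac{1}{\pi}\int e^{-2\varepsilon|t|}\|Ae^{-itH}u\|^2\,dt$ and the Fourier/Laplace representation $\int_0^\infty e^{-\varepsilon t}e^{-it(H-\lambda)}\,dt = i(H-\lambda-i\varepsilon)^{-1}$, turning the time integral into a resolvent expression to which $(2)$ applies uniformly. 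Since this is exactly the content of Kato's paper \cite{KY}, in the write-up I would either reproduce this computation or simply cite \cite{KY} for $(1)\Leftrightarrow(2)$ and give only the short resolvent-identity argument above for the "in particular" clause, which is all that the rest of the paper uses.
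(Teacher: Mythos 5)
The paper gives no proof of this theorem — it cites Kato's paper \cite{KY} — and your plan to do exactly that for the equivalence $(1)\Leftrightarrow(2)$ matches the paper's treatment. Your short argument for the ``in particular'' clause (write $A\,\im(H-z)^{-1}A^* = \tfrac{1}{2i}\bigl(A(H-z)^{-1}A^* - A(H-\bar z)^{-1}A^*\bigr)$, use the uniform bound on each term, and pair against $u$) is correct and is indeed the only piece one would actually need to spell out, the domain caveat for $A^*u$ being handled as you indicate.
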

The above smoothing estimate is called the Kato smoothing estimate or the Kato-Yajima estimate.

\begin{defn}
If $A$ satisfies the condition in Theorem~\ref{2311152150}, we say that $A$ is $H$-smooth.
\end{defn}

\noindent
Now we give our main results. First we prove the following abstract perturbation theorem.

\begin{thm}\label{2311022111}

Let $(X ,d\mu)$ be a $\sigma$-finite measure space and $\mathcal{H}:=L^2(X)$. Assume that for self-adjoint operators $H , H_0$ and densely defined closed operators $Y, Z $ on $\mathcal{H}$, 
$H=H_0 + V ,V=Y^*Z$ holds in the form sense, i.e.
\[D(H) \cup D(H_0) \subset D(Y) \cap D(Z) \quad \text{and}\quad  \langle Hu,v \rangle=\langle H_0 u,v\rangle+\langle Zu,Yv\rangle \] 
for all $u, v \in D(H) \cap D(H_0)$.
We also assume $Y$ is $H_0$-smooth and $ZP_{ac}(H)$ is $H$-smooth.
If we have 
\[ \left\| \sum_{n=0}^ \infty{\nu_n|e^{-itH_0}f_n|^2} \right\|_{L^p_t L^q_x} \lesssim \| \nu_n\|_{\ell^\alpha}\] for some $p, q \in [1,\infty] $, $ \alpha \in(1,\infty)$ and all orthonormal systems $\{f_n\}$ in $\mathcal{H}$,
then we also have 
\[ \left\| \sum_{n=0}^ \infty{\nu_n|e^{-itH}P_{ac}(H)f_n|^2} \right\|_{L^p_t L^q_x} \lesssim \| \nu_n\|_{\ell^\alpha}.\]
\end{thm}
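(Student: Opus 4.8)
The plan is to use the duality principle (stated earlier in Section~\ref{2311022103}) to convert the orthonormal Strichartz estimate into an equivalent statement about a Schatten-class bound on an operator. Specifically, the orthonormal estimate $\left\| \sum_n \nu_n |e^{-itH}P_{ac}(H)f_n|^2\right\|_{L^p_tL^q_x} \lesssim \|\nu_n\|_{\ell^\alpha}$ is equivalent (by duality, with $\beta = \alpha'$ and appropriate exponents $(2p)', (2q)'$ dual to $2p, 2q$) to the bound
\[
\left\| W_1\, e^{-itH}P_{ac}(H)\, (e^{-itH}P_{ac}(H))^* \, W_2 \right\|_{\mathfrak{S}^{\beta}(L^2_tL^2_x \to L^2_tL^2_x)} \lesssim \|W_1\|_{L^{2p}_tL^{2q}_x}\|W_2\|_{L^{2p}_tL^{2q}_x},
\]
or more precisely a bound on $\Gamma(W) := W e^{-itH}P_{ac}(H) \mathbf{1}_{s<t} (e^{-isH}P_{ac}(H))^* \bar W$ type objects; the same equivalence holds for $H_0$ in place of $H$. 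So the hypothesis gives us the $H_0$-version of this Schatten bound, and we want the $H$-version.

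The core of the argument is then a \emph{perturbative/Duhamel expansion} at the level of these operators. Write the Duhamel formula relating $e^{-itH}P_{ac}(H)$ and $e^{-itH_0}$: formally,
\[
e^{-itH}P_{ac}(H) = e^{-itH_0} - i\int_0^t e^{-i(t-s)H_0} V e^{-isH}P_{ac}(H)\, ds,
\]
and substitute $V = Y^*Z$. The key structural point is that $Y$ being $H_0$-smooth means the operator $u \mapsto Y e^{-itH_0}u$ maps $\mathcal{H} \to L^2_t\mathcal{H}$ boundedly (this is exactly condition (1) in Theorem~\ref{2311152150}), and dually $F \mapsto \int e^{isH_0}Y^* F(s)\,ds$ maps $L^2_t\mathcal{H}\to\mathcal{H}$; similarly $ZP_{ac}(H)$ being $H$-smooth gives $u\mapsto ZP_{ac}(H)e^{-itH}u$ bounded $\mathcal{H}\to L^2_t\mathcal{H}$. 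These $L^2_t$-smoothing bounds are precisely what is needed to "absorb" the potential: in the composed operator $\Gamma_H(W)$ one factors the Duhamel correction through $Y^*$ and $Z$, using that $W e^{-itH_0}$ composed with the free propagator already satisfies the desired Schatten bound (the $H_0$-hypothesis, applied with one slot being a genuine $L^{2p}_tL^{2q}_x$ weight and the other handled by a bounded/Hilbert--Schmidt-type operator), while the remaining $Y$- and $Z$-factors contribute only \emph{bounded operators on $L^2_t\mathcal{H}$} by Kato smoothness. One should organize this using the resolvent identity / the "$TT^*$" form: the clean way is to prove an inhomogeneous (retarded) version for the Duhamel term and then invoke the Christ--Kiselev-type lemma (collected in Section~\ref{2311022103}) to pass from the untruncated space-time estimate to the time-ordered one — this is needed because $\beta = \alpha' < \infty$ forces $2p > 2$, so the relevant exponents are off-diagonal and Christ--Kiselev applies.

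Concretely the steps are: (i) restate the target and hypothesis in Schatten-$\mathfrak{S}^\beta$ form via the duality principle lemma; (ii) establish that $\|Ye^{-itH_0}u\|_{L^2_t\mathcal{H}}\lesssim\|u\|$ and $\|ZP_{ac}(H)e^{-itH}u\|_{L^2_t\mathcal{H}}\lesssim\|u\|$, i.e. the two smoothing bounds, and note the free orthonormal estimate gives the free Schatten bound; (iii) expand $e^{-itH}P_{ac}(H)$ by Duhamel once (or iterate, or sum a Neumann series for $(1 + \text{operator})^{-1}$) and bound the error term $\int_0^t e^{-i(t-s)H_0}Y^*Ze^{-isH}P_{ac}(H)\,ds$ by composing: $W e^{-i(t-s)H_0}Y^*$ is Schatten-bounded into $L^2_s\mathcal{H}$ against the weight from the free estimate, then $Z e^{-isH}P_{ac}(H)$ is bounded on $L^2$; (iv) handle time-ordering via Christ--Kiselev; (v) reassemble and re-dualize to recover the orthonormal estimate for $H$. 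The main obstacle — and the place the argument must be most careful — is step (iii): making the composition of a Schatten-class operator with merely-bounded smoothing operators rigorous while keeping the weights $W_1, W_2$ only in the "outermost" slots, and ensuring the Duhamel identity holds in the correct operator-topology sense given that $V$ is only form-bounded (so one works with $Y, Z$ separately and never with $V$ as an operator). A secondary subtlety is verifying the endpoint/range of exponents for which the Christ--Kiselev lemma is applicable, and checking that $P_{ac}(H)$ commutes appropriately with $e^{-itH}$ so the time-ordered kernel is well-defined.
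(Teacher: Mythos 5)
Your proposal follows essentially the same route as the paper's proof: reduce via the Frank--Sabin duality principle to a Schatten bound on $f\,e^{-itH}P_{\ac}(H)$, apply the Duhamel formula with $V=Y^*Z$, split the error term into a Schatten factor $f\Gamma_{H_0}Y^*$ and a bounded factor $Z e^{-itH}P_{\ac}(H)$ coming from the two smoothing hypotheses, pass from the truncated to the untruncated space-time operator by the Christ--Kiselev lemma, and then factor $f\widetilde{\Gamma}_{H_0}Y^*=fU_{H_0}\cdot U_{H_0}^\dagger Y^*$ to invoke the free Schatten estimate and the $H_0$-smoothness of $Y$. One minor correction to your justification: in the Schatten-class Christ--Kiselev lemma used here, the applicability condition is that the Schatten index $2\alpha'\in(1,\infty)$ (guaranteed by $\alpha\in(1,\infty)$), not a statement about $2p>2$.
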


As a corollary of this theorem, we have the following concerning the Schr\"odinger operator on $\mathbb{R}^d$.

\begin{corollary}[Very short range potentials]\label{11161517}

Let $H=H_0 +V$, $H_0 = -\Delta$, $V:\mathbb{R}^d \rightarrow \mathbb{R}$,
$|V(x)| \lesssim \langle x \rangle^{-2-\epsilon}$.
We assume $zero$ is neither an eigenvalue nor a resonance of $H$.
Then we have 
 \[ \left\| \sum_{n=0}^ \infty{\nu_n|e^{-itH}P_{ac}(H)f_n|^2} \right\|_{L^p_t L^q_x} \lesssim \| \nu_n\|_{\ell^\alpha}\]  for $p ,q,\alpha$  satisfying either of the following.
\begin{enumerate}[(S1)]
\item \label{2311152155}
$d  \ge1 ,p, q \in [1,\infty] ,2/p +d/q =d ,q\in[1,{\frac{{d+1}}{d-1})} ,\alpha=\frac{2q}{{q+1}}$;

\item \label{2311152156}
$d\ge3 ,2/p +d/q =d ,q\in(\frac{d+1}{d-1},\frac{d}{d-2}) ,\alpha < p$.
\end{enumerate}
\end{corollary}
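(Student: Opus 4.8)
The plan is to deduce Corollary \ref{11161517} from the abstract perturbation theorem (Theorem \ref{2311022111}) applied with $H_0 = -\Delta$ and $V$ the given potential, so the entire task reduces to verifying the hypotheses of that theorem together with checking that the free orthonormal Strichartz estimate holds for the stated exponent triples. First I would record the factorization $V = Y^*Z$: since $V$ is real-valued, the natural choice is $Y = |V|^{1/2}$ and $Z = \sgn(V)|V|^{1/2}$, both of which are multiplication operators satisfying $|Y|, |Z| \lesssim \langle x\rangle^{-1-\epsilon/2}$. The form identity $H = H_0 + Y^*Z$ on $D(H)\cap D(H_0)$ is immediate because $V$ is a bounded (indeed decaying) potential, so $D(H) = D(H_0) = H^2(\R^d)$ and everything is classical.

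Next I would verify the two smoothness hypotheses. For $Y$ being $H_0$-smooth: this is the classical Kato smoothing estimate for the free Laplacian against a weight $\langle x\rangle^{-1-\epsilon/2}$, i.e. $\sup_{z\notin\R}\|\langle x\rangle^{-1-\epsilon/2}(-\Delta-z)^{-1}\langle x\rangle^{-1-\epsilon/2}\|_{\mathcal B(L^2)} \lesssim 1$, which holds for any weight decaying faster than $\langle x\rangle^{-1}$ (limiting absorption principle / Agmon–Kato–Kuroda), and one invokes the last sentence of Theorem \ref{2311152150}. For $ZP_{\ac}(H)$ being $H$-smooth: here one needs the limiting absorption principle for the perturbed operator $H$, namely $\sup_{z\notin\R}\|\langle x\rangle^{-1-\epsilon/2}(H-z)^{-1}\langle x\rangle^{-1-\epsilon/2}\|_{\mathcal B(L^2)} \lesssim 1$ on the absolutely continuous subspace. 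This is exactly where the hypothesis that zero is neither an eigenvalue nor a resonance of $H$ enters: it guarantees that the perturbed resolvent is uniformly bounded in weighted spaces up to and including the threshold $\lambda = 0$ (away from zero the LAP for $H$ follows from the free LAP plus a Neumann-series / resolvent-identity argument since $V$ is $H_0$-compact in the relevant weighted sense, and one also needs absence of positive eigenvalues, which holds under the decay hypothesis by Kato's theorem). I would cite \cite{RS} (Rodnianski–Schlag) for precisely this package in the very short range setting, since they establish the weighted resolvent bounds needed to run their abstract perturbation scheme; the point is that our Theorem \ref{2311022111} is built to consume exactly the same smoothing inputs.

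Finally I would supply the free orthonormal Strichartz estimate $\|\sum_n \nu_n |e^{it\Delta}f_n|^2\|_{L^p_tL^q_x} \lesssim \|\nu_n\|_{\ell^\alpha}$ for the exponents in (S1) and (S2). Case (S1), with $q \in [1,\frac{d+1}{d-1})$ and $\alpha = \frac{2q}{q+1}$, is the optimal-summability range established by Frank–Lewin–Lieb–Seiringer \cite{FLLS} (the endpoint $q = \frac{d+1}{d-1}$ being the one they also treat, here excluded), while case (S2), with $q \in (\frac{d+1}{d-1}, \frac{d}{d-2})$ and $\alpha < p$, is the extended range of Frank–Sabin \cite{FS}; both are stated on the admissible line $2/p + d/q = d$. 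Feeding these into Theorem \ref{2311022111} yields the claimed estimate for $e^{-itH}P_{\ac}(H)$, which completes the proof.

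The main obstacle is not the abstract deduction — that is essentially a one-line invocation of Theorem \ref{2311022111} once the inputs are in place — but rather the careful verification that $ZP_{\ac}(H)$ is $H$-smooth, i.e. the uniform weighted resolvent estimate for the perturbed operator including the threshold behavior at zero energy. This is where the no-eigenvalue/no-resonance assumption is essential and where one must combine the free limiting absorption principle, compactness of $\langle x\rangle^{-1-\epsilon/2}$-sandwiched resolvent differences, absence of positive eigenvalues, and a Fredholm/continuity argument near $\lambda = 0$; I would lean on the existing literature (\cite{RS}, and the Agmon–Kato–Kuroda theory) rather than reproving it, and simply remark that these are standard consequences of the stated hypotheses on $V$.
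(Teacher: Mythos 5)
Your proposal is correct and follows essentially the same route as the paper: apply Theorem~\ref{2311022111} with the factorization $V=Y^*Z$ into multiplication operators of decay faster than $\langle x\rangle^{-1}$, verify $H_0$-smoothness of $Y$ via the free LAP, verify $H$-smoothness of $ZP_{\ac}(H)$ by citing the weighted resolvent estimate of Rodnianski--Schlag~\cite{RS} (which is exactly what the paper's Proposition~\ref{11122216} does), and feed in the free orthonormal Strichartz estimates from~\cite{FLLS},~\cite{FS}. The only cosmetic difference is which of $Y,Z$ carries the factor $\sgn V$, which is immaterial.
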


Note that $zero$ is called a resonance iff there exists $u \ne 0$ satisfying $-\Delta u+Vu=0 , u\in L^2_{-\alpha}$ for some $\alpha>1/2 $. Here $L^2_{\alpha}$ is the weighted $L^2$ space and its norm is defined by $\| u \|_{L^2_{\alpha}}=\|\langle x \rangle ^\alpha u\|_{2}$.

\begin{corollary}[Potentials of inverse square type]\label{ooo}  
Let $H= H_0 +V$, $H_0 = -\Delta$, $V:\mathbb{R}^d \rightarrow \mathbb{R}$, $d \ge3$.
We assume $V\in X^\sigma_d :=\{ V:\mathbb{R}^d \rightarrow \mathbb{R} \mid |x|V \in M^{d,2\sigma} ,x\cdot \nabla V \in M^{d/2,\sigma}  \} $ for some $ \sigma \in (\frac{d-1}{2},\frac{d}{2})$,
and 
\begin{align*}
\langle Hf,f\rangle  \gtrsim \| \nabla f \|^2_2,
\quad 
\langle (-\Delta-V-x\cdot \nabla V)f,f \rangle \gtrsim \| \nabla f \|^2_2 
\end{align*}
hold for any $f\in C^\infty_0(\mathbb R^d)$.
Then we have 
\[\left\| \sum_{n=0}^ \infty{\nu_n|e^{-itH}f_n|^2} \right\|_{L^p_t L^q_x} \lesssim \| \nu_n\|_{\ell^\alpha}\]
for $p, q, \alpha$ satisfying $(S1)$ or $(S2)$. 
In the definition of $X^{\sigma} _d$, $M^{p, q} $ denotes the Morrey-Campanato space. See Section2 of \cite{BM} for its definition and properties.
\end{corollary}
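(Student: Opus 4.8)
The plan is to deduce Corollary~\ref{ooo} from the abstract Theorem~\ref{2311022111} with $H_0=-\Delta$, mirroring the way Bouclet--Mizutani \cite{BM} derive the usual Strichartz estimates for inverse-square-type potentials from Kato smoothing. First I would factor $V=Y^*Z$ by taking $Y:=|V|^{1/2}$ and $Z:=(\sgn V)|V|^{1/2}$, regarded as multiplication operators; since $V$ is real, $Y$ and $Z$ are self-adjoint and $Y^*Z=(\sgn V)|V|=V$. The form identity required in Theorem~\ref{2311022111} then holds on $D(H)\cap D(H_0)$ once $H$ is realized via the quadratic form $f\mapsto\|\nabla f\|_2^2+\int_{\R^d} V|f|^2\,dx$ (which is closed and nonnegative by the first positivity hypothesis and Hardy's inequality), with $D(H_0)$ contained in its form domain because $|x|V\in M^{d,2\sigma}$ together with Hardy's inequality makes $V$ form-bounded relative to $-\Delta$, which also gives $D(H_0)\subset D(Y)\cap D(Z)$.

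Before invoking the theorem I would record that $P_{\ac}(H)=\mathrm{Id}$, which is why no spectral projection appears in Corollary~\ref{ooo}, in contrast with Corollary~\ref{11161517}. From $\langle Hf,f\rangle\gtrsim\|\nabla f\|_2^2$ one gets $H\ge0$, hence $\sigma(H)\subset[0,\infty)$, and $H$ has neither a zero eigenvalue nor a zero resonance; the repulsivity hypothesis $\langle(-\Delta-V-x\cdot\nabla V)f,f\rangle\gtrsim\|\nabla f\|_2^2$ is the virial estimate that, as in \cite{BM}, excludes embedded positive eigenvalues, and the limiting absorption principle furnished by the uniform resolvent estimates of \cite{BM} rules out singular continuous spectrum. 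Thus $\sigma(H)=\sigma_{\ac}(H)=[0,\infty)$ and $ZP_{\ac}(H)=Z$.

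The analytic core is the verification of the two smoothness conditions via Theorem~\ref{2311152150}, and both are precisely the uniform sandwiched resolvent estimates established in \cite{BM}. For $Y=|V|^{1/2}$ relative to $H_0=-\Delta$, the hypothesis $|x|V\in M^{d,2\sigma}$ with $\sigma\in(\tfrac{d-1}{2},\tfrac d2)$ is exactly the condition under which \cite{BM} proves $\sup_{z\notin\R}\||V|^{1/2}(-\Delta-z)^{-1}|V|^{1/2}\|_{\mathcal B(L^2)}<\infty$ — the range of $\sigma$ being what lets the Morrey--Campanato resolvent estimates absorb the critical weight $|x|^{-1}$ — so $Y$ is $H_0$-smooth. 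For $Z$ relative to $H$, the pointwise identity $|Ze^{-itH}u|=|V|^{1/2}|e^{-itH}u|$ shows $Z$ is $H$-smooth if and only if $|V|^{1/2}$ is, i.e.\ if and only if $\sup_{z\notin\R}\||V|^{1/2}(H-z)^{-1}|V|^{1/2}\|_{\mathcal B(L^2)}<\infty$. I expect this last bound to be the main obstacle: it is obtained in \cite{BM} from the resolvent identity, the $H_0$-smoothing just recalled, and --- crucially --- the two positivity/repulsivity hypotheses together with $x\cdot\nabla V\in M^{d/2,\sigma}$, which give uniform invertibility of the associated Birman--Schwinger operator up to the spectrum; the borderline weight $|V|^{1/2}\sim|x|^{-1}$ is what makes this delicate, so the only genuinely new step is to check that \cite{BM} supplies smoothing in exactly the form that Theorem~\ref{2311022111} demands.

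With $Y$ shown $H_0$-smooth and $Z=ZP_{\ac}(H)$ shown $H$-smooth, I would feed in the free orthonormal Strichartz estimates in the ranges $(S1)$ and $(S2)$, collected in Section~\ref{2311022103} and going back to Frank--Lewin--Lieb--Seiringer \cite{FLLS} and Frank--Sabin \cite{FS}, and apply Theorem~\ref{2311022111} to conclude. The only exponents in $(S1)$ not reached this way are the degenerate endpoint $q=1$, which forces $p=\infty$ and $\alpha=1$ (outside the scope of Theorem~\ref{2311022111}) and for which the estimate is immediate from unitarity of $e^{-itH}$ on $L^2$ together with the triangle inequality; in the non-admissible range $(S2)$ the argument covers the triples with $\alpha>1$.
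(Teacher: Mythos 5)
Your proposal is correct and follows essentially the same route as the paper: choose $Y,Z$ as (signed) copies of $|V|^{1/2}$ so that $V=Y^*Z$, verify that $Y$ is $H_0$-smooth and $Z=ZP_{\ac}(H)$ is $H$-smooth by appealing to the uniform sandwiched resolvent estimates of Bouclet--Mizutani (the paper cites their Theorem~2.5 directly for exactly these two facts), note $P_{\ac}(H)=I$, and feed the free orthonormal Strichartz estimates of (S1)/(S2) into Theorem~\ref{2311022111}. The only differences are cosmetic: the paper puts the $\sgn V$ on $Y$ rather than on $Z$ (immaterial, as you observe via the pointwise identity), records $P_{\ac}(H)=I$ in a separate remark rather than within the proof, and does not spell out the trivial $\alpha=1$ endpoint that falls outside the hypothesis $\alpha\in(1,\infty)$ of Theorem~\ref{2311022111}; your explicit treatment of that corner case is a harmless addition.
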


\begin{remark}

Under the conditions in Corollary \ref{ooo}, $P_{ac}(H)=I$ holds. Thus we omit $P_{ac}(H)$ in the estimate.
Typical examples of potential $V$ in Corollary \ref{ooo} are inverse square potentials: $V= c|x|^{-2} ,c>-(d-2)^2/4$.
Note that $(d-2)^2/4$ is the best constant of the Hardy inequality. Also see \cite{BM} for more details about properties of $H$ in Corollary \ref{ooo}.
\end{remark}

\begin{remark}

In Corollary \ref{ooo}, the critical inverse square potential $V=-(d-2)^2|x|^{-2}/4$ is not included. For  the usual Strichartz estimates for $H=-\Delta -(d-2)^2|x|^{-2}/4$, the non-endpoint Strichartz estimates are proved in \cite{Su}. However, in \cite{M2}, Mizutani proved that the endpoint Strichartz estimates does not hold. We prove the following orthonormal Strichartz estimates for $H$ when $(p, q)$ is close to $(\infty, 1)$ .

\end{remark}

\begin{thm}[Critical inverse square potential]\label{1113026}

Let $H=-\Delta -(d-2)^2|x|^{-2}/4$ and $d \ge 3$. $P_{\rad}$ denotes the orthogonal projection onto the subspace of radial functions: $P_{\rad}: L^2(\R^d) \rightarrow L^2_{\rad}(\R^d), f \mapsto \frac{1}{|S^{d-1}|} \int_{S^{d-1}} f(|x|\theta) d\sigma(\theta)$. Then we have the following estimates:

\begin{enumerate}
\item[(1)] For $p, q, \alpha$ satisfying $(S1)$ or $(S2)$,  
\[ \left\| \sum_{n=0}^ \infty{\nu_n|e^{-itH}P_{\rad}^{\perp} f_n|^2} \right\|_{L^p_t L^q_x} \lesssim \| \nu_n\|_{\ell^\alpha}\]
holds. Here $P_{\rad}^{\perp} = I - P_{\rad}$.
\item[(2)] For $p, q, \alpha$ satisfying $d=3, 1\le q < \frac{3d}{3d-4}$ and $(S1)$, 
\[ \left\| \sum_{n=0}^ \infty{\nu_n|e^{-itH}P_{\rad} f_n|^2} \right\|_{L^p_t L^q_x} \lesssim \| \nu_n\|_{\ell^\alpha}\]
holds.
\end{enumerate}
Hence
\[ \left\| \sum_{n=0}^ \infty{\nu_n|e^{-itH} f_n|^2} \right\|_{L^p_t L^q_x} \lesssim \| \nu_n\|_{\ell^\alpha}\]
holds for $p, q, \alpha$ satisfying the above condition in $(2)$.
\end{thm}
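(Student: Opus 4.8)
The plan is to split the radial and non-radial sectors, using that $P_{\rad}$ commutes with $-\Delta$, with $H$, and with multiplication by every radial function, so that $\mathcal{H}':=P_{\rad}^{\perp}L^2(\mathbb{R}^d)$ is invariant under all the relevant operators. For part (1) I would apply Theorem~\ref{2311022111} on $\mathcal{H}'$ with $H_0=-\Delta|_{\mathcal{H}'}$, $H'=H|_{\mathcal{H}'}$, $V=-\tfrac{(d-2)^2}{4}|x|^{-2}$, and the factorization $V=Y^*Z$ with $Y=\tfrac{d-2}{2}|x|^{-1}$, $Z=-\tfrac{d-2}{2}|x|^{-1}$ (all restricted to $\mathcal{H}'$); note that $x\cdot\nabla V=\tfrac{(d-2)^2}{2}|x|^{-2}$, so $-V-x\cdot\nabla V=V$ and the two form conditions appearing in Corollary~\ref{ooo} coincide. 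Three hypotheses have to be verified. The free orthonormal Strichartz estimate for $e^{it\Delta}|_{\mathcal{H}'}$ is inherited from the one on $L^2(\mathbb{R}^d)$ simply by restricting to an invariant subspace, for all $(p,q,\alpha)$ in the ranges (S1),(S2); $Y$ is $(-\Delta)$-smooth by the classical Kato smoothing estimate $\||x|^{-1}e^{it\Delta}u\|_{L^2_{t,x}}\lesssim\|u\|_2$ valid for $d\ge3$, again inherited by restriction; and $P_{\ac}(H')=I$ because $H$ is nonnegative (Hardy) with purely absolutely continuous spectrum $[0,\infty)$ and no eigenvalues.

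The one substantial point in (1) is that $ZP_{\ac}(H')=Z$ is $H'$-smooth, equivalently (Theorem~\ref{2311152150}) $\sup_{z\in\mathbb{C}\backslash\mathbb{R}}\||x|^{-1}(H'-z)^{-1}|x|^{-1}\|_{\mathcal{B}(\mathcal{H}')}\lesssim1$. On $\mathcal{H}'$ the improved Hardy inequality $\|\nabla f\|_2^2\ge\big(\tfrac{(d-2)^2}{4}+(d-1)\big)\||x|^{-1}f\|_2^2$ gives both $\langle Hf,f\rangle\gtrsim\|\nabla f\|_2^2$ and $\langle Hf,f\rangle\gtrsim\||x|^{-1}f\|_2^2$, which are exactly the positivity properties that drive the Kato smoothing estimates in Bouclet--Mizutani's treatment of inverse-square potentials; running that argument on the invariant subspace $\mathcal{H}'$ (or, more concretely, decomposing $\mathcal{H}'=\bigoplus_{\ell\ge1}\mathcal{H}_\ell$ into spherical harmonics and invoking the uniform-in-$\ell$ resolvent bounds for the Bessel operators $-\partial_r^2+\big(\ell(\ell+d-2)-\tfrac14\big)r^{-2}$, whose coefficients stay $\ge d-\tfrac54>-\tfrac14$) yields the bound. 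Theorem~\ref{2311022111} then gives the estimate for $\{f_n\}$ orthonormal in $\mathcal{H}'$ (the corner $q=1$ being trivial by $L^2$-conservation), and the duality principle of Section~\ref{2311022103} upgrades it to arbitrary orthonormal systems in $L^2(\mathbb{R}^d)$ because $P_{\rad}^{\perp}\gamma P_{\rad}^{\perp}\in\mathfrak{S}^{\alpha}(\mathcal{H}')$ with $\|P_{\rad}^{\perp}\gamma P_{\rad}^{\perp}\|_{\mathfrak{S}^{\alpha}}\le\|\gamma\|_{\mathfrak{S}^{\alpha}}$.

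For part (2) I would use the unitary reduction of the radial sector. Conjugating $H|_{L^2_{\rad}(\mathbb{R}^d)}$ by $f\mapsto|S^{d-1}|^{1/2}r^{(d-1)/2}f$ produces $L:=-\partial_r^2-\tfrac14 r^{-2}$ on $L^2((0,\infty))$, diagonalized by the order-zero Hankel transform, so $e^{-itL}$ has the explicit kernel $\sim t^{-1}\sqrt{rs}\,e^{i(r^2+s^2)/4t}J_0(rs/2t)$; equivalently, $e^{-itH}P_{\rad}$ on $L^2_{\rad}(\mathbb{R}^d)$ is unitarily equivalent, via $f\mapsto c_d|x|^{(d-2)/2}f$, to $e^{it\Delta_{\mathbb{R}^2}}$ on $L^2_{\rad}(\mathbb{R}^2)$. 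Computing the Jacobian of this equivalence, the claimed estimate becomes a weighted orthonormal Strichartz estimate in $\mathbb{R}^2$ for radial data, $\big\|\sum_n\nu_n|\,|y|^{a}e^{it\Delta_{\mathbb{R}^2}}g_n|^2\big\|_{L^p_tL^q_y}\lesssim\|\nu_n\|_{\ell^{\alpha}}$ with $a=\tfrac{(d-2)(1-q)}{2q}<0$, which by a scaling computation lives precisely on the line $2/p+d/q=d$ with $\alpha=\tfrac{2q}{q+1}$. I would prove this by feeding the elementary bound $|J_0(x)|\lesssim\min(1,x^{-1/2})$ — which yields a weighted dispersive-type estimate for $M_{|y|^{a}}e^{it\Delta}M_{|y|^{a}}$ — into the Schatten-space complex interpolation scheme of Frank--Sabin, i.e. into the duality principle and Christ--Kiselev type lemma of Section~\ref{2311022103}. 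The main obstacle is that the weight carries a negative power, so the $q=\infty$ (Schatten $\mathcal{B}$) endpoint of the interpolation is unavailable; interpolating from the trace-class endpoint as far as it will go is exactly what forces the restriction $q<\tfrac{3d}{3d-4}$, and, combined with (S1), this is where the bookkeeping closes cleanly only for $d=3$. Transferring back by the inverse unitary gives part (2).

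The final assertion follows by combining (1) and (2). For an orthonormal system $\{f_n\}$ in $L^2(\mathbb{R}^d)$ and $\gamma=\sum_n\nu_n|f_n\rangle\langle f_n|$, the Cauchy--Schwarz inequality $|\langle f_n,v\rangle|^2\le 2|\langle f_n,P_{\rad}v\rangle|^2+2|\langle f_n,P_{\rad}^{\perp}v\rangle|^2$ gives the operator inequality $\gamma\le 2P_{\rad}\gamma P_{\rad}+2P_{\rad}^{\perp}\gamma P_{\rad}^{\perp}$, hence, taking densities of the (positive) conjugated operators, $\rho(e^{-itH}\gamma e^{itH})\le 2\rho(e^{-itH}P_{\rad}\gamma P_{\rad}e^{itH})+2\rho(e^{-itH}P_{\rad}^{\perp}\gamma P_{\rad}^{\perp}e^{itH})$. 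Applying part (2) to the first term and part (1) to the second (the condition in (2) being contained in (S1)), together with $\|P_{\rad}^{(\perp)}\gamma P_{\rad}^{(\perp)}\|_{\mathfrak{S}^{\alpha}}\le\|\gamma\|_{\mathfrak{S}^{\alpha}}$ and the duality principle, completes the proof.
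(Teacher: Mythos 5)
Your treatment of part (1) is essentially the paper's: you both identify the improved Hardy inequality on the non-radial sector as the source of the Kato smoothing estimate $\sup_z\||x|^{-1}(H-z)^{-1}|x|^{-1}P_{\rad}^\perp\|\lesssim 1$ and then run the abstract perturbation argument with $P_{\rad}^\perp$ in place of $P_{\ac}(H)$; the paper packages this as Assumption~\ref{1113054}/Theorem~\ref{1113059}, citing Proposition~\ref{1113101} (from \cite{M2}) for $H$-supersmoothness of $|V|^{1/2}P_{\rad}^\perp$ and Proposition 5.3 of \cite{BM} for $(-\Delta)$-smoothness of $|V|^{1/2}\sgn V$, while you work on the invariant subspace $\mathcal{H}'=P_{\rad}^\perp L^2$. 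The two phrasings are interchangeable.

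Part (2) is where your route genuinely diverges from the paper's, and where there is a gap. After the shared ground-state reduction $e^{-itH}P_{\rad}=U^*e^{it\Delta_{\R^2}}UP_{\rad}$, you propose to handle the Jacobian weight $|y|^{a}$ ($a<0$) by a weighted dispersive estimate for $M_{|y|^{a}}e^{it\Delta_{\R^2}}M_{|y|^{a}}$ fed into the Frank--Sabin complex-interpolation scheme. But with $a<0$ the $L^1\to L^\infty$ dispersive bound fails: feeding $|J_0(x)|\lesssim\min(1,x^{-1/2})$ into the radial kernel only gives $|K_t(r,s)|\lesssim t^{-1/2}(rs)^{a}$, which is unbounded near $r=s=0$, so the Schatten $\mathfrak{S}^1$ (dispersive) endpoint that anchors the analytic family is gone; ``interpolating from the trace-class endpoint as far as it will go'' is not a mechanism, since complex interpolation needs two endpoints. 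The paper avoids the weighted dispersive estimate entirely: it proves an \emph{unweighted} Lorentz-refined orthonormal Strichartz bound (Lemma~\ref{1113652}) $\left\|\sum\nu_n|e^{it\Delta_{\R^2}}UP_{\rad}f_n|^2\right\|_{L^p_tL^{q,p}_x}\lesssim\|\nu\|_{\ell^{2q/(q+1)}}$ by \emph{real} interpolation in $(p,q)$ (using $(\ell^{p_0},\ell^{p_1})_{\theta,q}=\ell^{p,q}$), and then absorbs the power $r^{(d-2)/q-(d-2)}$ by H\"older's inequality against its $L^{r,\infty}(\R^2)$ norm with $1/r=\frac{d-2}{2}(1-1/q)$. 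The constraint $q<\frac{3d}{3d-4}$ arises because the H\"older-conjugate exponent $\tilde q=p'$ must satisfy $\tilde q<3$ so that Lemma~\ref{1113652} applies at the $\R^2$-admissible pair $(p,\tilde q)$; the Lorentz embedding $L^{\tilde q,q}\hookrightarrow L^{\tilde q,p}$ needs $p\le q$; and only at $d=3$ do these windows overlap, the remaining $q$'s being filled in by interpolation with the trivial $L^1_tL^\infty_x$ estimate. You would need to replace your weighted-dispersive step with an argument of this kind.

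One small point on the final combination: the operator inequality $\gamma\le 2P_{\rad}\gamma P_{\rad}+2P_{\rad}^\perp\gamma P_{\rad}^\perp$ requires $\gamma\ge0$. For general self-adjoint $\gamma$ either split $\gamma=\gamma_+-\gamma_-$, or more simply invoke the dual formulation of Lemma~\ref{1113719} and the triangle inequality for the Schatten norm applied to $fe^{-itH}=fe^{-itH}P_{\rad}+fe^{-itH}P_{\rad}^\perp$.
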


\begin{remark}

We have seen the orthonormal Strichartz estimates for very short range potentials or inverse square type potentials.  It is also interesting to consider the case when $V$ is a slowly decaying potential (i.e.\  $|V(x)|\lesssim \langle x \rangle^{-\mu} $ for some $\mu \in(0,2)$). In that case the usual Strichartz estimates are known to hold under the condition that $V$ is $C^{\infty}$, $|\partial^{\alpha} V(x)| \lesssim \langle x \rangle^{-\mu-|\alpha|}$, $V(x) \gtrsim \langle x \rangle^{-\mu}$ and $-x \cdot \nabla V(x) \gtrsim \langle x \rangle^{-\mu}$ for large $x$ (\cite{M1}). Although there is no positive result concerning the orthonormal Strichartz estimates for slowly decaying potentials, we have the following counterexample, which was found by Goldberg-Vega-Visciglia \cite{GVV} for the usual Strichartz estimates:
Assume $V\in C^3(\mathbb{R}^d  \backslash \{0\} ;\mathbb{R})$, $V(x)=|x|^{-\mu}V(\frac{x}{|x|})$ for some $\mu\in(0,2)$. If $\min_{|x|=1}V(x)=0$ and the  minimum point is non-degenerate, then the orthonormal Strichartz estimates for $H = -\Delta +V$ fail for all $p ,q,\alpha$ satisfying (S1) or (S2) except the trivial case: $(p, q)=(\infty, 1)$.
Note that $\sigma(H) = \sigma_{\ac}(H) = [0, \infty)$ holds in the above example since $V$ is a repulsive potential.
\end{remark}

Next we consider the magnetic Schrödinger operator.
\begin{thm}\label{11102255}
Assume $d \ge3$, $A:\R^d \rightarrow \R^d, V:\R^d \rightarrow \R$ satisfy $|A(x)| + |\langle x \rangle V(x)| \lesssim \langle x \rangle^{-1-\epsilon}$, $\langle x \rangle^{1+\epsilon'}A(x) \in \dot{W}^{1/2, 2d}$, $A \in C^0(\R^d)$ for some $0<\epsilon'<\epsilon$. Let $H = -\Delta +A(x)\cdot D + D\cdot A(x) + V(x)$ or $H = (D +A(x))^2 +V(x)$. Here $D=-i\nabla$. If zero is neither an eigenvalue nor a resonance of $H$, 
\[ \left\| \sum_{n=0}^ \infty{\nu_n|e^{-itH}P_{\ac}(H) f_n|^2} \right\|_{L^p_t L^q_x} \lesssim \| \nu_n\|_{\ell^\alpha}\]
holds for $p, q, \alpha$ satisfying $(S1)$ or $(S2)$.

\end{thm}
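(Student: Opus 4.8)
The plan is to regard $H$ as a perturbation of $H_0:=-\Delta$ and to invoke the abstract perturbation theorem, Theorem~\ref{2311022111}: the free orthonormal Strichartz estimates for $e^{it\Delta}$ in the ranges $(S1)$ and $(S2)$ are at our disposal (collected in Section~\ref{2311022103}, going back to \cite{FLLS} for $(S1)$ and to \cite{FS} for $(S2)$), so everything reduces to exhibiting a form factorization $H-H_0=Y^*Z$ together with the two Kato-smoothing bounds ``$Y$ is $H_0$-smooth'' and ``$ZP_{\ac}(H)$ is $H$-smooth''. First, the two forms of $H$ coincide up to an admissible change of potential: expanding $(D+A)^2=-\Delta+A\cdot D+D\cdot A+|A|^2$ shows that $(D+A)^2+V=-\Delta+A\cdot D+D\cdot A+\widetilde V$ with $\widetilde V:=V+|A|^2$, and $|A(x)|^2\lesssim\langle x\rangle^{-2-2\epsilon}$ forces $|\langle x\rangle\widetilde V(x)|\lesssim\langle x\rangle^{-1-\epsilon}$, the same bound as $V$. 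Hence it is enough to treat $H=-\Delta+W$, $W:=A\cdot D+D\cdot A+V$; the decay of $A,V$ and $A\in C^0$ make $W$ infinitesimally form-bounded relative to $-\Delta$, so $H$ is self-adjoint with form domain $H^1(\R^d)$ and the form identity required in Theorem~\ref{2311022111} is meaningful.

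Next I would factor $W=Y^*Z$ in the form sense. Set $B_0:=\langle x\rangle^{-(1+\epsilon')/2}\langle D\rangle^{1/2}$, where $\epsilon'<\epsilon$ is the exponent in the hypothesis on $A$, and consider the ``middle operator''
\[ K:=\langle x\rangle^{(1+\epsilon')/2}\langle D\rangle^{-1/2}\,W\,\langle D\rangle^{-1/2}\langle x\rangle^{(1+\epsilon')/2}. \]
One checks that $K$ extends to a bounded operator on $\mathcal H=L^2(\R^d)$: for the electric part, commuting the weights inward reduces matters to $|\langle x\rangle^{1+\epsilon'}V|\lesssim\langle x\rangle^{-1-(\epsilon-\epsilon')}\in L^\infty$ plus harmless lower-order commutators; for the first-order part, writing $A=\langle x\rangle^{-1-\epsilon'}B$ with $B:=\langle x\rangle^{1+\epsilon'}A\in\dot W^{1/2,2d}$, the half-derivatives are absorbed by commutator estimates such as $\|[\langle D\rangle^{1/2},B]\|_{\mathcal B(\mathcal H)}\lesssim\|B\|_{\dot W^{1/2,2d}}$ (fractional Leibniz / Coifman--Meyer, using the critical embedding $\dot W^{1/2,2d}(\R^d)\hookrightarrow\mathrm{BMO}$) together with the $L^2$-boundedness of the Riesz transforms $D/|D|$. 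Factoring $K=K_1^*K_2$ with $K_1,K_2\in\mathcal B(\mathcal H)$ then gives $W=B_0^*KB_0=Y^*Z$ with $Y:=K_1B_0$ and $Z:=K_2B_0$, both operators on $\mathcal H$ with common domain $H^{1/2}(\R^d)\supset D(H)\cup D(H_0)$.

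It remains to verify the two smoothing hypotheses. Since $(1+\epsilon')/2>1/2$, the Kato--Yajima local smoothing estimate for $-\Delta$ gives $\sup_{z\notin\R}\|B_0(H_0-z)^{-1}B_0^*\|_{\mathcal B(\mathcal H)}<\infty$, i.e.\ $B_0$ is $H_0$-smooth; and since $H_0$-smoothness is preserved under left multiplication by a bounded operator (inspect condition (2) of Theorem~\ref{2311152150}), $Y=K_1B_0$ is $H_0$-smooth. The decisive point is that $ZP_{\ac}(H)=K_2B_0P_{\ac}(H)$ is $H$-smooth, for which it suffices that $B_0P_{\ac}(H)$ is $H$-smooth, i.e.\ that the weighted smoothing estimate $\|\langle x\rangle^{-(1+\epsilon')/2}\langle D\rangle^{1/2}e^{-itH}P_{\ac}(H)u\|_{L^2_tL^2_x}\lesssim\|u\|_2$ holds. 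This is exactly the smoothing estimate underlying the Strichartz estimates of \cite{EGS} for the magnetic Schrödinger operator, valid under precisely the present hypotheses on $(A,V)$ and the assumption that $0$ is neither an eigenvalue nor a resonance of $H$; the no-resonance condition is what removes the threshold singularity of the resolvent. (One may also obtain it by bootstrapping from the free weighted resolvent bound: feed the latter and the $L^2$-boundedness of $K$ into the resolvent identity and resolve the resulting operator equation by the Fredholm alternative, the threshold obstruction being excluded by the no-resonance hypothesis.)

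With the factorization and the two smoothing bounds in place, Theorem~\ref{2311022111} applies and delivers the asserted estimate for $e^{-itH}P_{\ac}(H)$ for all $(p,q,\alpha)$ satisfying $(S1)$ or $(S2)$. I expect the genuine difficulty to be exactly the point just flagged --- transferring the threshold-level smoothing from $-\Delta$ to $H$. All the analytic substance of the hypotheses, namely the $\dot W^{1/2,2d}$ regularity of $\langle x\rangle^{1+\epsilon'}A$, the decay rates $\langle x\rangle^{-1-\epsilon}$ and $\langle x\rangle^{-2-\epsilon}$, and above all the absence of a zero eigenvalue or resonance, is consumed there, and I would invoke the resolvent bounds of \cite{EGS} rather than reprove them.
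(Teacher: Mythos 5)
Your overall plan is correct --- reduce to Theorem~\ref{2311022111}, factor the perturbation, and feed in the weighted smoothing estimates of \cite{EGS} --- and this is indeed what the paper does. But the specific unified factorization you propose has a gap that is precisely where you flag the ``genuine difficulty,'' and it does not go away.

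The issue is your single auxiliary operator $B_0=\langle x\rangle^{-(1+\epsilon')/2}\langle D\rangle^{1/2}$. You claim that since $(1+\epsilon')/2>1/2$, the Kato--Yajima estimate makes $B_0$ $H_0$-smooth, and that the corresponding $H$-smoothness ``is exactly the smoothing estimate underlying the Strichartz estimates of \cite{EGS}.'' Neither is true. The \cite{EGS} estimates quoted in Proposition~\ref{11122214} are two \emph{different} bounds: the homogeneous $|D|^{1/2}$ with weight exponent $\sigma>1/2$ in \eqref{11162226}, and the \emph{inhomogeneous} $\langle D\rangle^{1/2}$ with weight exponent $\sigma>1$ in \eqref{11162227}. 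Your $B_0$ pairs $\langle D\rangle^{1/2}$ with $\sigma=(1+\epsilon')/2\in(1/2,1)$, which is covered by neither. And this is not a technicality: since $\langle D\rangle^{1/2}\ge1$ does not vanish at $\xi=0$, the uniform resolvent bound $\sup_{z}\|B_0(H_0-z)^{-1}B_0^*\|<\infty$ fails as $z\to0$. Indeed, at threshold one is asking for boundedness of $\langle x\rangle^{-\sigma}(-\Delta)^{-1}\langle x\rangle^{-\sigma}$ on $L^2(\R^d)$, and a scaling test (take $u$ spread over a ball of radius $R\to\infty$: left side $\sim R^{-2\sigma}$, right side $\sim R^{-2}$) shows this requires $\sigma\ge1$. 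So your $B_0$ is not $H_0$-smooth, the middle-operator trick does not get off the ground, and the analogous $H$-smoothness of $B_0P_{\ac}(H)$ is not supplied by \cite{EGS} either.

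The paper's proof sidesteps this by using a \emph{split} factorization $H=-\Delta+\sum_{j=1}^{3}Y_j^*Z_j$ in which the pairings are chosen to match the two available smoothing bounds. The electric part $V$, with decay $\langle x\rangle^{-2-\epsilon}$, is split symmetrically as $|V|^{1/2}\cdot|V|^{1/2}$, giving each factor decay $\langle x\rangle^{-1-\epsilon/2}$ with exponent $>1$, which is what $\langle D\rangle^{1/2}$ needs (estimate \eqref{11162227}). The magnetic part $A\cdot D+D\cdot A$ is split asymmetrically, with the factors built from $|D|^{1/2}w$ and $Aw^{-1}D|D|^{-1/2}$ where $w=\langle x\rangle^{-\tau}$, $\tau\in(1/2,1/2+\epsilon')$: here the \emph{homogeneous} $|D|^{1/2}$ vanishes at $\xi=0$ and removes the threshold obstruction, so only $\tau>1/2$ is needed (estimate \eqref{11162226}), and the residual piece $Aw^{-1}D|D|^{-1/2}$ is handled by a Riesz transform, Lemma~6.2 of \cite{EGS}, the fractional Leibniz rule, and the $\dot W^{1/2,2d}$ hypothesis on $\langle x\rangle^{1+\epsilon'}A$. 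The decay hypotheses on $A$ and $V$ are exactly calibrated to this two-pronged splitting, and merging them into one $B_0$ throws away the extra decay of $V$ that makes the inhomogeneous estimate applicable. To repair your argument you would either have to switch to the homogeneous $|D|^{1/2}$ (and then show separately that the zeroth-order $V$-part of your middle operator $K$, which now involves $|D|^{-1/2}V|D|^{-1/2}$, is bounded and that the resulting $|D|^{1/2}$-weighted pieces are smooth), or reproduce the paper's splitting; either way you end up needing two factorizations adapted to the two resolvent bounds rather than one.
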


\begin{remark}
Concerning the usual Strichartz estimates for the magnetic Schrödinger operator, \cite{EGS} proved the Strichartz estimates except endpoint under the same assumptions as above. The endpoint Strichartz estimates are considered in \cite{DFVV} but their assumptions are different from ours.

\end{remark}

When potential $V$ is time-dependent, the orthonormal Strichartz estimates for $H=-\Delta+V(t, x)$ are considered in \cite{Ha}. By using the results in this paper, it is also possible to consider the Hamiltonian like $\tilde{H}(t) = H+V(t, x)$. Here $H=(D+A(x))^2 +V(x)$ denotes the time-independent Hamiltonian. Actually, we can prove the following theorem. The proof is done by just substituting $H$ for $-\Delta$ in the argument of \cite{Ha}, so we omit the details.

\begin{thm}
Let $\tilde{H}(t) = H+V(t, x)$ and $H=(D+A(x))^2 +V(x)$. We assume $H$ is as in 
Corollaries~\ref{11161517}, \ref{ooo}, Theorems~\ref{1113026} or \ref{11102255} and $V(t, x) \in L^{\mu}_t L^{\nu}_x$ for $\mu \in [1, \infty ),\  \nu \in[1, \infty],\  2/\mu + d/\nu =2$. Furthermore we assume the spectrum of $H$ is absolutely continuous. Then the orthonormal Strichartz estimates hold for the propagator $U(t)$ of\ \ $i\partial_t u = \tilde{H}u$ as follows:
\[ \left\| \sum_{n=0}^ \infty{\nu_n| U(t)f_n|^2} \right\|_{L^p_t L^q_x} \lesssim \| \nu_n\|_{\ell^\alpha}.\]
Here $p, q, \alpha$ are from Corollaries~\ref{11161517}, \ref{ooo}, Theorems~\ref{1113026} or \ref{11102255}, respectively. 
\end{thm}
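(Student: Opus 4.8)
The proof is, as the paper indicates, a transcription of the argument of \cite{Ha} for $H_0=-\Delta$ with the free propagator $e^{-it\Delta}$ replaced throughout by $e^{-itH}$. The only facts about $H$ that this argument uses are: (A) the orthonormal Strichartz estimates for $e^{-itH}$ in the range of $(p,q,\alpha)$ appropriate to $H$, supplied by Corollary~\ref{11161517}, Corollary~\ref{ooo}, Theorem~\ref{1113026} or Theorem~\ref{11102255}, together with the standing hypothesis that $\sigma(H)$ is absolutely continuous (so $P_{\ac}(H)=I$); and (B) the usual homogeneous Strichartz estimates $\|e^{-itH}f\|_{L^r_tL^s_x}\lesssim\|f\|_2$ for admissible $(r,s)$, which are known for all the operators under consideration by \cite{RS}, \cite{BM}, \cite{EGS}; by the Christ--Kiselev lemma of Section~\ref{2311022103}, (B) also yields the inhomogeneous (doubly admissible) Strichartz estimates for $e^{-itH}$. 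It is convenient to work in density-operator language: setting $\gamma_0=\sum_n\nu_n\,|f_n\rangle\langle f_n|$, so that $\|\gamma_0\|_{\mathfrak{S}^\alpha}=\|\nu_n\|_{\ell^\alpha}$ because $\{f_n\}$ is orthonormal, and $\gamma(t)=U(t)\gamma_0 U(t)^*$, the assertion is $\|\rho_{\gamma(t)}\|_{L^p_tL^q_x}\lesssim\|\gamma_0\|_{\mathfrak{S}^\alpha}$, and (A) is precisely this statement with $U(t)$ replaced by $e^{-itH}$.

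First I would construct the propagator $U(t)$ of $i\partial_t u=\tilde H(t)u$, $\tilde H(t)=H+V(t,\cdot)$, by solving the Duhamel equation $U(t)f=e^{-itH}f-i\int_0^te^{-i(t-s)H}V(s,\cdot)U(s)f\,ds$ by a fixed-point argument on a short interval and then composing. The scaling $2/\mu+d/\nu=2$ is exactly what makes this close: by H\"older's inequality, multiplication by $V(s,\cdot)$ carries an admissible Strichartz space $L^{r_1}_tL^{s_1}_x$ into the dual Strichartz space $L^{r_0'}_tL^{s_0'}_x$ of another admissible pair, so that, with the inhomogeneous estimate from (B), the map $F\mapsto-i\int_0^te^{-i(t-s)H}V(s)F(s)\,ds$ has operator norm $\lesssim\|V\|_{L^\mu_t(I)L^\nu_x}$ on the Strichartz space over $I$; since $\mu<\infty$ this is $<1$ for $|I|$ small, and the same fixed point gives the usual Strichartz estimates for $U(t)$. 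Then $\gamma(t)=U(t)\gamma_0U(t)^*$ solves $i\partial_t\gamma=[H,\gamma]+[V(t,\cdot),\gamma]$, hence
\[ \gamma(t)=e^{-itH}\gamma_0e^{itH}-i\int_0^te^{-i(t-s)H}\,[V(s,\cdot),\gamma(s)]\,e^{i(t-s)H}\,ds . \]
Passing to densities and taking the $L^p_t(I)L^q_x$ norm, the first term is bounded by $\|\gamma_0\|_{\mathfrak{S}^\alpha}$ via (A), and the second by the inhomogeneous orthonormal Strichartz estimate for $e^{-itH}$ — obtained from (A), (B) and Christ--Kiselev by the same duality/$TT^*$ scheme as in the free case — applied to the forcing $[V(s,\cdot),\gamma(s)]$, the relevant norm of which is controlled pointwise in $s$ by $\|V(s,\cdot)\|_{L^\nu_x}$ times a Schatten/Strichartz norm of $\gamma(s)$ and then integrated in $s$ against $V\in L^\mu_tL^\nu_x$, using $2/\mu+d/\nu=2$ once more. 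On a short interval this gives $\|\rho_\gamma\|_{L^p_t(I)L^q_x}\lesssim\|\gamma_0\|_{\mathfrak{S}^\alpha}+\|V\|_{L^\mu_t(I)L^\nu_x}\cdot(\text{iteration norm of }\gamma)$, which closes by smallness.

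To go from a short interval to all of $\R$ I would split $\R$ into finitely many intervals on each of which $\|V\|_{L^\mu_t L^\nu_x}$ is small — possible exactly because $\mu<\infty$ — apply the previous step on each, and patch the resulting estimates using the cocycle relation $\gamma(t)=U(t,\tau)\gamma(\tau)U(t,\tau)^*$ and the fact that conjugation by the unitary $U(t,\tau)$ preserves the $\mathfrak{S}^\alpha$ norm, so that $\|\gamma(\tau)\|_{\mathfrak{S}^\alpha}=\|\gamma_0\|_{\mathfrak{S}^\alpha}$ at every breakpoint $\tau$; adding the finitely many pieces gives the global bound. A Christ--Kiselev type lemma (Section~\ref{2311022103}) is used wherever the time-ordering $0\le s\le t$ has to be removed.

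I expect the only genuinely non-routine point to be the adaptation of the inhomogeneous orthonormal Strichartz estimate to $e^{-itH}$: one needs Strichartz estimates for $e^{-itH}$ not merely for the single pair $(2p,2q)$ but for all admissible pairs that appear in the H\"older/$TT^*$ arithmetic forced by $2/\mu+d/\nu=2$ — including pairs approaching the energy endpoint and the dual of the Strichartz endpoint — and this is exactly what \cite{RS}, \cite{BM}, \cite{EGS} (and the Christ--Kiselev lemma) deliver in the range allowed by (S1)/(S2). Granting this estimate for $e^{-itH}$, the rest of the proof is the word-for-word transcription of \cite{Ha}, and I would only indicate the steps.
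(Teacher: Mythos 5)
Your proposal is correct and follows the same route the paper indicates: the paper's entire "proof" is the remark that one substitutes $H$ for $-\Delta$ in the argument of \cite{Ha}, and your outline is precisely a fleshed-out version of that substitution, correctly identifying the two inputs needed (the orthonormal Strichartz estimates for $e^{-itH}$ supplied by the preceding results, together with the usual homogeneous and inhomogeneous Strichartz estimates for $e^{-itH}$ for admissible pairs) and the Duhamel/Christ--Kiselev machinery to assemble them.
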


Next we give a refinement of the usual Strichartz estimates in terms of the Besov spaces.

\begin{corollary}\label{jjj}

Let $p ,q,\alpha$ as in (S1) and $d \ge 2$. 
Let $H$ denote the Schr\"odinger operator in Corollaries \ref{11161517} or \ref{ooo}. 
Furthermore, we assume $V\ge0$. Then we have
\[ \|e^{-itH}P_{\ac}(H)u\|_{L^{2p}_t L^{2q}_x} \lesssim \|u\|_{\dot{B}^0_{2,\frac{4q}{q+1}}} \]
Here $\dot{B}^0_{p,q}$ denotes the homogeneous Besov spaces of order 0. Its norm is defined by $\| u\|_{\dot{B}^0_{p,q}}= \| \phi_j(D)u\|_{\ell^q_j L^p_x}$, $\phi_j(D) = \mathcal{F}^* \phi_j(\xi)\mathcal{F}$, 
where $\phi_j(\xi)$ is a homogeneous Littlewood-Paley decomposition.
\end{corollary}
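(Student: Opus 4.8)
The plan is to deduce this refinement directly from the orthonormal Strichartz estimate for $H$ already established in Corollaries~\ref{11161517} and \ref{ooo}, by replacing the Littlewood--Paley projections $\varphi_j(D)$ with the ones adapted to $H$, and then comparing the resulting $H$-adapted Besov norm with the standard one; this last comparison is where the hypothesis $V\ge 0$ enters. Throughout write $\alpha=\tfrac{2q}{q+1}$, so that the target exponent is $\tfrac{4q}{q+1}=2\alpha$. The endpoint $(p,q)=(\infty,1)$ is trivial, since there $\dot B^0_{2,2}=L^2$ and the estimate reduces to $L^2$-conservation, so from now on I assume $1<2p,2q<\infty$.

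\textbf{Step 1: passing to an $H$-adapted Besov norm.} Since $V\ge 0$, Trotter's product formula gives the pointwise domination $0\le e^{-tH}(x,y)\le e^{t\Delta}(x,y)$, so $e^{-tH}$ satisfies Gaussian upper bounds. By the Littlewood--Paley/spectral multiplier theory for operators with Gaussian bounds, the square function $f\mapsto\bigl(\sum_j|\varphi_j(\sqrt H)f|^2\bigr)^{1/2}$ is bounded on $L^r(\R^d)$ for every $1<r<\infty$, where $\{\varphi_j\}$ is a dyadic partition of unity on $(0,\infty)$. Applying this with $r=2q$ for a.e.\ $t$ and integrating in $t$, and then using that $\varphi_j(\sqrt H)$ commutes with $e^{-itH}$ and with $P_{\ac}(H)$, one gets with $f_j:=\varphi_j(\sqrt H)u/\|\varphi_j(\sqrt H)u\|_2$ and $\nu_j:=\|\varphi_j(\sqrt H)u\|_2^2$,
\[
\bigl\|e^{-itH}P_{\ac}(H)u\bigr\|_{L^{2p}_tL^{2q}_x}^2\ \lesssim\ \Bigl\|\sum_j\nu_j\,\bigl|e^{-itH}P_{\ac}(H)f_j\bigr|^2\Bigr\|_{L^p_tL^q_x}.
\]
The disjointness of the spectral supports makes $\{f_j\}_{j\ \mathrm{even}}$ and $\{f_j\}_{j\ \mathrm{odd}}$ orthonormal systems in $L^2$, so applying the orthonormal Strichartz estimate for $H$ (Corollary~\ref{11161517} or \ref{ooo}, with parameter $\alpha$) to each family and summing yields
\[
\bigl\|e^{-itH}P_{\ac}(H)u\bigr\|_{L^{2p}_tL^{2q}_x}^2\ \lesssim\ \|(\nu_j)_j\|_{\ell^{\alpha}}\ =\ \|u\|_{\dot B^0_{2,2\alpha}(H)}^2,
\]
where $\dot B^0_{2,2\alpha}(H)$ denotes the homogeneous Besov norm $\bigl\|(\|\varphi_j(\sqrt H)u\|_2)_j\bigr\|_{\ell^{2\alpha}}$.

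\textbf{Step 2: comparison of Besov norms.} It remains to prove $\|u\|_{\dot B^0_{2,2\alpha}(H)}\lesssim\|u\|_{\dot B^0_{2,2\alpha}}$. Fix a small $s\in(0,1)$; by real interpolation $\dot B^0_{2,2\alpha}(\cdot)=(\dot H^{-s}(\cdot),\dot H^{s}(\cdot))_{1/2,2\alpha}$, so it suffices to establish the continuous inclusions $\dot H^{-s}(-\Delta)\hookrightarrow \dot H^{-s}(H)$ and $\dot H^{s}(-\Delta)\hookrightarrow \dot H^{s}(H)$. The first is immediate: $V\ge 0$ gives $H\ge -\Delta\ge 0$, and operator monotonicity of $t\mapsto t^{-s}$ on $(0,\infty)$ yields $\langle H^{-s}u,u\rangle\le\langle(-\Delta)^{-s}u,u\rangle$, i.e.\ $\|H^{-s/2}u\|_2\le\||D|^{-s}u\|_2$. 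The second is the $L^2$-boundedness of $H^{s/2}(-\Delta)^{-s/2}$; I would prove it from the subordination formula $H^{s}=c_s\int_0^\infty\lambda^{s-1}\,H(H+\lambda)^{-1}\,d\lambda$, the resolvent identity $(-\Delta+\lambda)^{-1}-(H+\lambda)^{-1}=(-\Delta+\lambda)^{-1}V(H+\lambda)^{-1}$, the factorisation $V=V^{1/2}\cdot V^{1/2}$ (legitimate since $V\ge 0$), and the Hardy/Morrey-type bound $\langle Vf,f\rangle\lesssim\|\nabla f\|_2^2$ guaranteed by the hypotheses of Corollary~\ref{11161517} or \ref{ooo}; for inverse-square-type $V$ (where $V\ge 0$ forces the coupling constant to be nonnegative) this comparison of $H$- and $(-\Delta)$-Sobolev spaces is also available from the resolvent estimates of \cite{BM}. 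Combining Steps 1 and 2 proves the corollary.

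\textbf{Main obstacle.} The technical heart is Step 2: one must show that $\dot B^0_{2,2\alpha}(H)$ embeds into $\dot B^0_{2,2\alpha}$, and this is exactly where $V\ge 0$ and the precise decay/Morrey hypotheses are indispensable — at high frequencies one needs $V$ to be a genuine perturbation of $-\Delta$, at low frequencies the homogeneous scaling is delicate, and without $V\ge 0$ both the operator-monotonicity step and the $V^{1/2}$-splitting break down. An alternative route is to imitate the proof of Theorem~\ref{2311022111}: expand $e^{-itH}P_{\ac}(H)$ by Duhamel's formula, treat the free propagator by the refined estimate of \cite{FS}, and handle the remainder $\int_0^t e^{-i(t-s)H_0}Ve^{-isH}P_{\ac}(H)u\,ds$ by a Christ--Kiselev argument together with the Kato smoothing of $Y$ and $ZP_{\ac}(H)$; but the remainder is then only controlled in $L^2_x$, and upgrading this to the strictly weaker norm $\dot B^0_{2,2\alpha}$ again requires an extra input — e.g.\ boundedness of the wave operators on $\dot B^0_{2,2\alpha}$, which once more rests on $V\ge 0$ and on the absence of a zero-energy eigenvalue or resonance. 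Either way, the positivity of $V$ is doing essential work beyond what is needed for the orthonormal estimates themselves.
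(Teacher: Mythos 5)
Your argument follows the same two-step strategy as the paper's own proof. Step~1 is essentially identical: the paper also invokes the Littlewood--Paley theorem for $H$ (Proposition~2.9 of \cite{M1}) and splits the spectral projections into even and odd pieces to produce two orthonormal systems, reducing the Strichartz norm to $\bigl\|\{\|\phi_n(H)u\|_2^2\}\bigr\|_{\ell^\alpha}$ — precisely the $H$-adapted Besov norm you arrive at. The genuine divergence is in Step~2. The paper passes from this $H$-adapted norm to $\|u\|_{\dot B^0_{2,2\alpha}}$ by a single citation of Proposition~3.5 of \cite{IMT}, asserting equivalence of the two Besov norms; you instead prove the one inequality that is actually needed, $\dot B^0_{2,2\alpha}\hookrightarrow\dot B^0_{2,2\alpha}(H)$, by sandwiching $-\Delta\le H\lesssim-\Delta$ in the quadratic-form sense (the upper bound being the Hardy/Morrey bound built into Corollaries~\ref{11161517} and~\ref{ooo}), invoking operator monotonicity to compare $\dot H^{\pm s}$ with $\dot H^{\pm s}(H)$, and real-interpolating. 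That is sound and is, in fact, closer to the technique the paper does spell out for the magnetic case in Lemma~\ref{1120007} and Corollary~\ref{11191637}. Two remarks: your subordination/resolvent sketch is heavier than necessary — once $-\Delta\le H\lesssim-\Delta$ is in hand, the L\"owner--Heinz theorem gives $\|H^{\pm s/2}u\|_2\approx\||D|^{\pm s}u\|_2$ for $s\in(0,1]$ directly, and interpolation finishes; and for the inverse-square potentials of Corollary~\ref{ooo} the Gaussian heat-kernel bound does not come from a na\"ive Trotter domination, so one should either quote the Littlewood--Paley theorem for $H$ from \cite{M1} (as the paper does) or the form-perturbation heat-kernel estimate from \cite{CD}. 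Neither of these affects the correctness of your proof.
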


\begin{corollary}\label{11191637}
Let $p ,q,\alpha$ as in (S1), $H$ as in Theorem \ref{11102255}. We also assume 
\[ \|V_{-}\|_{K_d} < c_{d} := \frac{\pi^{d/2}}{\Gamma(d/2 -1)}, \quad  \langle Hu, u \rangle \approx \langle H_{0} u, u \rangle .\]
Then \[ \|e^{-itH}u\|_{L^{2p}_t L^{2q}_x} \lesssim \|u\|_{\dot{B}^0_{2,\frac{4q}{q+1}}} \]
holds. Here $V_{-} = \min \{0, V\}$, $K_d$ is the Kato class (see Definition \ref{11191957}).
\end{corollary}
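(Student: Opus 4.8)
The proof will parallel that of Corollary~\ref{jjj}, which in turn follows Frank--Sabin \cite{FS}; the new ingredients are the verification that $P_{\ac}(H)=I$ under the present hypotheses and a modification of the Littlewood--Paley argument accounting for the fact that $e^{-itH}$ does not commute with frequency projections. The plan is as follows.

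First I would show that the whole Hilbert space is absolutely continuous, so that the $P_{\ac}(H)$ appearing in Theorem~\ref{11102255} may be dropped. The hypothesis $\langle Hu,u\rangle\approx\langle H_0u,u\rangle$ says the quadratic form of $H$ is equivalent to that of $H_0=-\Delta$; in particular $H\gtrsim -\Delta>0$, so $H$ has no negative eigenvalue and $0$ is not an eigenvalue. The smallness assumption $\|V_-\|_{K_d}<c_d$ with $c_d=\pi^{d/2}/\Gamma(d/2-1)$ is precisely the sharp Kato-class threshold making this form equivalence rigorous for the magnetic Hamiltonian (via the Kato-class/Hardy-type inequality with optimal constant and the diamagnetic inequality), and it also yields a pointwise Gaussian bound for the integral kernel of $e^{-tH}$. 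Together with the absence of embedded eigenvalues and of singular continuous spectrum, which follows from the decay and regularity of $A,V$ through the limiting absorption principle already used in \cite{EGS} and in Theorem~\ref{11102255}, this gives $\sigma(H)=\sigma_{\ac}(H)=[0,\infty)$ and $P_{\ac}(H)=I$. Hence Theorem~\ref{11102255} supplies the orthonormal Strichartz estimate for $e^{-itH}$ on all of $L^2(\R^d)$ with the optimal exponent $\alpha=\tfrac{2q}{q+1}$, and note $\tfrac{4q}{q+1}=2\alpha$.

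Next I would dualize this via the duality principle of Section~\ref{2311022103} (cf. \cite{FS}): the orthonormal estimate is equivalent to the Schatten bound $\|We^{-itH}\|_{\mathfrak{S}^{2\alpha'}(L^2_x\to L^2_{t,x})}\lesssim\|W\|_{L^{2p'}_tL^{2q'}_x}$ for all multiplication operators $W\ge0$, where $\mathfrak{S}^{\beta}$ denotes the Schatten class of order $\beta$ and $\alpha'$ is the conjugate exponent. To obtain the Besov refinement, decompose $u=\sum_j\phi_j(D)u$ with $\phi_j$ the homogeneous Littlewood--Paley decomposition, choose fatter bumps $\tilde\phi_j$ with $\tilde\phi_j\phi_j=\phi_j$ and $\sum_j\tilde\phi_j^2\lesssim 1$, and dualize the target $\||e^{-itH}u|^2\|_{L^p_tL^q_x}$; it then suffices to bound $\|We^{-itH}u\|_{L^2_{t,x}}$ by $\|u\|_{\dot B^0_{2,2\alpha}}$ uniformly over $W\ge0$ with $\|W\|_{L^{2p'}_tL^{2q'}_x}\le1$. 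Writing $We^{-itH}u=(We^{-itH})\circ\Psi\big((\phi_j(D)u)_j\big)$ with $\Psi\big((v_j)_j\big)=\sum_j\tilde\phi_j(D)v_j$ and using $\|(\phi_j(D)u)_j\|_{\ell^{2\alpha}_jL^2_x}=\|u\|_{\dot B^0_{2,2\alpha}}$, the claim reduces to the operator bound $\|(We^{-itH})\circ\Psi\|_{\ell^{2\alpha}_jL^2_x\to L^2_{t,x}}\lesssim\|W\|_{L^{2p'}_tL^{2q'}_x}$. This is proved by interpolating between the elementary $\ell^2_jL^2_x$ bound — coming from the ordinary Strichartz estimate for $e^{-itH}$ (one function in the orthonormal estimate, valid here since the pairs in (S1) are non-endpoint admissible) together with $\sum_j\tilde\phi_j^2\lesssim 1$ — and the stronger bound produced by the Schatten estimate above; the summation index is pinned at exactly $\ell^{2\alpha}_j$ by the bookkeeping $\tfrac1{2\alpha}+\tfrac1{2\alpha'}=\tfrac12$, which matches the Schatten exponent $2\alpha'$ against the target $L^2_{t,x}$.

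The main obstacle, absent in the free case of \cite{FS}, is that $e^{-itH}$ does not commute with the flat projections $\phi_j(D)$, so one cannot localize $e^{-itH}\phi_j(D)u$ in frequency. The interpolation in the previous step must therefore be arranged using only that each $\tilde\phi_j(D)$ is an $L^2_x$-contraction with $\sum_j\tilde\phi_j^2\lesssim 1$ and that the Schatten bound for $We^{-itH}\tilde\phi_j(D)$ holds uniformly in $j$; alternatively, the Gaussian kernel bound for $e^{-tH}$ from the first step, through $L^p$-boundedness of spectral multipliers, provides the robustness needed to compare the flat Besov norm with an $H$-adapted one. The other point demanding care is the very first step, i.e.\ excluding zero-energy obstructions and singular spectrum for the magnetic operator, which is exactly where the two hypotheses $\|V_-\|_{K_d}<c_d$ and $\langle Hu,u\rangle\approx\langle H_0u,u\rangle$ are used; granting them, dropping $P_{\ac}(H)$ in the final estimate is immediate.
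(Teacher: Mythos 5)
Your first step (dropping $P_{\ac}(H)$) is in the right spirit, but the heart of your proposal — decompose with the flat projections $\phi_j(D)$ and then interpolate — is not how the paper proceeds and, as written, does not close.

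The paper's proof is short precisely because it uses the operator-adapted projections $\phi_j(\sqrt H)$, which \emph{commute} with $e^{-itH}$; the ``main obstacle'' you devote your last paragraph to therefore never arises. The two hypotheses are consumed as follows, each for a separate purpose. The Kato-class smallness $\|V_-\|_{K_d}<c_d$ (together with $V_+\in K_d$, automatic from the decay in Theorem~\ref{11102255}) gives the Gaussian heat-kernel bound of Proposition~\ref{11192049}, hence the spectral multiplier estimate of Theorem~\ref{11192048}, hence the Littlewood--Paley theorem for $H$ (Proposition~\ref{11192111}). The form equivalence $\langle Hu,u\rangle\approx\langle H_0u,u\rangle$ gives $L^2$-boundedness of $H^{\pm1/2}H_0^{\mp1/2}$, which feeds into Lemma~\ref{1120007}, where the wave-operator intertwining $f(H)=W_\pm f(H_0)W_\pm^*$ and interpolation give $\|u\|_{\dot B^0_{2,r}(H)}\lesssim\|u\|_{\dot B^0_{2,r}}$. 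One then repeats the proof of Corollary~\ref{jjj} with $\phi_n(\sqrt H)$ in place of $\phi_n(H)$, obtaining $\|e^{-itH}u\|_{L^{2p}_tL^{2q}_x}\lesssim\|u\|_{\dot B^0_{2,2\alpha}(H)}\lesssim\|u\|_{\dot B^0_{2,2\alpha}}$. Your reading of the two hypotheses as jointly ``making the form equivalence rigorous and giving the Gaussian bound'' conflates them; they play independent roles, and in particular the form equivalence is an assumption, not a consequence of the Kato bound.

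The concrete gap is the interpolation step. You want $\|(We^{-itH})\circ\Psi\|_{\ell^{2\alpha}_jL^2_x\to L^2_{t,x}}\lesssim\|W\|_{L^{2p'}_tL^{2q'}_x}$, and you propose interpolating between the $\ell^2_j$ endpoint (ordinary Strichartz plus $\sum_j\tilde\phi_j^2\lesssim1$) and ``the stronger bound produced by the Schatten estimate''. But the Schatten bound $\|We^{-itH}\|_{\mathfrak S^{2\alpha'}}\lesssim\|W\|$ concerns a single operator, not a $j$-indexed family, and there is no orthogonality among the pieces $We^{-itH}\tilde\phi_j(D)v_j$ because $e^{-itH}$ does not preserve the annuli $\supp\tilde\phi_j$. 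In the free case the $\ell^{2\alpha}$ gain comes exactly from the fact that $e^{it\Delta}\phi_j(D)u$ stay frequency-localized, so the even (resp.\ odd) indexed pieces form an orthogonal family to which the orthonormal estimate applies; once that is lost, the numerology $\tfrac1{2\alpha}+\tfrac1{2\alpha'}=\tfrac12$ does not by itself furnish an interpolation theorem that restores it. Your closing ``alternatively'' correctly hints that one should compare the flat Besov norm to an $H$-adapted one, but spectral-multiplier/heat-kernel bounds alone do not prove $\|u\|_{\dot B^0_{2,r}(H)}\lesssim\|u\|_{\dot B^0_{2,r}}$; that comparison is where the wave operator and the form equivalence actually do their work (Lemma~\ref{1120007}), and this ingredient is absent from your proposal.
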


\begin{remark}
Since $\frac{4q}{q+1}>2$ holds under the conditions in Corollary \ref{jjj} and Corollary \ref{11191637}, we have $\|u\|_{\dot{B}^0_{2,\frac{4q}{q+1}}}\lesssim\|u\|_2$. Hence Corollary \ref{jjj} is a refinement of the usual Strichartz estimates:
\[ \|e^{-itH}P_{\ac}(H)u\|_{L^{2p}_t L^{2q}_x} \lesssim \|u\|_2 .\]
\end{remark}

Before stating the result on the nonlinear problem, we need the definition of the Schatten space.

\begin{defn}[Schatten space]
Let $\mathcal{H}_1$ and $\mathcal{H}_2$ be a Hilbert space. For a compact operator 
$A : \mathcal{H}_1 \rightarrow \mathcal{H}_2$, the singular values $\{ \mu_n \}$ of $A$  is defined as the set of all the eigenvalues of $(A^*A)^{1/2}$. The Schatten space $\mathfrak{S}^{\alpha} (\mathcal{H}_1 \rightarrow \mathcal{H}_2)$ for $\alpha \in [1, \infty]$ is the set of all the compact operators: $\mathcal{H}_1 \rightarrow \mathcal{H}_2$  such that its singular values belong to $\ell^{\alpha}$. Its norm is defined by the $\ell^\alpha$ norm of the singular values. We also use the following notation for simplicity: $\|S\|_{\mathfrak{S}^{\alpha} (\mathcal{H}_0)} := \|S\|_{\mathfrak{S}^{\alpha} (\mathcal{H}_0 \rightarrow \mathcal{H}_0)}$.
\end{defn}

\noindent
For $A \in \mathcal{B}(L^2 (X))$, where $X$ is a $\sigma$-finite measure space, $\rho_A (x) : = k_A (x, x)$ denotes the density function of $A$. Here $k_A (x, y)$ is the integral kernel of $A$. We sometimes write $\rho (A)$.

As an application of the orthonormal Strichartz estimates, we prove the global existence of the solution for the Hartree equations for infinitely many fermions:
\begin{align}
\left\{
\begin{array}{l}
i\partial_t \gamma=[H+w*\rho_{\gamma},\gamma] \\
\gamma(0)=\gamma_0
\end{array}
\right.
\tag{H}\label{2311152204}
\end{align}
Here, $\gamma$ is an operator-valued function and $\rho_{\gamma}$ is its density function. For functions $f$ and $g$, $f*g$ denotes the convolution with respect to the space variable. 
\begin{thm}\label{11102301}

Let $d,p,q$ be as in (S1), $w \in L^{q'} (\R^d)$, $H$ be as in Corollaries \ref{11161517}, \ref{ooo} or Theorem \ref{11102255}. We also assume that the spectrum of $H$ is absolutely continuous. Then for all self-adjoint $\gamma_0 \in {\mathfrak{S}}^{\frac{2q}{q+1}} (L^2 (\R^d))$, we have a unique global solution $\gamma$ to \eqref{2311152204} satisfying $\gamma\in C_t (\mathbb{R} ;{\mathfrak{S}}^{\frac{2q}{q+1}} (L^2 (\R^d)))$, $\rho_{\gamma}\in L^p_{\mathrm{loc},t} L^q_x$.
Note that $f \in L^p_{\mathrm{loc},t} L^q_x$ iff $f \in L^p(K)_t L^q_x$ for all the compact sets $K\subset \mathbb R$.
\end{thm}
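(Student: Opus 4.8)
The plan is to run a fixed-point argument on the Duhamel formulation, exactly in the spirit of Lewin--Sabin and Frank--Sabin, but with the free propagator $e^{it\Delta}$ replaced by $e^{-itH}$ and using the orthonormal Strichartz estimates for $H$ established in Corollaries~\ref{11161517}, \ref{ooo} and Theorem~\ref{11102255}. Write the solution as $\gamma(t) = e^{-itH}\gamma_0 e^{itH} - i\int_0^t e^{-i(t-s)H}[w*\rho_{\gamma(s)}, \gamma(s)]\,e^{i(t-s)H}\,ds =: \Phi(\gamma)(t)$, and seek a fixed point in the space $Y_T := \{\gamma \in C_t([-T,T];\mathfrak{S}^{\frac{2q}{q+1}}) : \rho_\gamma \in L^p_t([-T,T])L^q_x\}$ for $T$ small, then iterate to cover all of $\R$ using the conservation of the $\mathfrak{S}^{\frac{2q}{q+1}}$-norm along the flow (which follows because $H+w*\rho_\gamma$ is self-adjoint, so $e^{-i\int H + w*\rho}$ is unitary and conjugation preserves Schatten norms). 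Since the spectrum of $H$ is assumed purely absolutely continuous, $P_{\ac}(H) = I$ and the estimates in the cited results apply directly to $e^{-itH}$ without projection.

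First I would record the linear input: the orthonormal Strichartz estimate, stated via the duality principle (which is recalled in Section~\ref{2311022103}), is equivalent to the bound $\|\sqrt{W_1}\,e^{-itH}\,(e^{-itH})^*\sqrt{W_2}\,\|_{\mathfrak{S}^{?}(L^2_{t,x})} \lesssim \|W_1\|\,\|W_2\|$ on appropriate dual Lebesgue spaces; equivalently, for $\gamma_0 \in \mathfrak{S}^{\frac{2q}{q+1}}$ one has $\|\rho(e^{-itH}\gamma_0 e^{itH})\|_{L^p_t L^q_x} \lesssim \|\gamma_0\|_{\mathfrak{S}^{\frac{2q}{q+1}}}$. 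The dual/retarded version gives the Duhamel term: if $F(s)$ is an operator-valued source, then $\rho\big(\int_0^t e^{-i(t-s)H} F(s) e^{i(t-s)H}\,ds\big)$ is controlled in $L^p_t L^q_x$ by $\|F\|$ in the corresponding dual space, after an application of the Christ--Kiselev lemma (also recalled in Section~\ref{2311022103}) to pass from the non-retarded to the retarded integral, which is legitimate since $p > \alpha' $... more precisely since the relevant time-exponents are strictly ordered, which holds in case (S1) because $p>1$ there unless $(p,q)=(\infty,1)$, a trivial case handled separately. For the commutator nonlinearity $[w*\rho_\gamma,\gamma]$ I would estimate $\|w*\rho_\gamma\|_{L^{p'}_t L^{\infty}_x}$ via Young's inequality using $w \in L^{q'}$ and $\rho_\gamma \in L^p_t L^q_x$ (noting $1/q + 1/q' = 1$ gives $w*\rho_\gamma \in L^\infty_x$ pointwise in $t$, with time-integrability $L^{p}_t$ — one then uses Hölder in $t$ to land in the dual space needed by the retarded estimate), then bound the Schatten norm of the commutator by $\|w*\rho_\gamma\|_{L^\infty_x}\|\gamma\|_{\mathfrak{S}^{\frac{2q}{q+1}}}$ since multiplication by an $L^\infty$ function is bounded on every Schatten class.

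Assembling these, $\Phi$ maps a ball of $Y_T$ into itself and is a contraction for $T$ small, with the smallness of $T$ absorbing the nonlinear constant; this yields a unique local solution. Then the a priori identity $\|\gamma(t)\|_{\mathfrak{S}^{\frac{2q}{q+1}}} = \|\gamma_0\|_{\mathfrak{S}^{\frac{2q}{q+1}}}$ lets me reapply the local result with the same existence time at each step, giving a global solution in $C_t(\R;\mathfrak{S}^{\frac{2q}{q+1}})$ with $\rho_\gamma \in L^p_{\mathrm{loc},t}L^q_x$; uniqueness propagates from the local uniqueness by a standard continuation/connectedness argument. The main obstacle I anticipate is not any single estimate but the bookkeeping of exponents: one must check that the pair $(p,q)$ from (S1), together with the dual exponents forced by the duality principle and the Schatten exponent $\frac{2q}{q+1}$, are mutually compatible with the hypotheses needed to invoke both the orthonormal Strichartz estimate for $H$ and the Christ--Kiselev lemma — in particular that the time exponent in the inhomogeneous estimate is strictly larger than the one coming from the source term, which is exactly why the borderline $(p,q)=(\infty,1)$ must be excluded or treated by hand. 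A secondary technical point is justifying the Duhamel formula and the differentiation $i\partial_t\gamma = [H+w*\rho_\gamma,\gamma]$ rigorously at the level of operators (density arguments, using that $\gamma_0$ can be approximated by finite-rank operators built from Schwartz functions), but this is routine and parallels \cite{FS}.
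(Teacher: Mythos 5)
Your overall outline is correct and matches the paper's: a contraction-mapping argument on the Duhamel formulation in $C_t\mathfrak{S}^{\frac{2q}{q+1}}\times L^p_tL^q_x$, driven by the orthonormal Strichartz estimates for $H$, followed by propagation of the Schatten norm to globalize. The estimate of the nonlinearity via $\|w*\rho_\gamma\|_{L^\infty_x}\le\|w\|_{q'}\|\rho_\gamma\|_q$ and Schatten-H\"older is also exactly what the paper does in Step~1.

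However, the crux of the theorem is the global step, and there your proposal leaves a genuine gap. You assert the conservation $\|\gamma(t)\|_{\mathfrak{S}^{2q/(q+1)}}=\|\gamma_0\|_{\mathfrak{S}^{2q/(q+1)}}$ as following immediately ``because $H+w*\rho_\gamma$ is self-adjoint, so the flow is unitary and conjugation preserves Schatten norms.'' This begs precisely the two questions that occupy the bulk of the paper's proof. First, the time-dependent Hamiltonian $H+w*\rho_\gamma(t)$ need not generate a propagator in any automatic way: $H$ here can have an inverse-square singularity and $w*\rho_\gamma$ only belongs to $L^p_tL^\infty_x$, so one cannot invoke a standard theorem. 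The paper proves this from scratch (Lemma~\ref{1114032} and Corollary~\ref{1114035}), adapting Yajima's argument with the crucial modification of solving the Duhamel equation relative to $e^{-itH}$ rather than $e^{it\Delta}$ -- the paper explicitly remarks that the latter formulation would fail for singular $\tilde V$ -- and proving unitarity via Ozawa's method rather than Yajima's regularization. Second, even granting the existence of the unitary propagator $U(t)$, it is not automatic that the fixed-point solution $\gamma(t)$ coincides with $U(t)\gamma_0U(t)^*$: both satisfy integral equations but with different unknowns, and one cannot differentiate naively. The paper establishes $\gamma(t)=U(t)\gamma_0U(t)^*$ by expanding $U(t)\gamma_0U(t)^*$ via the Duhamel identities for $U$ and $U^*$, performing a real/imaginary-part symmetrization of the double time integral, and then closing with Gronwall's inequality applied to $\|\gamma(t)-U(t)\gamma_0U(t)^*\|_{\mathfrak{S}^{2q/(q+1)}}$. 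Without these two pieces your globalization argument is circular.

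A secondary, more minor discrepancy: you invoke Christ--Kiselev to handle the retarded Duhamel term in the density estimate. That lemma is used in the proof of the abstract perturbation theorem~\ref{2311022111}, not here. The paper's Lemma~\ref{11131023} instead proceeds by a direct trace/duality argument, replacing $R(s)$ by $|R(s)|$ via $|\tr(AB)|\le\tr(|A||B|)$ and then enlarging $\int_0^t$ to $\int_0^\infty$ using positivity, which yields the bound in terms of $\bigl\|\int e^{isH}|R(s)|e^{-isH}\,ds\bigr\|_{\mathfrak{S}^{2q/(q+1)}}$. The two approaches are not interchangeable as written; you should either carry out the paper's positivity trick or carefully check that the Christ--Kiselev exponent condition holds in the Schatten-valued setting you need.
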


\begin{remark}

The Hartree equation \eqref{2311152204} describes infinitely many fermions under the condition that there is a time-independent spatially decaying electromagnetic potential. The corresponding $N$ particles system is described as
\[
\left\{
\begin{array}{l}
i\partial_t u_j=Hu_j+w*(\sum_{k=1}^{k=N} |u_k|^2)u_j \\
u_j(0)= u_{0,j}
\end{array}
\right.
\]     
for $j=1,2,\dots,N$.
See \cite{LS} for more details about its physical background when $H=-\Delta$. 
\end{remark}

\section{Preliminaries}\label{2311022103}
In this section, we collect some notations, definitions and lemmas used in the following section.
\subsubsection*{\textbf{Schatten spaces}}

For the proof of properties associated to the Schatten spaces, see \cite{S}, \cite{Ha}. 
We often use Hölder's inequality for the Schatten space:
\[\|ST\|_{\mathfrak{S}^{\alpha} (\mathcal{H}_0 \rightarrow \mathcal{H}_2)} \lesssim \|S\|_{\mathfrak{S}^{\alpha_1} (\mathcal{H}_1 \rightarrow \mathcal{H}_2)} \cdot \|T\|_{\mathfrak{S}^{\alpha_2} (\mathcal{H}_0 \rightarrow \mathcal{H}_1)}\]
Here $1/\alpha = 1/\alpha_1 + 1/\alpha_2$.
We use the following notations for the Schatten spaces: 
\begin{gather*}
\ \mathfrak{S}^{\alpha}_{t, x}= \mathfrak{S}^{\alpha} (L^2_{t, x}) ,
\ \ 
\mathfrak{S}^{\alpha}_{x \rightarrow {t, x}}= \mathfrak{S}^{\alpha} (L^2_x \rightarrow L^2_{t, x}) ,
\ \ 
 \mathfrak{S}^{\alpha}_{t, x \rightarrow x}= \mathfrak{S}^{\alpha} (L^2_{t, x} \rightarrow L^2_x)
\end{gather*}
Here $L^2_{t,x} = L^2_t X$ for some Hilbert space $X$ and $x$ denotes the variable in $X$.

The next lemma is used in the proof of Theorem \ref{2311022111}.

\begin{lemma}[\cite{FS}, \cite{Ha}]\label{1113719}
Let $p, q, \alpha \in[1, \infty]$, $A :\mathbb{R} \rightarrow \mathcal{B}(L^2(X))$ be a strongly continuous function. Then the following are equivalent:

(1) For any $\gamma \in \mathfrak{S}^{\alpha} (L^2(X))$, 
\[\| \rho(A(t) \gamma A(t)^*)\|_{L^p_t L^q_x} \lesssim \| \gamma \|_{\mathfrak{S}^\alpha} .\]

(2) For any $f\in L^{2p'}_t L^{2q'}_x$, 
\[\|fA(t)\|_{\mathfrak{S}^{2\alpha'} _{x\rightarrow (t,x)}} \lesssim \|f\|_{L^{2p'}_t  L^{2q'}_x}.\] 
\end{lemma}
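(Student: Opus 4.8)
The plan is to prove the equivalence by a duality argument, reducing the Schatten-norm estimate on $\rho(A(t)\gamma A(t)^*)$ to a Schatten-norm estimate on the operator $fA(t):L^2_x\to L^2_{t,x}$, exploiting the fact that $\mathfrak{S}^\alpha$ and $\mathfrak{S}^{\alpha'}$ are dual to each other under the trace pairing $\langle S,T\rangle=\tr(S^*T)$ (with the usual care that for $\alpha=1,\infty$ one uses $\mathfrak{S}^\infty=\mathcal{B}$ and weak-$*$ density of finite-rank operators). First I would record the pointwise-in-$t$ identity $\tr_{L^2_x}\bigl(A(t)\gamma A(t)^*\, M_{\overline{g(t,\cdot)}}\bigr)=\int_X \rho(A(t)\gamma A(t)^*)(x)\,\overline{g(t,x)}\,d\mu(x)$, where $M_h$ denotes multiplication by $h$; integrating in $t$ gives
\[
\int_{\mathbb{R}}\int_X \rho(A(t)\gamma A(t)^*)(x)\,\overline{g(t,x)}\,d\mu(x)\,dt=\tr_{L^2_{t,x}}\bigl((fA)\,\gamma\,(fA)^*\bigr)\quad\text{with } f=\overline{g}^{1/2}\text{-type factorization},
\]
so I would instead factor $g=f_1\overline{f_2}$ with $|f_1|=|f_2|=|g|^{1/2}$ and write the pairing as $\tr_{L^2_{t,x}}\bigl((f_1 A)\gamma(f_2 A)^*\bigr)$. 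Then by the Schatten Hölder inequality this trace is bounded by $\|f_1A\|_{\mathfrak{S}^{2\alpha'}_{x\to(t,x)}}\,\|\gamma\|_{\mathfrak{S}^{\alpha}}\,\|f_2A\|_{\mathfrak{S}^{2\alpha'}_{x\to(t,x)}}$, since $\tfrac{1}{2\alpha'}+\tfrac{1}{\alpha}+\tfrac{1}{2\alpha'}=1$.

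For the implication $(2)\Rightarrow(1)$: given $\gamma\in\mathfrak{S}^\alpha$, I would compute the $L^p_tL^q_x$ norm of $\rho(A(t)\gamma A(t)^*)$ by duality against $g\in L^{p'}_tL^{q'}_x$ with $\|g\|_{L^{p'}_tL^{q'}_x}\le1$. Factoring $g=f_1\overline{f_2}$ as above, each $f_i$ lies in $L^{2p'}_tL^{2q'}_x$ with norm $\|g\|_{L^{p'}_tL^{q'}_x}^{1/2}\le1$; applying the hypothesis (2) to each $f_i$ and the Schatten Hölder inequality to the trace identity yields $\bigl|\int\int \rho(A(t)\gamma A(t)^*)\overline{g}\bigr|\lesssim\|\gamma\|_{\mathfrak{S}^\alpha}$, hence (1). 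For $(1)\Rightarrow(2)$: given $f\in L^{2p'}_tL^{2q'}_x$, I would bound $\|fA\|_{\mathfrak{S}^{2\alpha'}_{x\to(t,x)}}$ by duality, testing against $\gamma\in\mathfrak{S}^{2\alpha}_{(t,x)\to x}$... more precisely, one uses $\|fA\|_{\mathfrak{S}^{2\alpha'}}=\|(fA)^*(fA)\|_{\mathfrak{S}^{\alpha'}}^{1/2}=\|A^*|f|^2 A\|_{\mathfrak{S}^{\alpha'}_x}^{1/2}$, and then pairs $A^*|f|^2A$ against $\gamma\in\mathfrak{S}^\alpha(L^2_x)$, rewriting $\tr(A^*|f|^2A\,\gamma)=\int\int |f(t,x)|^2\rho(A(t)\gamma A(t)^*)(x)$ (by the same kernel computation), which by Hölder in $(t,x)$ and hypothesis (1) is bounded by $\||f|^2\|_{L^{p'}_tL^{q'}_x}\|\rho(A\gamma A^*)\|_{L^p_tL^q_x}\lesssim\|f\|_{L^{2p'}_tL^{2q'}_x}^2\|\gamma\|_{\mathfrak{S}^\alpha}$; taking the supremum over $\|\gamma\|_{\mathfrak{S}^\alpha}\le1$ gives (2).

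The main obstacle, and the place requiring genuine care rather than formula-pushing, is the rigorous justification of the kernel/trace identities and the duality: one must ensure $A(t)\gamma A(t)^*$ has a well-defined density $\rho$ (integral kernel on the diagonal) — this is where the Schatten hypothesis on $\gamma$ and strong continuity of $A$ enter, via approximation by finite-rank $\gamma$ for which everything is elementary — and one must handle the endpoint exponents $\alpha\in\{1,\infty\}$, $p,q\in\{1,\infty\}$ where the relevant dual space is $L^\infty$ or $\mathcal{B}(L^2)$ and duality must be replaced by a weak-$*$ / density argument. I would therefore first prove both identities for finite-rank operators and compactly supported simple functions $f$, where the manipulations are unambiguous, and then pass to the limit using the Schatten Hölder inequality and the density of such objects, invoking Fatou where a limit lands on the norm being estimated. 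Since the statement is quoted from \cite{FS} and \cite{Ha}, I would keep this argument brief and refer there for the measure-theoretic details of the density function.
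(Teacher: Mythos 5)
Your proof is correct and follows the same duality route the paper (and the cited references [FS], [Ha]) use: the paper itself omits the proof of this lemma, but the $(2)\Rightarrow(1)$ direction appears verbatim (with $A_1$, $A_2$, cyclicity of the trace, Schatten H\"older, and the factorization of the dual function into a product of two $L^{2p'}_tL^{2q'}_x$ factors) inside the proof of Lemma~\ref{1113652}, and your $(1)\Rightarrow(2)$ direction via $\|fA\|_{\mathfrak{S}^{2\alpha'}}^2=\|A^*|f|^2A\|_{\mathfrak{S}^{\alpha'}}$ and $\mathfrak{S}^{\alpha}$--$\mathfrak{S}^{\alpha'}$ duality is exactly the standard converse. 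Your remarks on handling endpoints and justifying the diagonal density by approximation with finite-rank $\gamma$ are the right caveats and match what the references do.
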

\begin{remark}

In [FS] and [Ha], the above lemma is proved for $L^2(\mathbb{R}^d)$. However, the modification of the proof in the case of $L^2(X)$ is straightforward, so we omit the proof here.
\end{remark}

The next lemma is also important in the proof of theorem \ref{2311022111}. It is the Christ-Kiselev type lemma in the Schatten spaces. See \cite{CK} for the usual Christ-Kiselev lemma.

\begin{lemma}[\cite{GK}, \cite{BS}, \cite{Ha}]\label{2311152210}

Let $-\infty \le a < b \le\infty$, $\alpha \in(1, \infty)$, $\mathcal{H}$ be an arbitrary Hilbert space. Let $K :(t,\tau) \mapsto K(t,\tau) \in \mathcal{B}(\mathcal{H})$ be a strongly continuous function. Assume that 
\[ \widetilde{T} g(t)=\int_a^b K(t, \tau) g(\tau) d\tau\]
defines a bounded operator on $L^2_t \mathcal{H}$ and $\widetilde{T} \in \mathfrak{S}^{\alpha}(L^2_t \mathcal{H})$.
Then 
\[Tg(t)=\int_a^t K(t, \tau) g(\tau) d\tau\]
 also satisfies $T \in \mathfrak{S}^{\alpha}(L^2_t \mathcal{H})$ and $\|T\|_{\mathfrak{S}^{\alpha} }  \lesssim \| \widetilde{T}\|_{\mathfrak{S}^{\alpha}}.$
\end{lemma}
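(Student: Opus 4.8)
The plan is to identify $T$ with the \emph{triangular truncation} of $\widetilde T$ relative to the continuous nest of orthogonal projections $\{P_s\}_{s\in(a,b)}$ on $L^2_t\mathcal{H}$ defined by $P_sg=\mathbf{1}_{(a,s)}g$. Since $\widetilde T$ has the strongly continuous, $\mathcal{B}(\mathcal{H})$-valued kernel $K(t,\tau)$, the operator $T$ has kernel $\mathbf{1}_{\{\tau<t\}}K(t,\tau)$, so the lemma asserts exactly that triangular truncation relative to a nest is bounded on $\mathfrak{S}^{\alpha}(L^2_t\mathcal{H})$ with constant depending only on $\alpha\in(1,\infty)$. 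I would carry this out in three steps.

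\emph{Step 1 (reduction to a finite nest).} For a finite partition $\mathcal{P}=\{I_1,\dots,I_N\}$ of $(a,b)$ with block projections $Q_jg=\mathbf{1}_{I_j}g$, set $\mathcal{E}_{\mathcal{P}}(\widetilde T)=\sum_{j>k}Q_j\widetilde T Q_k$. Using strong continuity of $K$ one checks that $\mathcal{E}_{\mathcal{P}}(\widetilde T)\to T$ in the weak operator topology as $\mathcal{P}$ is refined; since $\mathfrak{S}^{\alpha}$ is reflexive for $\alpha>1$ with weakly lower semicontinuous norm, it then suffices to prove $\|\mathcal{E}_{\mathcal{P}}(\widetilde T)\|_{\mathfrak{S}^{\alpha}}\le C_{\alpha}\|\widetilde T\|_{\mathfrak{S}^{\alpha}}$ uniformly in $\mathcal{P}$ (note that $\alpha>1$ is already used here).

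\emph{Step 2 (algebraic splitting).} One has the identity $\mathcal{E}_{\mathcal{P}}(\widetilde T)=\tfrac12\bigl(\widetilde T-\Pi_{\mathcal{P}}(\widetilde T)+S_{\mathcal{P}}(\widetilde T)\bigr)$, where $\Pi_{\mathcal{P}}(\widetilde T)=\sum_jQ_j\widetilde T Q_j$ is a pinching, hence an $\mathfrak{S}^{\alpha}$-contraction (it is an average of unitary conjugates of $\widetilde T$), and $S_{\mathcal{P}}(\widetilde T)=\sum_{j,k}\sgn(j-k)\,Q_j\widetilde T Q_k$. Everything therefore reduces to the bound $\|S_{\mathcal{P}}(\widetilde T)\|_{\mathfrak{S}^{\alpha}}\lesssim_{\alpha}\|\widetilde T\|_{\mathfrak{S}^{\alpha}}$.

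\emph{Step 3 (transference --- the crux).} Put $D_{\theta}=\sum_je^{ij\theta}Q_j$ (a $2\pi$-periodic unitary family) and $G(\theta)=D_{\theta}\widetilde T D_{\theta}^{-1}$. Since the $n$-th Fourier coefficient of the principal value kernel $\cot(\theta/2)$ is $-i\sgn n$, a direct computation gives $S_{\mathcal{P}}(\widetilde T)=-i\,\mathrm{p.v.}\!\int_{\T}\cot(\theta/2)\,G(\theta)\,\tfrac{d\theta}{2\pi}$, i.e.\ $S_{\mathcal{P}}(\widetilde T)$ is a fixed scalar multiple of the value at $\theta=0$ of the periodic conjugate function $\widetilde G$ of $G$. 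Now $\|G(\theta)\|_{\mathfrak{S}^{\alpha}}\equiv\|\widetilde T\|_{\mathfrak{S}^{\alpha}}$ because conjugation by a unitary is $\mathfrak{S}^{\alpha}$-isometric, and since $\widetilde G$ commutes with translations one has $\widetilde G(\theta)=D_{\theta}\widetilde G(0)D_{\theta}^{-1}$, so $\theta\mapsto\|\widetilde G(\theta)\|_{\mathfrak{S}^{\alpha}}$ is constant too; hence $\|S_{\mathcal{P}}(\widetilde T)\|_{\mathfrak{S}^{\alpha}}\simeq\|\widetilde G\|_{L^{\alpha}(\T;\mathfrak{S}^{\alpha})}\lesssim_{\alpha}\|G\|_{L^{\alpha}(\T;\mathfrak{S}^{\alpha})}\simeq\|\widetilde T\|_{\mathfrak{S}^{\alpha}}$, the middle inequality being the M.~Riesz theorem for the conjugate function on $L^{\alpha}(\T;\mathfrak{S}^{\alpha})$, valid precisely because $\mathfrak{S}^{\alpha}$ is a UMD space for $1<\alpha<\infty$. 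Passing to the limit over $\mathcal{P}$ finishes the proof.

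The main obstacle is Step 3: this is where $\alpha\in(1,\infty)$ is indispensable --- triangular truncation is unbounded on $\mathfrak{S}^{1}$ and on $\mathfrak{S}^{\infty}=\mathcal{B}(L^2_t\mathcal{H})$, exactly as the Riesz projection fails on $L^1$ and $L^{\infty}$ --- and it rests on vector-valued Calder\'on--Zygmund theory rather than on anything elementary. Alternatively one may simply invoke the Gohberg--Krein theory of triangular truncation of Volterra-type operators in Schatten classes, or run the classical Christ--Kiselev dyadic decomposition of the triangle $\{\tau<t\}$; I would only remark that the summable gain over dyadic scales that drives the scalar Christ--Kiselev argument comes there from the strict inequality $p<q$ between integrability exponents, whereas here both ends of $\widetilde T$ are $L^2_t\mathcal{H}$ and the gain has to be squeezed out of $\alpha\in(1,\infty)$ itself, which the conjugate-function argument makes transparent. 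The remaining ingredients --- the weak-operator approximation in Step 1 and the insertion of this lemma into the Duhamel formula in the proof of Theorem~\ref{2311022111} --- are routine.
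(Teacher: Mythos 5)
The paper does not give a proof of this lemma: it invokes \cite{GK}, \cite{BS}, \cite{Ha} and the remark after the statement simply notes that the argument in Section~3.1 of \cite{Ha} (for $\mathcal{H}=L^2(\mathbb{R}^d)$) carries over verbatim to general $\mathcal{H}$. Your proposal reconstructs precisely the content of those references: boundedness of the triangular truncation on $\mathfrak{S}^{\alpha}(L^2_t\mathcal{H})$ for $1<\alpha<\infty$, proved by (i) approximating by finite-nest block triangular truncations, (ii) the algebraic splitting into a pinching plus the signed sum $S_{\mathcal{P}}$, and (iii) transference of $S_{\mathcal{P}}$ to the periodic conjugate function on $L^{\alpha}(\T;\mathfrak{S}^{\alpha})$, using the UMD property of $\mathfrak{S}^{\alpha}$. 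This is the standard Gohberg--Krein/Macaev argument and it is correct; your closing observation that the gain here comes from $\alpha\in(1,\infty)$ rather than from a mismatch of integrability exponents, in contrast to the scalar Christ--Kiselev lemma, is exactly the right thing to emphasize.

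One small technical point you should tighten before calling Step~1 routine: when $a=-\infty$ or $b=\infty$, a finite partition $\{I_j\}$ must include unbounded end blocks $I_1,I_N$, and the diagonal-strip set $\bigcup_j I_j\times I_j$ where $\mathcal{E}_{\mathcal{P}}(\widetilde T)$ and $T$ disagree has infinite measure, so refining $\mathcal{P}$ alone does not give weak convergence. The clean fix is to first replace $\widetilde T$ by $P_{[-N,N]}\widetilde T P_{[-N,N]}$ (where $P_{[-N,N]}$ is multiplication by $\mathbf{1}_{[-N,N]}$); this only decreases the $\mathfrak{S}^{\alpha}$-norm, your Steps~1--3 apply on the bounded interval, and then $N\to\infty$ together with weak lower semicontinuity of the norm completes the argument. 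With this inserted the proof is complete, and for purposes of this paper it is functionally equivalent to the cited references.
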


\begin{remark}

In Theorem 3.1 of \cite{Ha}, it is assumed that $\mathcal{H}=L^2(\mathbb{R}^d)$. However, the proof in the case of $\mathcal{H}$ is just the same as the comments in Section 3.1 of \cite{Ha}. So we omit the proof here.
\end{remark}

\subsubsection*{\textbf{Orthonormal Strichartz estimates for the free Hamiltonian}}

Here we provide the orthonormal Strichartz estimates for the free Laplacian.

\begin{thm}[\cite{FLLS}, \cite{FS}, \cite{BHLNS}]\label{1113700}
The statements of Corollray~\ref{11161517} hold ture for the free Hamiltonian $H=-\Delta$.
\end{thm}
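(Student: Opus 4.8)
The estimates asserted for $H=-\Delta$ are the free orthonormal Strichartz estimates of Frank--Lewin--Lieb--Seiringer \cite{FLLS}, Frank--Sabin \cite{FS} and Bez--Hong--Lee--Nakamura--Sawano \cite{BHLNS}; since this is the base case on which Theorem~\ref{2311022111} rests, we recall the structure of the argument and indicate where the main difficulty lies.

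By the duality principle (Lemma~\ref{1113719} with $A(t)=e^{it\Delta}$), the orthonormal estimate for a triple $(p,q,\alpha)$ is equivalent to the Schatten bound
\[
\bigl\| f(t)\,e^{it\Delta} \bigr\|_{\mathfrak S^{2\alpha'}_{x\to(t,x)}} \lesssim \|f\|_{L^{2p'}_t L^{2q'}_x},
\]
and, writing $\|T\|_{\mathfrak S^{2\alpha'}}^2=\|TT^*\|_{\mathfrak S^{\alpha'}}$ for $T=f e^{it\Delta}$, it is enough to prove the bilinear bound
\[
\bigl\| f(t)\,e^{i(t-s)\Delta}\,\overline{g(s)} \bigr\|_{\mathfrak S^{\alpha'}_{t,x}} \lesssim \|f\|_{L^{2p'}_t L^{2q'}_x}\,\|g\|_{L^{2p'}_t L^{2q'}_x},
\]
the operator on the left having integral kernel $f(t,x)\,\overline{g(s,y)}\,[e^{i(t-s)\Delta}](x,y)$.

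For the range $(S1)$ the plan is to obtain this bilinear bound by Stein's complex interpolation in the Schatten scale between two endpoints of a suitable analytic family of operators. At the operator-norm ($\mathfrak S^\infty$) endpoint the estimate is dual to the classical, non-orthonormal Strichartz estimate of Keel--Tao \cite{KT}, which is available on a strictly larger range of $q$ and so imposes no constraint. At the opposite, finite-Schatten endpoint one needs a Hilbert--Schmidt type bound; here the dispersive estimate $\|e^{it\Delta}\|_{L^1\to L^\infty}\lesssim|t|^{-d/2}$ enters, but since $[e^{i(t-s)\Delta}](x,y)$ has modulus independent of $(x,y)$ the Hilbert--Schmidt norm of $TT^*$ on $L^2_{t,x}$ diverges, and the curvature of the Schrödinger flow must be exploited. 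The efficient way to do this is to localise $\widehat f$ to a dyadic annulus, so that $(t,x)\mapsto e^{it\Delta}f(x)$ becomes the adjoint Fourier restriction of a measure carried by a bounded piece of the paraboloid $\{(\xi,|\xi|^2)\}\subset\R^{d+1}$---a hypersurface with non-vanishing Gaussian curvature, for which the decay $|\widehat{d\sigma}(\zeta)|\lesssim\langle\zeta\rangle^{-d/2}$ produces a genuine space-time gain through the Hardy--Littlewood--Sobolev and Young inequalities (this is where the extra dimension is essential). One then removes the frequency localisation and passes from the truncated to the full paraboloid using the parabolic rescaling $f\mapsto\lambda^{d/2}f(\lambda\,\cdot)$, under which the inequality is invariant precisely because $2/p+d/q=d$, together with a Littlewood--Paley square-function bound in $L^p_tL^q_x$. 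As $(S1)$ keeps $q$ strictly below $\tfrac{d+1}{d-1}$, the interpolation stays in the open range, no weak-type (Lorentz--Schatten) refinement is needed, and the sharp exponent $\alpha=\tfrac{2q}{q+1}$ drops out.

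The range $(S2)$ then follows by one further interpolation. Given $q\in(\tfrac{d+1}{d-1},\tfrac{d}{d-2})$, fix $q_0\in(1,\tfrac{d+1}{d-1})$ and write $\tfrac1q=\tfrac{1-\theta}{q_0}+\tfrac{\theta}{q_1}$ with $q_1=\tfrac{d}{d-2}$ and $\theta\in(0,1)$, possible since $q_0<\tfrac{d+1}{d-1}<q<q_1$. Interpolating the $(S1)$ estimate at $(p_0,q_0,\alpha_0)$, $\alpha_0=\tfrac{2q_0}{q_0+1}$, against the estimate at $(p_1,q_1,\alpha_1)=(1,\tfrac d{d-2},1)$---which holds by the triangle inequality and the Keel--Tao endpoint $\|e^{it\Delta}u\|_{L^2_tL^{2d/(d-2)}_x}\lesssim\|u\|_2$---gives the orthonormal estimate at $(p,q,\alpha)$ with $\tfrac1\alpha=\tfrac{1-\theta}{\alpha_0}+\theta$ and $\tfrac1p=\tfrac{1-\theta}{p_0}+\theta$; since $\alpha_0<p_0$ exactly when $q_0<\tfrac{d+1}{d-1}$ while $\alpha_1=p_1$, this forces $\alpha<p$. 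The main obstacle throughout is the finite-Schatten endpoint of the $(S1)$ argument: because the Schrödinger propagator is not localised in space, any gain over the plain Strichartz estimate---which corresponds to the trivial choice $\alpha=1$---is obtained only after passing to the curved space-time hypersurface picture in $\R^{d+1}$ and using its curvature.
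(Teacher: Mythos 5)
The paper offers no proof of Theorem~\ref{1113700}: it is stated as a quotation of results already proved in \cite{FLLS}, \cite{FS}, and \cite{BHLNS}, and is used later only as the free-Hamiltonian input to Theorem~\ref{2311022111}. There is therefore no internal argument for your sketch to match or diverge from; what you have written is a reconstruction of the proof from those references, and it captures the structure correctly at a high level: duality via Lemma~\ref{1113719}, reduction to a bilinear Schatten estimate by the $TT^*$ identity $\|T\|_{\mathfrak S^{2\alpha'}}^2=\|TT^*\|_{\mathfrak S^{\alpha'}}$, Stein complex interpolation in the Schatten scale with the $\mathfrak S^\infty$ endpoint supplied by Keel--Tao \cite{KT}, and a second interpolation against the trivial $\alpha=1$ endpoint to reach~(S2).

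Two small cautions about the sketch itself. First, the finite-Schatten endpoint is handled by two rather different devices in the literature: \cite{FS} build an analytic family by inserting complex powers (of $|t-s|$ or of $(-\Delta)$) into the kernel $f(t)\,e^{i(t-s)\Delta}\,\overline{g(s)}$ and then combine the dispersive estimate with Hardy--Littlewood--Sobolev, with no explicit recourse to restriction theory, while \cite{BHLNS} work with Fourier restriction to the paraboloid and Littlewood--Paley. Your account blends the two; either, stated consistently, is fine. Second, in the dual bilinear formulation $f$ and $g$ are the weights in $L^{2p'}_tL^{2q'}_x$, not initial data, so the phrase ``localise $\widehat f$ to a dyadic annulus'' does not literally apply to them; the frequency localisation in \cite{BHLNS} acts on the orthonormal data in the primal formulation, and one passes to the dual picture afterwards. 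These are presentational points, not gaps: the reduction chain and the role of curvature/dispersion at the finite-Schatten endpoint are as you describe, and the paper's use of this theorem is unaffected.
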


\begin{remark}

In the above theorem, assumptions on $\alpha$ are known to be optimal. See Theorems 1.1 and 1.2 in \cite{BHLNS} for the precise statement of the optimality.
\end{remark}

\section{Proofs of main theorems}

In this section we prove the main theorems.

\begin{proof}[Proof of Theorem~\ref{2311022111}] 

We use the duality principle by Frank-Sabin \cite{FS} and the smooth perturbation theory by Kato \cite{KY}.
Since  
\[\rho(e^{-itH} P_{ac} (H) \gamma e^{itH} P_{ac} (H))=\sum_{n=0}^\infty \nu_n |e^{-itH} P_{ac} (H) f_n|^2\] and $\| \gamma \|_{\mathfrak{S}^\alpha}=(\sum_{n=0}^\infty |\nu_n|^{\alpha})^{1/ \alpha} $ for $\gamma = \sum_{n=0}^\infty \nu_n |f_n\rangle \langle f_n|$, it suffices to show 
\[\| \rho(e^{-itH} P_{ac} (H)\gamma e^{itH} P_{ac} (H))\|_{L^p_t L^q_x} \lesssim \| \gamma \|_{\mathfrak{S}^\alpha}.\]
By Lemma \ref{1113719}, we prove that 
\[\|fe^{-itH} P_{ac} (H)\|_{\mathfrak{S}^{2\alpha'} _{x\rightarrow (t,x)}} \lesssim \|f\|_{L^{2p'}_t  L^{2q'}_x}\] holds  for any $f\in L^{2p'}_t L^{2q'}_x$.
By the Duhamel formula (\cite{BM}, \cite{M1}, \cite{MY1}, \cite{MY2}), we have
\begin{gather*} 
fU_{H} P_{ac} (H)=fU_{H_0} P_{ac} (H) -i f \Gamma_{H_0} V U_{H} P_{ac}(H), \\
U_{H}=e^{-itH} ,\ U_{H_0}=e^{-itH_0},\ \Gamma_{H_0} g(t) = \int_0^t e^{i(t-s)H_0} g(s) ds
\end{gather*}
(we only need to multiply $P_{ac}(H)$ from the right and $f$ from the left of \cite[(5.2)]{M1}).
The first term in the right hand side is estimated as
\[\|f U_{H_0} P_{ac}(H)\|_{\mathfrak{S}^{2\alpha'}_{x \rightarrow (t,x)}} \lesssim \|fe^{-itH_0}\|_{\mathfrak{S}^{2\alpha'} _{x \rightarrow (t,x)}}  \|P_{ac}(H)\|_{\mathcal{B}(\mathcal{H})}
\lesssim \|f\|_{L^{2p'}_t  L^{2q'}_x} \]
since $H_0$ satisfies the orthonormal Strichartz estimates by the assumption. Here we have used Lemma \ref{1113719} again.

Next we estimate the second term in the right hand side of the Duhamel formula as follows. First, we have
\begin{align*}
\| f \Gamma_{H_0} V U_{H} P_{ac}(H)\|_{\mathfrak{S}^{2\alpha'} _{x \rightarrow (t,x)} } 
&\lesssim \| f\Gamma_{H_0}Y^*\|_{\mathfrak{S}^{2\alpha'}_{t,x}} \|ZU_{H} P_{ac}(H)\|_{\mathcal{B}(\mathcal{H} \rightarrow L^2_{t,x})} \\
&\lesssim  \| f\Gamma_{H_0}Y^*\|_{\mathfrak{S}^{2\alpha'}_{t,x}}
.
\end{align*}
In the last line we have used the assumption that $ZP_{ac}(H)$ is $H$-smooth.
By Lemma \ref{2311152210}, we only need to estimate 
 \begin{gather*} 
\| f\widetilde{\Gamma_{H_0}}Y^*\|_{\mathfrak{S}^{2\alpha'}_{t,x}};\quad 
\widetilde{\Gamma_{H_0}} g(t)= \int_0^\infty e^{i(t-s)H_0} g(s) ds.
 \end{gather*}
Since
\begin{gather*}
f \widetilde{\Gamma_{H_0}} Y^* = fU_{H_0} \cdot  U_{H_0}^\dagger  Y^*;\quad  
U_{H_0}^\dagger g= \int_0^\infty e^{-isH_0} g(s) ds
\end{gather*}
and $U_{H_0}^\dagger$ is the formal adjoint of $U_{H_0}$, we have
\begin{align*}
\| f\widetilde{\Gamma_{H_0}}Y^*\|_{\mathfrak{S}^{2\alpha'}_{t,x}} 
&\lesssim \|fU_{H_0}\|_{\mathfrak{S}^{2\alpha'}_{x \rightarrow t,x}} \cdot \|U_{H_0}^\dagger Y^*\|_{\mathcal{B}(L^2_{t,x \rightarrow x})}
\lesssim  \|f\|_{L^{2p'}_t  L^{2q'}_x} \cdot \|YU_{H_0}\|_{\mathcal{B}(L^2_{x \rightarrow t,x})} 
\\&
\lesssim  \|f\|_{L^{2p'}_t  L^{2q'}_x}.
\end{align*}
Here in the second inequality, we have used the orthonormal Strichartz estimates for $H_0$ and Lemma \ref{1113719} and in the last line, we have used the assumption that Y is $H_0$-smooth.

Hence we have 
\[\|f U_{H} P_{ac}(H)\|_{\mathfrak{S}^{2\alpha'}_{x \rightarrow (t,x)}} \lesssim \|f\|_{L^{2p'}_t  L^{2q'}_x}.\]
By Lemma \ref{1113719} we obtain the desired estimates. 
\end{proof}

\begin{proof}[Proof of Corollary \ref{11161517}]
By Theorem \ref{1113700} and Theorem \ref{2311022111}, it suffices to show that $Y=|V|^{1/2} \sgn V$ is $-\Delta$-smooth and $Z=|V|^{1/2} P_{\ac} (H)$ is $H$-smooth. They follow from the next proposition. Note that $|V(x)| \lesssim \langle x \rangle ^{-2-\epsilon}$ holds.
\end{proof}

\begin{proposition}\label{11122216}

Under the condition of Corollary \ref{11161517}, $\langle x \rangle ^{-1-\epsilon} P_{\ac} (H)$ is $H$-smooth.
\end{proposition}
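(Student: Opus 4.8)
\emph{Proof strategy.} The plan is to obtain the $H$-smoothness of $A:=\langle x\rangle^{-1-\epsilon}P_{\ac}(H)$ from the resolvent criterion in Theorem~\ref{2311152150}. Since $A^{*}=P_{\ac}(H)\langle x\rangle^{-1-\epsilon}$ and $P_{\ac}(H)$ commutes with $(H-z)^{-1}$, the sufficient condition there reduces to the uniform weighted resolvent bound
\[ \sup_{z\in\C\setminus\R}\bigl\|\langle x\rangle^{-1-\epsilon}(H-z)^{-1}P_{\ac}(H)\langle x\rangle^{-1-\epsilon}\bigr\|_{\mathcal{B}(L^{2})}\lesssim1, \]
i.e.\ a limiting absorption principle for $H$ on $\mathcal{H}_{\ac}$ with weight $\langle x\rangle^{-1-\epsilon}$, uniform over all $z$ off the real axis. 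So everything reduces to this estimate, and I would prove it by partitioning $z$ into three regimes: $|z|$ large; $z$ in a fixed annulus $\{\delta_{0}\le|z|\le\delta_{0}^{-1}\}$; and $z$ in a small punctured disc about the origin.

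For $|z|$ large one uses the high-energy decay of the free resolvent: as $|V|\lesssim\langle x\rangle^{-2-\epsilon}$, the operator $\langle x\rangle^{1+\epsilon}V\langle x\rangle^{1+\epsilon}$ is bounded on $L^{2}$, while $\|\langle x\rangle^{-1-\epsilon}(H_{0}-z)^{-1}\langle x\rangle^{-1-\epsilon}\|\to0$ as $|z|\to\infty$; expanding $(H-z)^{-1}$ by the resolvent identity then gives a convergent Neumann series and closes the bound. For $z$ in the annulus I would invoke the classical Agmon--Kato--Kuroda limiting absorption principle: $V$ is short range (the decay $\langle x\rangle^{-2-\epsilon}$ is much more than needed), $H$ has no positive eigenvalues by Kato's theorem, and the relevant weighted multiplication operators by $V$ are relatively compact, so the boundary values of $\langle x\rangle^{-1-\epsilon}(H-z)^{-1}\langle x\rangle^{-1-\epsilon}$ exist and are norm-continuous for $\re z>0$; for $\re z\le0$ bounded away from $0$ one simply uses $\|(H-z)^{-1}P_{\ac}(H)\|_{\mathcal{B}(L^{2})}=\operatorname{dist}(z,[0,\infty))^{-1}$. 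A compactness and continuity argument then turns the pointwise bounds into a uniform one on the annulus.

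The delicate regime is $z$ near $0$, and this is where the hypothesis that $0$ is neither an eigenvalue nor a resonance of $H$ enters. I would use the symmetric resolvent identity
\[ (H-z)^{-1}=R_{0}(z)-R_{0}(z)|V|^{1/2}\bigl(I+M(z)\bigr)^{-1}U|V|^{1/2}R_{0}(z),\qquad M(z):=U|V|^{1/2}R_{0}(z)|V|^{1/2}, \]
with $R_{0}(z)=(-\Delta-z)^{-1}$ and $U=\sgn V$, sandwich it with $\langle x\rangle^{-1-\epsilon}$, and factor $|V|^{1/2}=\langle x\rangle^{-1-\epsilon/2}\cdot(\langle x\rangle^{1+\epsilon/2}|V|^{1/2})$ with the second factor bounded. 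The estimate is then reduced to (i) uniform boundedness near $z=0$ of the weighted free resolvents $\langle x\rangle^{-s}R_{0}(z)\langle x\rangle^{-s}$ for weights $s\ge1+\epsilon/2$, i.e.\ the Jensen--Kato zero-energy expansion of $R_{0}$, and (ii) uniform invertibility of $I+M(z)$ for $z$ in a punctured neighbourhood of $0$. For (ii), $M(z)$ is compact-operator valued and, after the Jensen--Kato normalisation (in low dimensions one must first subtract the explicit singular finite-rank part of $R_{0}$ at the threshold), continuous up to $z=0$; a nonzero element of $\ker(I+M(0))$ produces, via the identity, precisely a zero eigenfunction or a zero resonance of $H$, which the hypothesis forbids, so $I+M(0)$ is invertible and the Fredholm alternative together with continuity yields a uniform bound on $(I+M(z))^{-1}$ near $0$.

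Assembling the three regimes gives the uniform weighted resolvent bound, and Theorem~\ref{2311152150} then yields the $H$-smoothness of $\langle x\rangle^{-1-\epsilon}P_{\ac}(H)$. I expect step (ii) in dimensions $d=1,2$ to be the main obstacle: there $R_{0}(z)$ genuinely blows up as $z\to0$, so the symmetric identity must be regrouped, in the spirit of Jensen--Kato and Rodnianski--Schlag \cite{RS}, so that the singular contributions cancel against the explicit zero-energy structure of $(I+M(z))^{-1}$; it is precisely the absence of a zero resonance that makes this cancellation work, and keeping it uniform in $z$ is the technical heart of the argument.
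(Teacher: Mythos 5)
The paper gives no argument here: its ``proof'' is a single sentence citing Proposition~4.5 of Rodnianski--Schlag \cite{RS} together with the observation that $\sigma_{\ac}(H)=[0,\infty)$. What you have written is, in outline, exactly the argument that sits behind that citation: reduction via Kato's resolvent criterion (Theorem~\ref{2311152150}) to the uniform bound $\sup_{z\in\C\setminus\R}\|\langle x\rangle^{-1-\epsilon}(H-z)^{-1}P_{\ac}(H)\langle x\rangle^{-1-\epsilon}\|<\infty$, followed by a high-energy Neumann series, an Agmon--Kato--Kuroda limiting absorption argument on compact spectral intervals, and a symmetric resolvent identity with a Fredholm/Jensen--Kato analysis of $I+M(z)$ near $z=0$, the point being that ``$0$ is neither an eigenvalue nor a resonance'' forces $I+M(0)$ to be invertible. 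So in substance the two approaches coincide; yours is simply self-contained where the paper defers to the literature.

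Two remarks. First, a small bookkeeping slip in the high-energy step: since $|V|\lesssim\langle x\rangle^{-2-\epsilon}$, the bounded weighted multiplication operator is $\langle x\rangle^{1+\epsilon/2}V\langle x\rangle^{1+\epsilon/2}$ (equivalently $\langle x\rangle^{2+\epsilon}V$), not $\langle x\rangle^{1+\epsilon}V\langle x\rangle^{1+\epsilon}$ as written, which grows like $\langle x\rangle^{\epsilon}$; the Neumann series should therefore be factored with inner weights $\langle x\rangle^{-1-\epsilon/2}$ and outer weights $\langle x\rangle^{-1-\epsilon}$. This is easily repaired. Second, your flagged concern about $d=1,2$ is a genuine one: there the free resolvent $\langle x\rangle^{-s}R_0(z)\langle x\rangle^{-s}$ blows up as $z\to0$, so the zero-energy Fredholm step requires the full Jensen--Kato/Jensen--Nenciu regrouping rather than mere continuity of $M(z)$ up to $z=0$. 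Note, however, that this caveat applies equally to the paper's own citation, since Rodnianski--Schlag work in $\R^3$ while Corollary~\ref{11161517} is stated for $d\ge1$; it is a shared gap rather than one introduced by your proposal. For $d\ge3$ your sketch is a correct outline of the standard proof.
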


\begin{proof}

This is the same as Proposition 4.5 in \cite{RS} since $\sigma_{\ac}(H)=[0, \infty)$. So we omit the proof.
\end{proof}

\begin{proof}[Proof of Corollary \ref{ooo}]

As the previous proof,  it suffices to show that $Y=|V|^{1/2} \sgn V $ is $-\Delta$-smooth and 
$Z=|V|^{1/2} P_{\mathrm{ac}} (H)$ is $H$-smooth. However, they follow from Theorem 2.5 in \cite{BM}.
\end{proof}

Concerning the proof of Theorem \ref{1113026} (1), we prove more general results.
\begin{assump}\label{1113054}
$V \in L^1_{loc}(\R^d)$  is radial and satisfies the following
\begin{itemize}

\item $|x|^2 (x\cdot \nabla)^l V(x) \in L^{\infty}$ for $l =0, 1.$

\item There exists $\nu >0$ satisfying $|x|^2 V \ge - \frac{(d-1)^2}{4} +\nu $ $\quad$and $-|x|^2 (V+x\cdot \nabla V) \ge - \frac{(d-1)^2}{4} +\nu.$

\item $\langle (-\Delta +V)u, u \rangle \gtrsim -\|u\|^2 _2$ for $u \in C^{\infty}_0 (\R^d).$

\end{itemize}

\end{assump}

\begin{thm}\label{1113059}
Let $V$ as in Assumption \ref{1113054} and $H=-\Delta +V$. Then
 \[ \left\| \sum_{n=0}^ \infty{\nu_n|e^{-itH}P_{\rad}^{\perp} f_n|^2} \right\|_{L^p_t L^q_x} \lesssim \| \nu_n\|_{\ell^\alpha}\]
holds for $p, q, \alpha$ satisfying $(S1)$ or $(S2)$ . 
Here, $P_{\rad}$ and $P_{\rad}^{\perp}$ are from Theorem~\ref{1113026}.
\end{thm}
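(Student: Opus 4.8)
\noindent\textbf{Proof strategy for Theorem~\ref{1113059}.}
The plan is to reduce the estimate to the abstract perturbation theorem, Theorem~\ref{2311022111}, after restricting to the rotation-invariant subspace on which $H$ is subcritical. Since $V$ is radial, the closed subspace $\mathcal H':=P_{\rad}^{\perp}L^2(\R^d)$ is invariant under $H_0:=-\Delta$, under $H:=-\Delta+V$, and under multiplication by the radial functions $|V|^{1/2}$ and $\sgn V$; in particular $P_{\rad}^{\perp}$ commutes with $e^{-itH_0}$, $e^{-itH}$ and $|V|^{1/2}$. Writing $V=Y^{*}Z$ with $Y=|V|^{1/2}\sgn V$ and $Z=|V|^{1/2}$, I would run the proof of Theorem~\ref{2311022111} with the orthogonal projection $P_{\rad}^{\perp}$ inserted in place of $P_{ac}(H)$ throughout; this is legitimate because $P_{\rad}^{\perp}$ commutes with every operator occurring in the Duhamel expansion used there, and it produces exactly
\[
\bigl\|\rho\bigl(e^{-itH}P_{\rad}^{\perp}\gamma\,e^{itH}P_{\rad}^{\perp}\bigr)\bigr\|_{L^p_tL^q_x}\lesssim\|\gamma\|_{\mathfrak S^{\alpha}},
\]
which, taking $\gamma=\sum_n\nu_n|f_n\rangle\langle f_n|$, is the assertion of the theorem for $\alpha\in(1,\infty)$ (the remaining value $\alpha=1$ permitted by (S1)--(S2) is elementary: the orthonormal estimate then follows from the ordinary Strichartz estimate for $H$ on $\mathcal H'$ by the triangle inequality). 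The inputs this needs are: (a) the free orthonormal Strichartz estimates for $H_0$, which hold by Theorem~\ref{1113700}; (b) that $Y$ is $H_0$-smooth; and (c) that $|V|^{1/2}P_{\rad}^{\perp}$ is $H$-smooth. Input (b) is immediate: the first bullet of Assumption~\ref{1113054} gives $|V(x)|^{1/2}\lesssim|x|^{-1}$ and $|x|\,|V(x)|^{1/2}\in L^{\infty}$, so it suffices that $|x|^{-1}$ be $(-\Delta)$-smooth on $\R^d$ for $d\ge3$, i.e.\ (by Theorem~\ref{2311152150}) $\sup_{z\in\C\setminus\R}\||x|^{-1}(-\Delta-z)^{-1}|x|^{-1}\|_{\mathcal B(L^2)}\lesssim1$, which is the classical Kato--Yajima smoothing estimate.

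The heart of the matter is (c). Using $|V|^{1/2}=W|x|^{-1}$ with $W\in L^{\infty}$, that $P_{\rad}^{\perp}$ commutes with $H$ and with $|V|^{1/2}$, and the resolvent criterion in Theorem~\ref{2311152150}, (c) reduces to the uniform weighted resolvent bound $\sup_{z\in\C\setminus\R}\||x|^{-1}(H-z)^{-1}|x|^{-1}\|_{\mathcal B(\mathcal H')}\lesssim1$. Here I would decompose $\mathcal H'$ into spherical harmonic sectors, $\mathcal H'\cong\bigoplus_{\ell\ge1}L^2(\R_+,r^{d-1}\,dr)\otimes\mathcal H_{\ell}$, and conjugate the radial part of each sector by the unitary $f\mapsto r^{(d-1)/2}f$, which turns $H$ on the $\ell$-th sector into the half-line operator $L_{\ell}=-\partial_r^2+r^{-2}\bigl(c_{\ell}+r^2V(r)\bigr)$ on $L^2(\R_+,dr)$ with $c_{\ell}=\bigl(\ell+\tfrac{d-2}{2}\bigr)^2-\tfrac14$. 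The first two bullets of Assumption~\ref{1113054} say precisely that $c_{\ell}+r^2V(r)$ and its virial companion $c_{\ell}-r^2\bigl(V+x\cdot\nabla V\bigr)$ stay bounded below, \emph{uniformly in $\ell\ge1$}, by a constant strictly larger than $-\tfrac14$; the extremal sector is $\ell=1$, where $c_1-\tfrac{(d-1)^2}{4}=\tfrac{d-1}{2}>0$ is exactly the margin that the normalization $\tfrac{(d-1)^2}{4}$ in Assumption~\ref{1113054} is chosen to leave (and, being $\ge1>\tfrac34$ for $d\ge3$, it also gives essential self-adjointness of each $L_{\ell}$; semiboundedness comes from the third bullet). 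Thus each $L_{\ell}$ is a nonnegative, repulsive, subcritical inverse-square Schr\"odinger operator, for which a Morawetz-type multiplier estimate — pairing the flow $e^{-itL_{\ell}}$ with the commutator of $L_{\ell}$ against a bounded radial first-order multiplier adapted to the weight $|x|^{-1}$ — yields a positive quadratic form bounded below by a multiple of $\||x|^{-1}u\|_{L^2}^{2}$ in which the only $V$-dependent term is controlled by $\langle -r^2(V+x\cdot\nabla V)\,r^{-2}u,u\rangle$ and hence absorbed, with a quantitative gap supplied by $\nu$, the constants being independent of $\ell$. Reassembling the sectors and undoing the conjugation gives the required bound on $\mathcal H'$; as a byproduct the limiting absorption principle obtained this way shows $\sigma(H|_{\mathcal H'})$ is purely absolutely continuous.

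The main obstacle I expect is step (c): establishing the multiplier/Morawetz estimate with constants \emph{uniform in the angular parameter} $\ell$, and doing the constant bookkeeping precisely enough that Assumption~\ref{1113054} — in particular the coefficient $\tfrac{(d-1)^2}{4}$ and the gap $\nu$ — is seen to be exactly what the argument consumes. By contrast the reduction to Theorem~\ref{2311022111} via the commutation properties of $P_{\rad}^{\perp}$, together with (a) and (b), is routine given the machinery already developed in the paper.
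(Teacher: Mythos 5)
Your proposal follows the same overall reduction as the paper: run the abstract perturbation Theorem~\ref{2311022111} with $P_{\rad}^{\perp}$ inserted in place of $P_{\ac}(H)$ (legitimately, because $P_{\rad}^{\perp}$ commutes with $H_0$, $H$, $e^{\pm itH}$, $e^{\pm itH_0}$ and with multiplication by the radial functions $|V|^{1/2}$, $\sgn V$), and then supply (a) the free orthonormal Strichartz estimates, (b) $H_0$-smoothness of $Y=|V|^{1/2}\sgn V$, and (c) $H$-smoothness of $|V|^{1/2}P_{\rad}^{\perp}$. For (b) the paper routes through a Morrey--Campanato membership $|V|^{1/2}\sgn V\in M^{d,2\sigma}$ and Proposition~5.3 of \cite{BM}, whereas you invoke the classical Kato--Yajima estimate for the weight $|x|^{-1}$ directly; these give the same conclusion from the same input $|V|^{1/2}\lesssim|x|^{-1}$, so this is a cosmetic difference. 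The genuine divergence is in (c): the paper simply cites Proposition~\ref{1113101} (which in turn comes from \cite{M2}), while you sketch a first-principles proof by decomposing $P_{\rad}^{\perp}L^2(\R^d)$ into spherical-harmonic sectors, conjugating each to a half-line operator $L_\ell=-\partial_r^2+r^{-2}(c_\ell+r^2V)$, and proving a uniform-in-$\ell$ Morawetz/multiplier estimate whose margin is precisely the $\nu$ and $(d-1)^2/4$ in Assumption~\ref{1113054}. Your constant bookkeeping ($c_1-\tfrac{(d-1)^2}{4}=\tfrac{d-1}{2}\ge1$ for $d\ge3$) is correct and is exactly the mechanism underlying \cite{M2}. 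So you are in effect reconstructing the proof of the cited proposition rather than citing it; the structure is the same, your version is more self-contained but not fully detailed (you correctly flag the uniform multiplier estimate as the step needing care), while the paper's version delegates that work to \cite{M2}. One small extra in your write-up, which the paper leaves implicit, is the remark handling $\alpha=1$ (the trivial endpoint $(p,q)=(\infty,1)$) by the triangle inequality, since Theorem~\ref{2311022111} only covers $\alpha\in(1,\infty)$.
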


For the proof of Theorem \ref{1113059}, we use the next proposition (\cite{M2}).
\begin{proposition}\label{1113101}
Let $V, H$ as in Theorem \ref{1113059}. Then $|V|^{1/2} P_{\rad}^{\perp}$ is $H$-supersmooth. 
\end{proposition}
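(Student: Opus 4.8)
The plan is to establish the "supersmoothness" version of Kato's smoothing estimate for the operator $|V|^{1/2}P_{\rad}^{\perp}$ relative to $H=-\Delta+V$, where supersmoothness presumably means the stronger condition that $\sup_{z\in\C\backslash\R}\||V|^{1/2}P_{\rad}^{\perp}(H-z)^{-1}P_{\rad}^{\perp}|V|^{1/2}\|_{\mathcal{B}(\mathcal{H})}\lesssim 1$ (the condition appearing at the end of Theorem~\ref{2311152150}), which is exactly what feeds into the abstract perturbation Theorem~\ref{2311022111}. The natural route is to decompose $L^2(\R^d)$ into spherical harmonics, $L^2(\R^d)\cong\bigoplus_{\ell\ge 0}L^2((0,\infty),r^{d-1}dr)\otimes\mathcal{Y}_\ell$, under which $H$ restricted to $P_{\rad}^\perp$ (i.e.\ to $\ell\ge 1$) becomes a direct sum of half-line Schrödinger operators $H_\ell = -\partial_r^2 - \frac{d-1}{r}\partial_r + \frac{\ell(\ell+d-2)}{r^2} + V(r)$. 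The effective potential on each channel $\ell\ge 1$ inherits a centrifugal term bounded below by $\frac{(d-1)}{r^2}$ (from $\ell=1$), and combined with the hypothesis $|x|^2 V \ge -\frac{(d-1)^2}{4}+\nu$ one gets an effective inverse-square coefficient strictly above the critical Hardy threshold $-\frac14$ on the half-line. This is precisely the regime in which Mizutani's work \cite{M2} on the critical inverse square potential provides uniform resolvent (supersmoothing) bounds, so I would cite the relevant proposition there and check that Assumption~\ref{1113054} matches its hypotheses channel by channel.

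The key steps, in order: (i) set up the spherical harmonic decomposition and record that $P_{\rad}^\perp$ commutes with $H$, so it suffices to bound the resolvent uniformly on $\bigoplus_{\ell\ge1}$; (ii) on each channel write the reduced operator as a half-line Schrödinger operator with effective potential $V_{\mathrm{eff},\ell}(r)=\frac{\ell(\ell+d-2)+\frac{(d-1)(d-3)}{4}}{r^2}+V(r)$ after the usual Liouville substitution $u\mapsto r^{(d-1)/2}u$ removing the first-order term; (iii) verify that the coefficient of $r^{-2}$ is, uniformly in $\ell\ge1$, bounded below by $-\frac14+\nu'$ for some $\nu'>0$ — here the worst case is $\ell=1$ and this is where $|x|^2V\ge-\frac{(d-1)^2}{4}+\nu$ is used (note $1\cdot(1+d-2)+\frac{(d-1)(d-3)}{4}=\frac{(d-1)^2}{4}$, so the two constants combine exactly to give a margin $\nu$); (iv) similarly handle the repulsivity/virial condition $-|x|^2(V+x\cdot\nabla V)\ge-\frac{(d-1)^2}{4}+\nu$, which is the hypothesis needed for the Mourre/commutator or Morawetz-type argument underlying the supersmoothing bound; (v) invoke the uniform resolvent estimate from \cite{M2} (or reprove it via the standard ODE comparison and limiting absorption argument on the half-line, using the lower bound on the spectrum from the third bullet of Assumption~\ref{1113054} to rule out zero-energy and negative-energy obstructions) to conclude $\sup_z\||V|^{1/2}P_{\rad}^\perp(H-z)^{-1}P_{\rad}^\perp|V|^{1/2}\|<\infty$; (vi) translate supersmoothness of $|V|^{1/2}P_{\rad}^\perp$ into the statement that $Y=|V|^{1/2}\sgn V$ composed with $P_{\rad}^\perp$ is smoothing on both sides (for $H_0=-\Delta$ the free channel estimate is classical, and the short/long range splitting of $V$ via $Y^*Z$ proceeds as in Corollaries~\ref{11161517} and \ref{ooo}).

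The main obstacle I anticipate is step (iii)–(v): getting the resolvent bound \emph{uniform in the angular momentum} $\ell$ and \emph{uniform up to the real axis} simultaneously. The uniformity in $\ell$ is morally fine because larger $\ell$ only makes the effective potential more repulsive, so the $\ell=1$ channel is the genuinely critical one and all others are better — but making this rigorous requires a monotonicity argument for the half-line resolvent kernels in the inverse-square coefficient, or an $\ell$-independent Morawetz estimate. The uniformity up to the real axis is the standard difficulty of limiting absorption near the threshold $0$ and near $+\infty$; near $0$ one must exploit that the effective potential is super-critical inverse-square (so there is no zero resonance in the relevant channels) together with the spectral lower bound, and near infinity one uses the decay $|x|^2V\in L^\infty$ only weakly — in fact the inverse-square \emph{scale invariance} is what makes a global-in-energy bound plausible here, exactly as in \cite{BM}, \cite{M2}. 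If the clean ODE argument becomes unwieldy, the fallback is to cite \cite{M2} as a black box, which is what the excerpt's phrasing ("we use the next proposition (\cite{M2})") suggests the author intends.

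Once Proposition~\ref{1113101} is in hand, Theorem~\ref{1113059} (and hence Theorem~\ref{1113026}(1)) follows immediately: apply Theorem~\ref{2311022111} with $H_0=-\Delta$, $Y=|V|^{1/2}\sgn V$ acting after $P_{\rad}^\perp$ and $Z=|V|^{1/2}P_{\rad}^\perp$, using Theorem~\ref{1113700} for the free orthonormal Strichartz input and the free Kato smoothing of $|V|^{1/2}$ (which holds for $H_0=-\Delta$ under $|x|^2V\in L^\infty$ by the classical Kato–Yajima estimate, since $|V|^{1/2}\lesssim\langle x\rangle^{-1}$ on the support considerations, or more carefully using that $|V(x)|\lesssim|x|^{-2}$).
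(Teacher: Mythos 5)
The paper gives no proof of this proposition at all: it is simply cited from Mizutani \cite{M2}, as the phrase ``we use the next proposition (\cite{M2})'' indicates, and you correctly recognise this in your final fallback remark. Your reconstruction via spherical harmonic decomposition and Liouville reduction to half-line operators is the right conceptual picture of what underlies \cite{M2} --- in particular, the key observations that $P_{\rad}^\perp$ commutes with $H$, that $\ell=1$ is the critical channel, and that the two bullet points of Assumption~\ref{1113054} feed the uniform resolvent and virial/Morawetz parts of the argument respectively, are all on target, as is your identification of uniformity in $\ell$ and limiting absorption near threshold as the real work.

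However, the arithmetic at the heart of step (iii) is wrong: $1\cdot(1+d-2)+\tfrac{(d-1)(d-3)}{4}=(d-1)\cdot\tfrac{d+1}{4}=\tfrac{d^2-1}{4}$, \emph{not} $\tfrac{(d-1)^2}{4}$. Consequently the effective $r^{-2}$ coefficient on the $\ell\ge1$ channels is at least $\tfrac{d^2-1}{4}-\tfrac{(d-1)^2}{4}+\nu=\tfrac{d-1}{2}+\nu$, so the margin above the critical half-line threshold $-\tfrac14$ is $\tfrac{2d-1}{4}+\nu$, not ``exactly $\nu$.'' This does not damage the conclusion (the regime is even more comfortably supercritical than you say, and in fact the exact cancellation to $-\tfrac14$ occurs on $\ell=0$, which is precisely why the radial sector must be excluded and handled separately via the ground state representation), but the claim that the two constants ``combine exactly'' and that the hypothesis is tight for $\ell=1$ is incorrect and should be removed. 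Finally, step (vi) conflates two things: $H_0$-smoothness of $Y=|V|^{1/2}\sgn V$ is handled in the paper directly via the Morrey--Campanato bound from \cite{BM} (Proposition~5.3 there), not by ``translating'' the $H$-supersmoothness of $|V|^{1/2}P_{\rad}^\perp$; these are two independent inputs to Theorem~\ref{2311022111}.
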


\begin{proof}[Proof of Theorem \ref {1113059}]
By repeating the proof of Theorem \ref{2311022111} substituting $P_{\rad}^{\perp}$ for $P_{\ac}(H)$, it suffices to show that $|V|^{1/2} \sgn V$ is $-\Delta$-smooth and $|V|^{1/2}P_{\rad}^{\perp}$ is $H$-smooth. The latter follows from Proposition \ref{1113101}. For the former case, since $|V(x)|^{1/2} \lesssim |x|^{-1}$ holds by Assumption \ref{1113054}, $|V|^{1/2} \sgn V \in M^{d, 2\sigma} $ holds for some $ \sigma \in ( \frac{d-1}{2}, \frac{d}{2} ] $. Then by Proposition 5.3 in \cite{BM}, $|V|^{1/2} \sgn V$ is $-\Delta$-smooth.
\end{proof}

Next we prove Theorem \ref{1113026} (2). We use the ground state representation of $H=-\Delta -(d-2)^2|x|^{-2}/4$. The ground state representation is used to prove the Strichartz estimates in \cite{Su} for non-endpoint cases and in \cite {M2} for endpoint cases. Precisely speaking, in the latter paper, weak type estimates;
 \[ \|e^{-itH} P_{\rad}u\|_{L^{2}_t L^{2^* , \infty}_x} \lesssim \|u\|_2 , \quad   2^*=\frac{2d}{d-2} \] 
 are proved. Mizutani also proved that $L^{2^* , \infty}$ is sharp in the sense that for any $q \in [1, \infty)$, there exists a radial function $\psi \in L^2 (\R^d)$ such that $e^{-itH} \psi$ does not belong to $L^2_t L^{2^* , q}_x$. In \cite{MY1}, the ground state representation for higher order or fractional operators is used to prove the weak type Strichartz estimates.

\begin{lemma}(Ground state representation)\label{1113241}
Define the unitary operator $U$ as $U: L^2_{\rad} (\R^d) \rightarrow L^2_{\rad}(\R^2) , f \mapsto (\frac{\omega_d}{\omega_2})^{1/2} |x|^{\frac{d-2}{2}}f$. Then we have
\[ Hu= -U^* \Delta_{\R^2}Uu \quad for \quad u \in P_{rad}D(H).\]
Here $U^* f =(\frac{\omega_d}{\omega_2})^{-1/2} |x|^{-\frac{d-2}{2}}f$ and $\omega_{d} =|S^{d-1}|.$
\end{lemma}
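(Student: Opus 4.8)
The statement has two parts: that $U$ is a unitary from $L^2_{\rad}(\R^d)$ onto $L^2_{\rad}(\R^2)$, and that it conjugates $H$ into $-\Delta_{\R^2}$ on the radial sector. The plan is to dispose of unitarity by a one‑dimensional change of variables, to prove the operator identity first on a convenient core of smooth radial functions vanishing near the origin, and then to upgrade it to all of $P_{\rad}D(H)$ by a density/closure argument.

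For unitarity, I would write a radial $f$ on $\R^d$ as $f(x)=f(r)$, $r=|x|$, so that $\|f\|_{L^2(\R^d)}^2=\omega_d\int_0^\infty|f(r)|^2 r^{d-1}\,dr$; then $Uf(r)=(\omega_d/\omega_2)^{1/2}r^{(d-2)/2}f(r)$, regarded as a radial function on $\R^2$, has $\|Uf\|_{L^2(\R^2)}^2=\omega_2\int_0^\infty(\omega_d/\omega_2)\,r^{d-2}|f(r)|^2\,r\,dr=\|f\|_{L^2(\R^d)}^2$. Surjectivity is clear by inverting the weight, so $U$ is unitary and the stated formula for $U^*$ follows. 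Since $H$ and $-\Delta_{\R^2}$ both commute with rotations, they restrict to self-adjoint operators on the radial subspaces, which is what makes the statement meaningful.

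For the operator identity I would set $\beta=(d-2)/2$ and $c=(\omega_d/\omega_2)^{1/2}$, and let $u$ be radial in $C^\infty_0(\R^d\setminus\{0\})$, so $v:=Uu=c\,r^{\beta}u$ is radial in $C^\infty_0(\R^2\setminus\{0\})$. Writing the Laplacians in radial coordinates, $\Delta_{\R^d}=\partial_r^2+\tfrac{d-1}{r}\partial_r$ and $\Delta_{\R^2}=\partial_r^2+\tfrac1r\partial_r$, and substituting $u=c^{-1}r^{-\beta}v$, I would check that the first- and zeroth-order terms conspire precisely because of the choice $\beta=(d-2)/2$: the coefficient of $r^{-\beta-1}v'$ becomes $2\beta-(d-1)=-1$ and the coefficient of $r^{-\beta-2}v$ becomes $-\beta(\beta+1)+(d-1)\beta-\beta^2=0$, using that $(d-2)^2/4=\beta^2$. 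Hence $Hu=-c^{-1}r^{-\beta}\bigl(\partial_r^2+\tfrac1r\partial_r\bigr)v=-c^{-1}r^{-\beta}\Delta_{\R^2}v$, i.e. $U(Hu)=-\Delta_{\R^2}(Uu)$ for all such $u$. Equivalently one can run this at the level of quadratic forms: the same substitution, after one integration by parts (the boundary terms vanishing since $u$ is supported away from the origin), turns $\int_{\R^d}\bigl(|\nabla u|^2-\tfrac{(d-2)^2}{4}|x|^{-2}|u|^2\bigr)$ into $\int_{\R^2}|\nabla v|^2$, the critical constant being exactly what is needed for the algebra to close.

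Finally I would extend the identity from this core to $P_{\rad}D(H)$: since $U$ is unitary and intertwines the two (self-adjoint, rotation-invariant) operators on a common core, it intertwines their closures, which is the asserted identity $Hu=-U^*\Delta_{\R^2}Uu$ on $P_{\rad}D(H)$. The main obstacle is precisely this last domain point. At the critical Hardy coupling $(d-2)^2/4$ the operator $-\Delta-(d-2)^2|x|^{-2}/4$ is not essentially self-adjoint and $C^\infty_0(\R^d)$ is not even a form core, so $H$ must be taken to be the Friedrichs realization, and one must justify that radial functions in $C^\infty_0(\R^d\setminus\{0\})$ form a core for it (and that $U$ carries them to a core for $-\Delta_{\R^2}$ restricted to $L^2_{\rad}(\R^2)$). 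This is the standard ingredient behind the ground-state transform and can be taken from \cite{Su} and \cite{M2}.
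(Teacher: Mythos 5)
Your proposal is essentially the standard proof and is the right one: the paper itself gives no proof of this lemma but simply cites Proposition~3.1 of~[M2], where exactly this ground‐state transform is carried out. Your unitarity computation is correct, the choice $\beta=(d-2)/2$ does make the first‐order term match $\tfrac1r\partial_r$ and the zeroth‐order term vanish, and the form identity $\int_{\R^d}\bigl(|\nabla u|^2-\tfrac{(d-2)^2}{4}|x|^{-2}|u|^2\bigr)\,dx=\int_{\R^2}|\nabla(Uu)|^2\,dx$ (after an integration by parts whose boundary term vanishes for $u$ supported away from the origin, using $\omega_d c^{-2}=\omega_2$) checks out.

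Two small corrections are worth making. First, a sign slip: the coefficient of $r^{-\beta-1}v'$ inside the bracket that you factor $-c^{-1}$ out of is $d-1-2\beta=+1$, not $2\beta-(d-1)=-1$; it is $+1$ that is needed to reproduce $\tfrac1r\partial_r$ of $\Delta_{\R^2}$ (the final displayed identity you write is nevertheless correct). Second, and more substantively, the extension step is phrased in operator‐core language (``intertwines the two operators on a common core, hence intertwines their closures''), but that is precisely where the argument cannot run: radial $C^\infty_0(\R^2\setminus\{0\})$ is \emph{not} an operator core for $-\Delta_{\R^2}\big|_{L^2_{\rad}(\R^2)}$ — the endpoint $r=0$ is limit circle in two dimensions — and on the $H$-side essential self-adjointness on $C^\infty_0$ fails as well in low dimensions. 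The extension therefore has to be made at the quadratic-form level, where radial $C^\infty_0(\R^d\setminus\{0\})$ and its image under $U$ \emph{are} form cores (points have zero $H^1$-capacity for $d\ge 2$), and the unitary then identifies the Friedrichs realizations of the two forms. In particular, your parenthetical claim that $C^\infty_0(\R^d)$ ``is not even a form core'' for the Friedrichs realization is incorrect: the Friedrichs extension is \emph{defined} by closing the form on $C^\infty_0$, so it is a form core by construction; what fails is the operator core property and the identification of the form domain with $H^1$. Since you do flag this as the main obstacle and defer to [Su] and [M2], the gap is acknowledged rather than hidden, but the form-vs-operator distinction should be made explicit.
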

The proof of Lemma \ref{1113241} is omitted since it is proved in Proposition 3.1 in \cite{M2}. From this lemma, we have
\[e^{-itH}P_{\rad} = U^* e^{it\Delta_{\R^2}}UP_{\rad}.\]
See (3.2) in the proof of Proposition 3.1 in \cite{M2}.

We also use the following refined orthonormal Strichartz estimates.
\begin{lemma}\label{1113652}
Let $d=2$ and $p, q, \alpha$ as in $(S1)$ with $d=2$. Then the following estimate holds for all the orthonormal systems $\{f_n\}$ in $L^2 (\R^d).$
\[ \left\| \sum_{n=0}^ \infty{\nu_n|e^{it\Delta_{\R^2}}UP_{\rad}f_n|^2} \right\|_{L^p_t L^{q, p}_x} \lesssim \| \nu_n\|_{\ell^\alpha}.\] 
\end{lemma}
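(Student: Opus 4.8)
The plan is to deduce this refined (Lorentz-improved) orthonormal Strichartz estimate in $\R^2$ from the corresponding refined \emph{non-orthonormal} Strichartz estimate for the free propagator $e^{it\Delta_{\R^2}}$ by the duality principle of Frank--Sabin (Lemma \ref{1113719}), exactly in the spirit of the proof of Theorem \ref{2311022111}. First I would observe that $UP_{\rad}: L^2(\R^d)\to L^2_{\rad}(\R^2)$ is a bounded (indeed, up to the projection, isometric) operator, so by Lemma \ref{1113719} the claimed bound
\[
\left\| \sum_{n=0}^{\infty} \nu_n |e^{it\Delta_{\R^2}} U P_{\rad} f_n|^2 \right\|_{L^p_t L^{q,p}_x} \lesssim \|\nu_n\|_{\ell^{\alpha}}
\]
is equivalent to the Schatten bound
\[
\| g\, e^{it\Delta_{\R^2}} U P_{\rad} \|_{\mathfrak{S}^{2\alpha'}_{x\to(t,x)}} \lesssim \|g\|_{L^{2p'}_t L^{(q,p)'}_x},
\]
where I use the version of Lemma \ref{1113719} with $L^q_x$ replaced by the Lorentz space $L^{q,p}_x$ (the proof of that lemma is insensitive to this change, since it only uses real interpolation / duality of the target function space; $(L^{q,p}_x)' = L^{q',p'}_x$). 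Since $\|g\, e^{it\Delta_{\R^2}} U P_{\rad}\| \le \|g\, e^{it\Delta_{\R^2}}\|\cdot\|UP_{\rad}\|_{\mathcal B}$ in $\mathfrak{S}^{2\alpha'}_{x\to(t,x)}$, it suffices to prove the free Schatten estimate $\| g\, e^{it\Delta_{\R^2}}\|_{\mathfrak{S}^{2\alpha'}_{x\to(t,x)}} \lesssim \|g\|_{L^{2p'}_t L^{q',p'}_x}$.

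Next I would obtain this free Schatten estimate. By the standard $TT^*$/complex-interpolation machinery of Frank--Sabin (see \cite{FS}), for $(S1)$ with $d=2$ the required input is: (i) the trivial endpoint $\|g\,e^{it\Delta}\|_{\mathfrak{S}^2_{x\to(t,x)}} = \|g\|_{L^2_{t,x}}$, coming from the $L^2$ conservation and Plancherel; and (ii) the dispersive endpoint, namely the Hörmander--Stein oscillatory integral / fractional integration bound that upgrades the kernel estimate $|e^{it\Delta_{\R^2}}(x,y)|\lesssim |t|^{-1}$ into a Schatten-$\mathfrak S^{2q'}$ bound for the operator $g_1 e^{it\Delta}e^{-is\Delta}\bar g_2$ with a gain that is precisely of Lorentz type in the $x$-variable. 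The Lorentz refinement in (ii) is exactly the point where the Hardy--Littlewood--Sobolev inequality is replaced by its sharp Lorentz-space version $L^{r,1}\ast L^{1,\infty}\hookrightarrow L^{r',\infty}$-type mapping, which is what produces $L^{q,p}_x$ rather than $L^q_x$ on the Strichartz side; this is the refinement already present in \cite{FS} (and, for the single-function Strichartz estimate, classical). Interpolating the two endpoints along the line $2/p+2/q=2$ with the Stein--Weiss complex interpolation for Schatten norms yields (ii) for all admissible $(p,q,\alpha)$ in $(S1)$.

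The main obstacle I anticipate is bookkeeping rather than conceptual: one must check that Lemma \ref{1113719} and the interpolation arguments genuinely survive the replacement of $L^q_x$ by the Lorentz space $L^{q,p}_x$, i.e.\ that the dual exponent structure $(L^{q,p})' = L^{q',p'}$ and the real/complex interpolation identities used are all valid for $q\in[1,\frac{3}{1})=[1,3)$ (so $q<\infty$, and $p\ge 2$ so $p'\le 2\le p$, keeping the second Lorentz index in the admissible range); and that the Christ--Kiselev truncation (Lemma \ref{2311152210}), if needed to pass from $\widetilde{\Gamma}$ to $\Gamma$, does not interfere — here it actually is not needed since we work directly with $e^{it\Delta_{\R^2}}$ and no Duhamel expansion. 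A secondary point is to make sure $P_{\rad}$ commutes through cleanly: since $U$ intertwines $e^{-itH}P_{\rad}$ with $e^{it\Delta_{\R^2}}UP_{\rad}$ by Lemma \ref{1113241}, and $UP_{\rad}$ is bounded $L^2(\R^d)\to L^2(\R^2)$, all of this is automatic. Once the free Lorentz-refined Schatten bound is in hand, Lemma \ref{1113719} (Lorentz version) converts it back to the stated orthonormal estimate, completing the proof.
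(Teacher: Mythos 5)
Your proposal takes a genuinely different route from the paper, but it leaves a real gap at the crucial step.

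The paper's proof does \emph{not} pass through any Lorentz-refined Schatten estimate. It first proves the non-Lorentz orthonormal estimate
\[
\Bigl\| \sum_n \nu_n |e^{it\Delta_{\R^2}}UP_{\rad}f_n|^2 \Bigr\|_{L^p_t L^q_x} \lesssim \|\nu\|_{\ell^{\alpha}}
\]
by writing out the duality/trace argument of Lemma~\ref{1113719} directly (so as to accommodate the operator $A(t)=e^{it\Delta_{\R^2}}UP_{\rad}\colon L^2(\R^d)\to L^2(\R^2)$, which changes the underlying measure space), and then obtains the Lorentz improvement by \emph{real interpolation of the density estimate itself}: for a fixed orthonormal system the map $\nu\mapsto\rho$ is linear, and interpolating between two points of the $(S1)$ line gives
\[
(L^{p_0}_t L^{q_0}_x, L^{p_1}_t L^{q_1}_x)_{\theta,p}=L^p_t L^{q,p}_x,
\qquad
(\ell^{\alpha_0},\ell^{\alpha_1})_{\theta,p}=\ell^{\alpha,p}.
\]
The apparent Lorentz cost $\ell^{\alpha,p}$ on the input side is absorbed because $\alpha=\tfrac{2q}{q+1}\le p$ iff $q\le 3$, which holds throughout $(S1)$ when $d=2$. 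No new Schatten estimate is needed.

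Your proposal instead tries to prove a Lorentz-refined \emph{Schatten} estimate $\|g\,e^{it\Delta_{\R^2}}\|_{\mathfrak S^{2\alpha'}_{x\to(t,x)}}\lesssim\|g\|_{L^{2p'}_t L^{2q',2p'}_x}$ and then pass through a Lorentz version of Lemma~\ref{1113719}. The reduction via boundedness of $UP_{\rad}$ is fine, and the claim that Lemma~\ref{1113719} survives the replacement of $L^q_x$ by $L^{q,p}_x$ (using $|f|^{1/2}$-factorization and $(L^{q,p})'=L^{q',p'}$) is plausible. The gap is in the step you called ``(ii)'': you assert the Lorentz-refined free Schatten estimate follows from ``Stein--Weiss complex interpolation for Schatten norms'' plus a Lorentz version of HLS, but the Frank--Sabin machinery produces the Schatten bound $\|g\,e^{it\Delta}\|_{\mathfrak S^{2\alpha'}}\lesssim\|g\|_{L^{2p'}_t L^{2q'}_x}$ by \emph{complex} interpolation of an analytic family; complex interpolation of Lebesgue-indexed families does not of itself yield a Lorentz improvement in the $x$-variable. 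To get a Lorentz-in-$x$ Schatten bound you would need a genuinely different argument (something like a bilinear real-interpolation scheme in the spirit of Keel--Tao transplanted to the Schatten setting), and you do not indicate one. Note also that this Lorentz-refined Schatten bound is strictly stronger than what the paper needs --- it would give the statement with $\|\nu\|_{\ell^\alpha}$ without invoking the embedding $\ell^{\alpha}\hookrightarrow\ell^{\alpha,p}$ --- which is a further warning sign that it cannot be had ``for free.'' The fix is simply to interpolate at the density level, as the paper does, rather than at the Schatten level.
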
 

\begin{remark}\label{1113706}
By almost the same way as in the proof of Lemma \ref{1113652}, we can prove the following estimate:
\[ \left\| \sum_{n=0}^ \infty{\nu_n|e^{it\Delta}f_n|^2} \right\|_{L^p_t L^{q, p}_x} \lesssim \| \nu_n\|_{\ell^\alpha}.\]
for $d, p, q, \alpha$ satisfying $(S1)$. Note that this estimate is a refinement of Theorem \ref{1113700} if $q \ge 1+\frac{2}{d}$ since $q \ge p \Leftrightarrow q \ge 1+\frac{2}{d}.$
\end{remark}

\begin{proof}[Proof of Lemma \ref{1113652}]
Since $UP_{\rad}$ is bounded on $L^2 (\R^d)$,
\begin{align}\|fe^{it\Delta_{\R^2}}UP_{\rad}\|_{\mathfrak{S}^{2\alpha'}(L^2(\R^d) \rightarrow L^2_t L^2 (\R^2))} \lesssim \|fe^{it\Delta_{\R^2}}\|_{{\mathfrak{S}^{2\alpha'}}_{x \rightarrow t, x}} \lesssim \|f\|_{L^{2p'}_t L^{2q'}_x}   \quad   \label{11162203}
\end{align}
holds. In a way similar to Lemma \ref{1113719}, we have 
\begin{align} \left\| \sum_{n=0}^ \infty{\nu_n|e^{it\Delta_{\R^2}}UP_{\rad}f_n|^2} \right\|_{L^p_t L^{q}_x} \lesssim \| \nu_n\|_{\ell^\alpha}   
 \label{11162206}
\end{align}
as follows. Set $A(t) :=e^{it\Delta_{\R^2}}UP_{\rad}: L^2 (\R^d) \rightarrow L^2 (\R^2)$ and $A_{1}: L^2 (\R^d) \rightarrow L^{\infty} _t L^2 (\R^2) , u \mapsto A(t)u(x)$. Then $A_2 :L^1_t L^2 (\R^2) \rightarrow L^2 (\R^d) , f \mapsto \int_{\R} A(s)^* f(s) ds$ is the formal adjoint of $A_1$. Now for $\gamma \in \mathfrak{S}^{\alpha} (\R^d)$, 
\begin{align*}
\| \rho (A(t) \gamma A(t)^*)\|_{L^p_t L^q_x} 
&= \sup \left\{ \left|\int_{\R \times \R^2} \rho (A(t) \gamma A(t)^*)f(t, x) dtdx\right| \mid \|f\|_{L^{p'}_t L^{q'}_x} \le1 \right\} \\
& =\sup \left\{ \left|\int_{\R} \tr (A(t) \gamma A(t)^*f(t)) dt\right| \mid \|f\|_{L^{p'}_t L^{q'}_x} \le1 \right\} \\
& =\sup \left\{ \left|\int_{\R} \tr ( \gamma A(t)^*f(t) A(t)) dt\right| \mid \|f\|_{L^{p'}_t L^{q'}_x} \le1 \right\} \\
& =\sup \left\{ \left| \tr ( \gamma A_2 f(t) A_1) \right| \mid \|f\|_{L^{p'}_t L^{q'}_x} \le1 \right\} \\
& \lesssim \| \gamma \|_{\mathfrak{S}^{\alpha}} \cdot \sup \left\{ \|A_2 f(t) A_1)\|_{\mathfrak{S}^{\alpha'}} \mid \|f\|_{L^{p'}_t L^{q'}_x} \le1 \right\} \\
& \lesssim \| \gamma \|_{\mathfrak{S}^{\alpha}} \cdot \sup \left\{ \| f(t)^{\frac{1}{2}} A_1)\|_{\mathfrak{S}^{2\alpha'} (L^2 (\R^d) \rightarrow L^2_t L^2 (\R^2))} \mid \|f\|_{L^{p'}_t L^{q'}_x} \le1 \right\} \\
& \lesssim \| \gamma \|_{\mathfrak{S}^{\alpha}} .
\end{align*}
Here in the third line, we have used $\tr(ST)=\tr(TS)$ for $S: \mathfrak{H} \rightarrow \mathfrak{K}$ and $T: \mathfrak{K} \rightarrow \mathfrak{H}$, in the fifth line we have used H\"{o}lder's inequality for Schatten spaces, in the sixth line we have used the fact that $A_2$ is the formal adjoint of $A_1$
and in the last line (\ref{11162203}) is used. Thererfore by taking $\gamma = \sum_{n=0}^{\infty} \nu_n |f_n \rangle \langle f_n |$, we have (\ref{11162206}).

Next we use the following results for real interpolation.
\begin{itemize}

\item $(\ell^{p_0}, \ell^{p_1})_{\theta, q} = \ell^{p, q}$. Here $1/p = (1-\theta)/p_0 + \theta/p_1$

\item $(L^{p_0}_t L^{q_0}_x , L^{p_1}_t L^{q_1}_x)_{\theta, p} = L^p _t L^{q, p}_x$. Here $1/p = (1-
\theta)/p_0 + \theta/p_1$ and $1/q = (1-\theta)/q_0 + \theta/q_1$

\end{itemize}
By interpolating (\ref{11162206}) for $(p_j , q_j)$, $j=0,1$, we have 
\[ \left\| \sum_{n=0}^ \infty{\nu_n|e^{it\Delta_{\R^2}}UP_{\rad}f_n|^2} \right\|_{L^p_t L^{q, p}_x} \lesssim \| \nu_n\|_{\ell^{\frac{2q}{q+1}, p}} \]
since $\alpha =\frac{2q}{q+1}$. Since $\frac{2q}{q+1} \le p \Leftrightarrow q \le 3$ and this is true under $(S1)$, we have the desired estimate.
\end{proof}

\begin{proof}[Proof of Theorem \ref{1113026} (2)]
By changing into polar coordinates, we obtain
\begin{align*}
 \left\| \sum_{n=0}^ \infty{\nu_n|e^{-itH}P_{\rad} f_n|^2} \right\|_{L^p_t L^q_x} & \lesssim  \left\| r^{\frac{d-2}{q}}\sum_{n=0}^ \infty{\nu_n|e^{-itH}P_{\rad} f_n|^2} \right\|_{L^p_t L^q (\R^2_x)} \\
& = \left\| r^{\frac{d-2}{q} -(d-2)}\sum_{n=0}^ \infty{\nu_n|e^{it\Delta_{\R^2}}UP_{\rad} f_n|^2} \right\|_{L^p_t L^q (\R^2_x)}.
\end{align*}
Set $\tilde{q} =p'$ and $1/r = \frac{d-2}{2} (1-1/q)$. Then by H\"{o}lder's inequality, we have
\begin{align*}
\left\| r^{\frac{d-2}{q} -(d-2)}\sum_{n=0}^ \infty{\nu_n|e^{it\Delta_{\R^2}}UP_{\rad} f_n|^2} \right\|_{L^q (\R^2_x)} & \lesssim \left\|\sum_{n=0}^ \infty{\nu_n|e^{it\Delta_{\R^2}}UP_{\rad} f_n|^2} \right\|_{L^{\tilde{q}, q} (\R^2_x)} \cdot \left\| r^{\frac{d-2}{q} -(d-2)} \right\|_{L^{r, \infty} (\R^2)} \\
& \lesssim \left\|\sum_{n=0}^ \infty{\nu_n|e^{it\Delta_{\R^2}}UP_{\rad} f_n|^2} \right\|_{L^{\tilde{q}, q} (\R^2_x)}.
\end{align*}
From now on, we assume $p \le q \Leftrightarrow 1+ \frac{2}{d} \le q < 1+ \frac{2}{d-1}$ holds. Under the assumption in Theorem \ref{1113026} (2), we also have $1 \le \tilde{q} <3 \Leftrightarrow q< \frac{3d}{3d-4}$. Note that $\frac{3d}{3d-4} < 1+\frac{2}{d-1} \Leftrightarrow d \ge 3$ and $1+ \frac{2}{d} < \frac{3d}{3d-4} \Leftrightarrow d \le 3$ hold. Hence we have
\begin{align*}
\left\|\sum_{n=0}^ \infty{\nu_n|e^{it\Delta_{\R^2}}UP_{\rad} f_n|^2} \right\|_{L^p_t L^{\tilde{q}, q} (\R^2_x)} & \lesssim \left\|\sum_{n=0}^ \infty{\nu_n|e^{it\Delta_{\R^2}}UP_{\rad} f_n|^2} \right\|_{L^p_t L^{\tilde{q}, p} (\R^2_x)} \\
& \lesssim \|\nu\|_{\ell^{\frac{2\tilde{q}}{\tilde{q}+1}}}.
\end{align*}
In the last inequality, we have used Lemma \ref{1113652} since $2/p + 2/\tilde{q} =2$ holds.
Under our assumption, note that $\frac{2\tilde{q}}{\tilde{q}+1} \le \frac{2{q}}{{q}+1}$ holds. Therefore we have
\[ \left\| \sum_{n=0}^ \infty{\nu_n|e^{-itH}P_{\rad} f_n|^2} \right\|_{L^p_t L^q_x} \lesssim  \|\nu\|_{\ell^{\frac{2{q}}{{q}+1}}}.\]
Finally by interpolating this estimate and the trivial estimate:
\[ \left\| \sum_{n=0}^ \infty{\nu_n|e^{-itH}P_{\rad} f_n|^2} \right\|_{L^1_t L^{\infty}_x} \lesssim  \|\nu\|_{\ell^1},\]
we have the desired estimates for all $d, p, q$ satisfying the assumption in Theorem \ref{1113026} (2).
\end{proof}

For the proof of Theorem \ref{11102255}, we use the following proposition (\cite{EGS}).

\begin{proposition}[\cite{EGS}]\label{11122214}
Let $d, A, V, H$ be as in Theorem \ref{11102255}. Then we have the following smoothing estimates:
\begin{align}
& \| \langle x \rangle^{-\sigma} |D|^{1/2} e^{-itH}P_{\ac}(H)u\|_{L^2_tL^2_x} \lesssim \|u\|_2 \quad for \quad \sigma>1/2 \label{11162226} \\
& \| \langle x \rangle^{-\sigma} \langle D \rangle^{1/2} e^{-itH}P_{\ac}(H)u\|_{L^2_tL^2_x} \lesssim \|u\|_2 \quad for \quad \sigma>1 \label{11162227}.
\end{align}

\end{proposition}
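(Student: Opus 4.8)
The plan is to read \eqref{11162226} and \eqref{11162227} as Kato smoothing estimates and reduce them, via Theorem~\ref{2311152150}. Applying that theorem on $\mathcal{H}=L^2(\R^d)$ with $A=\langle x\rangle^{-\sigma}|D|^{1/2}P_{\ac}(H)$ (and with $\langle D\rangle^{1/2}$ in place of $|D|^{1/2}$ for the second estimate), the claim reduces to the uniform weighted resolvent bounds
\[ \sup_{z\in\C\setminus\R}\bigl\| \langle x\rangle^{-\sigma}|D|^{1/2}\,P_{\ac}(H)(H-z)^{-1}P_{\ac}(H)\,|D|^{1/2}\langle x\rangle^{-\sigma}\bigr\|_{\mathcal{B}(\mathcal{H})}\lesssim 1 \]
together with its $\langle D\rangle^{1/2}$ analogue. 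First I would recall the free case $H_0=-\Delta$: the sandwiched free resolvent $\langle x\rangle^{-\sigma}|D|^{1/2}(H_0-z)^{-1}|D|^{1/2}\langle x\rangle^{-\sigma}$ is bounded uniformly in $z$ for $\sigma>1/2$ (the classical Kato--Yajima estimate), whereas replacing $|D|^{1/2}$ by $\langle D\rangle^{1/2}$ destroys the low-frequency gain and forces $\sigma>1$; this already accounts for the two ranges of $\sigma$ in the statement.

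Next I would pass from $H_0$ to $H=H_0+W$, where $W=A\cdot D+D\cdot A+V$ (plus the lower order term $|A|^2$ in the case $H=(D+A)^2+V$), via the resolvent identity $(H-z)^{-1}=(H_0-z)^{-1}-(H_0-z)^{-1}W(H-z)^{-1}$ and a Neumann series / Fredholm argument for the Birman--Schwinger-type operator $I+(\text{sandwiched }W\text{-resolvent})$. The hypotheses are tuned precisely so that, after distributing the weights $\langle x\rangle^{-\sigma}$, the resulting composed operators are bounded on $L^2$: the scalar part is dominated by $\langle x\rangle^{-2-\epsilon}$ and treated as in the scalar very-short-range case (cf.\ Proposition~\ref{11122216}), while the first order magnetic part is split as $A\cdot D=(\langle x\rangle^{1+\epsilon'}A)\cdot(\langle x\rangle^{-1-\epsilon'}D)$, so that one of the two half-derivatives produced by the smoothing absorbs $D^{1/2}$-worth of the regularity $\langle x\rangle^{1+\epsilon'}A\in\dot{W}^{1/2,2d}$ and the surplus decay $\epsilon-\epsilon'>0$ closes the bound. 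Uniform invertibility of the Fredholm operator down to the threshold $z=0$ is where the assumption that zero is neither an eigenvalue nor a resonance of $H$ enters, through the limiting absorption principle; high energies are handled by the decay of the sandwiched free resolvent as $|z|\to\infty$; and $P_{\ac}(H)$ removes the finitely many negative $L^2$-eigenvalues, so that only the absolutely continuous part, on which the resolvent bound holds, survives.

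The hard part is the first order magnetic term $A\cdot D+D\cdot A$: unlike a potential it gains no decay in frequency, so the perturbative scheme closes only because the two factors $|D|^{1/2}$ flanking the resolvent combine to exactly one derivative, matching the order of $W$ — and this balance must be maintained uniformly in $z$, especially near $z=0$ where the free resolvent is most singular and where the $\dot{W}^{1/2,2d}$-regularity of the weighted potential is indispensable. Making this rigorous, including the threshold analysis and the high-energy resolvent estimates, is exactly what is carried out in \cite{EGS}; I would therefore invoke that paper, the present statement being the specialization of its smoothing estimates to the hypotheses of Theorem~\ref{11102255}.
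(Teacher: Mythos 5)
The paper gives no proof of this proposition: it is imported verbatim from \cite{EGS}, and your sketch, which ultimately invokes \cite{EGS} as well, is in that sense the same approach. Your outline of what \cite{EGS} actually does — reduce via Theorem~\ref{2311152150} to a uniform bound on $\langle x\rangle^{-\sigma}|D|^{1/2}P_{\ac}(H)(H-z)^{-1}P_{\ac}(H)|D|^{1/2}\langle x\rangle^{-\sigma}$, start from the free Kato--Yajima estimate, iterate the resolvent identity with a Fredholm/Birman--Schwinger argument, use the regularity $\langle x\rangle^{1+\epsilon'}A\in\dot W^{1/2,2d}$ to absorb the first-order magnetic term, and use the zero non-eigenvalue/non-resonance hypothesis for the threshold and $P_{\ac}(H)$ for the negative spectrum — is a fair and accurate summary of the ingredients. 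The one place the explanation is slightly off is the claim that $\sigma>1$ in \eqref{11162227} comes from the \emph{free} low-frequency behaviour: for $d\ge 3$ the weighted free resolvent $\langle x\rangle^{-\sigma}(H_0-z)^{-1}\langle x\rangle^{-\sigma}$ is already uniformly bounded for $\sigma>1/2$, so the inhomogeneous $\langle D\rangle^{1/2}$ alone does not force $\sigma>1$; the stronger weight is needed for the \emph{perturbed} threshold analysis near $z=0$ in \cite{EGS}. This does not affect the correctness of the statement or of your eventual appeal to \cite{EGS}.
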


\begin{proof}[Proof of Theorem \ref{11102255}]
The proof follows the same line as \cite{EGS}. Since the case of $H = (D +A(x))^2 +V(x)$ is just the same as the case of $H = -\Delta +A(x)\cdot D + D\cdot A(x) + V(x)$, we only consider $H = -\Delta +A(x)\cdot D + D\cdot A(x) + V(x)$. First, we decompose 
\[H= - \Delta + \sum_{j=1}^{3} Y_j^*Z_j .\] Here 
\begin{gather*} 
\ Y_3 =|V|^{1/2}\sgn V, \quad  Z_3 = |V|^{1/2}, \quad Y_1^* = Aw^{-1}D|D|^{-1/2}, \\
\ Z_1 =|D|^{1/2}w, \quad  Y_2 =Z_1, \quad Z_2 =Y_1 \quad and \quad  w= \langle x \rangle^{- \tau} \quad for \quad \tau \in(1/2, 1/2 +\epsilon').
\end{gather*}
By Theorem \ref{2311022111}, it suffices to show that $Y_j$ are $-\Delta$-smooth and $Z_j P_{\ac}(H)$ are $H$-smooth. 

(1) $H_0$-smoothness of $Y_j$

\noindent First, $H_0$-smoothness of $Y_3$ follows from Proposition \ref{11122216} for $V=0$. For $Y_2$, we decompose
\[ |D|^{1/2}w= (|D|^{1/2}w|D|^{-1/2}w^{-1+\eta}) \cdot w^{1-\eta}|D|^{1/2}\]
for sufficiently small $\eta$. The first factor is bounded on $L^2$ by Lemma 6.2 in \cite{EGS}. The second factor is $H_0$-smooth by Proposition \ref{11122214} (\ref{11162226}) for $A=V=0$. Therefore $Y_2$
is $H_0$-smooth. For $Y_1$, we decompose
\[ Y_1 =\frac{D}{|D|}\cdot |D|^{1/2}w^{-1}Aw^{-1}|D|^{-1/2}\cdot Y_2.\]
The first factor is bounded by the H\"{o}rmander-Mikhlin theorem. For the second factor, by the fractional Leibniz rule, we have 
\[ \| |D|^{1/2}w^{-2}A|D|^{-1/2}u\|_2 \lesssim \|w^{-2}A\|_{\infty} \|u\|_2 + \||D|^{1/2} w^{-2}A\|_{2d}\||D|^{-1/2}u\|_{\frac{2d}{d-1}} \lesssim \|u\|_2.\]
In the last inequality, we have used the Sobolev embedding and $w^{-2}A \in \dot{W}^{1/2, 2d}$ by the assumption. By the $H_0$-smoothness of $Y_2$, $Y_1$ is also $H_0$-smooth.

(2) $H$-smoothness of $Z_j P_{\ac}(H)$

\noindent $H$-smoothness of $Z_3 P_{ac}(H)$ follows from Proposition \ref{11122214} (\ref{11162227}) and the boundedness of $\langle x \rangle^{-\sigma} \langle D \rangle^{-1/2} \langle x \rangle^{\sigma}$ for $\sigma >1$. For $Z_1 P_{\ac}(H)$, we decompose
\[ Z_1 P_{\ac}(H) = (|D|^{1/2}w|D|^{-1/2} \langle x \rangle^{\sigma})\cdot \langle x \rangle^{-\sigma}|D|^{1/2} P_{\ac}(H)\]
taking $\sigma > \frac{1}{2}$ sufficiently close to $\frac{1}{2}$. The first factor is bounded by Lemma 6.2 in \cite{EGS} and the second factor is $H$-smooth by Proposition \ref{11122214} (\ref{11162226}). Hence $Z_1 P_{ac}(H)$ is $H$-smooth. For $Z_2 P_{\ac}(H)$, we have
\[ Z_2 P_{\ac}(H) = \frac{D}{|D|} \cdot (|D|^{1/2}Aw^{-1} \langle x \rangle^{1+\epsilon' -\tau}|D|^{-1/2})\cdot (|D|^{1/2} \langle x \rangle^{\tau - (1+ \epsilon')} |D|^{-1/2} \langle x \rangle^{\sigma}) \cdot \langle x \rangle^{-\sigma} |D|^{1/2}P_{\ac}(H).\]
Since the first, second and third factors are bounded by the same reason and the last factor is $H$-smooth by Proposition \ref {11122214} (\ref{11162226}), $Z_2 P_{\ac}(H)$ is $H$-smooth.
\end{proof}

\begin{proof}[Proof of Corollary \ref{jjj}]
The proof is similar to Corollary 9 in \cite{FS}.
By the Littlewood-Paley theorem for $H$ (Proposition 2.9 in \cite{M1}), we have
\begin{align*} 
\|e^{-itH}P_{\ac}(H)u\|_{L^{2p}_t L^{2q}_x} ^2
&\lesssim \left\|\sum_{n \in \Z} |e^{-itH}P_{\ac}(H) \phi_n (H)u|^2\right\|_{L^p_t  L^q_x}\\
&\lesssim \left\|\sum_{n \in \Z} |e^{-itH}P_{\ac}(H) \phi_{2n} (H)u|^2\right\|_{L^p_t  L^q_x} + \left\|\sum_{n \in \Z} |e^{-itH}P_{\ac}(H) \phi_{2n+1} (H)u|^2\right\|_{L^p_t  L^q_x} \\
&= \| \rho (e^{-itH}P_{\ac}(H) \gamma_e e^{itH}P_{\ac}(H))\|_{L^p_t L^q_x} + \| \rho (e^{-itH}P_{\ac}(H) \gamma_o e^{itH}P_{\ac}(H))\|_{L^p_t L^q_x}  .   
\end{align*}
Here $\phi_n$ is the function appearing in the definition of the Besov spaces and $\gamma_e$ and $\gamma_o$ are defined as follows:
\[
\gamma_e = \sum_{n \in \Z} |\phi_{2n}(H)u \rangle \langle \phi_{2n}(H)u|,
\quad 
\gamma_o = \sum_{n \in \Z} |\phi_{2n+1}(H)u \rangle \langle \phi_{2n+1}(H)u|.
\]
Then by Corollary \ref{11161517}, Corollary \ref{ooo}, we have
\begin{align*}
\| \rho (e^{-itH}P_{\ac}(H) \gamma_e e^{itH}P_{\ac}(H))\|_{L^p_t L^q_x} &+ \| \rho (e^{-itH}P_{\ac}(H) \gamma_o e^{itH}P_{\ac}(H))\|_{L^p_t L^q_x} \\
&\lesssim \|\gamma_e\|_{\mathfrak{S}^{\alpha}} + \|\gamma_o\|_{\mathfrak{S}^{\alpha}} \\
&= \| \|\phi_{2n}(H) u\|^2_2\|_{\ell^{\alpha}} + \| \|\phi_{2n+1}(H) u\|^2_2\|_{\ell^{\alpha}} \\
&\lesssim \| \|\phi_{n}(H) u\|^2_2\|_{\ell^{\alpha}}.
\end{align*} 
Hence
\[  \|e^{-itH}P_{\ac}(H)u\|_{L^{2p}_t L^{2q}_x} \lesssim \| \|\phi_{n}(H) u\|_2\|_{\ell^{2\alpha}} .\]
The right hand side is equivalent to the Besov norm $\| \cdot \|_{\dot{B}^0_{2,\frac{4q}{q+1}}}$ by Proposition 3.5 in \cite{IMT} since $2\alpha = \frac{4q}{q+1}$.
So we have the desired estimates.
\end{proof}
Before proving Corollary \ref{11191637}, we need some lemmas. First, we give the definition of the Kato class.
\begin{defn}\label{11191957}
Let $d \ge 3$, $V: \R^d \rightarrow \R$. Then $V \in K_{d}$ iff $\lim_{r \rightarrow 0} \sup_{x \in \R^d} \int_{|x-y| <r} \frac{|V(y)|}{|x-y|^{d-2}} dy =0$ holds. $K_d$ is called the Kato class. The Kato norm is defined by $\|V\|_{K_d} := \sup_{x \in \R^d} \int_{|x-y| <1} \frac{|V(y)|}{|x-y|^{d-2}} dy$.
\end{defn}
\noindent To prove the Littlewood-Paley theorem for the magnetic Schr\"odinger operator, we use the following estimate for spectral multipliers.
\begin{thm}[\cite{CD}]\label{11192048}
Let $H$ be a nonnegative self-adjoint operator on $L^2 (\R^d)$. Assume the integral kernel of $e^{-tH}$ satisfies the Gaussian upper bound:
\[ |e^{-tH} (x, y)| \lesssim t^{-d/2} e^{-|x-y|^2 /At}\]
for some $A>0$. Let $\phi \in C^{\infty} _0, \supp \phi \subset (1/2, 2)$ be as in the definition of the Besov space. If $g: \R_{>0} \rightarrow \C$ satisfy $\sup_{t>0} \| \phi (\cdot)g(t \cdot)\|_{H^{s}} < \infty$ for some $s> \frac{d+1}{2}$, 
\[\|g(\sqrt{H})u\|_{p} \lesssim (p + \frac{1}{p-1})\|u\|_{p}\]
holds for $p \in (1, \infty)$.
\end{thm}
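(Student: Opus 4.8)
The plan is to prove this as a Mikhlin--H\"{o}rmander type spectral multiplier theorem via the classical Calder\'{o}n--Zygmund route: reduce the $L^p$ bound to a weak type $(1,1)$ estimate together with the trivial $L^2$ bound, and track the $p$-dependence of the constants carefully. Decompose $g(\sqrt H)=\sum_{j\in\Z}(g\phi_j)(\sqrt H)=:\sum_j F_j$ using the same homogeneous Littlewood--Paley partition $\phi_j(\lambda)=\phi(2^{-j}\lambda)$ that defines the Besov spaces; then each $g\phi_j$ is supported in the dyadic annulus $[2^{j-1},2^{j+1}]$ and, after rescaling, $\|(g\phi_j)(2^j\cdot)\|_{H^s}\lesssim\sup_{t>0}\|\phi(\cdot)g(t\cdot)\|_{H^s}=:C$ uniformly in $j$. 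By the spectral theorem and the embedding $H^s\hookrightarrow L^\infty$ one has $\|g(\sqrt H)\|_{L^2\to L^2}=\|g\|_{L^\infty}\lesssim C$, so the heat kernel hypothesis is only needed for the endpoint $p=1$.

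First I would convert the pointwise Gaussian bound into $L^2$ off-diagonal estimates: for closed sets $E,F$ with $\mathrm{dist}(E,F)=\rho$, $\|\mathbf{1}_E e^{-tH}\mathbf{1}_F\|_{L^2\to L^2}\lesssim e^{-\rho^2/(ct)}$, and in particular $\|\mathbf{1}_E e^{-tH}\mathbf{1}_F\|_{L^1\to L^2}\lesssim t^{-d/4}e^{-\rho^2/(ct)}$, all obtained by integrating the kernel bound. The core single-scale lemma is then: for $m$ supported in $[1,4]$ and any $\sigma<s-\tfrac12$,
\[
\int_{\R^d}\bigl|K_{m(t\sqrt H)}(x,y)\bigr|^2\,\bigl(1+t^{-1}|x-y|\bigr)^{2\sigma}\,dx\;\lesssim\;t^{-d}\,\|m\|_{H^s}^2 .
\]
This is proved by factoring $m(t\sqrt H)=\widetilde m(t\sqrt H)\,e^{-t^2H}$, where $\widetilde m(\lambda)=m(\lambda)e^{\lambda^2}$ is again supported in $[1,4]$ with comparable $H^s$ norm, representing $\widetilde m(t\sqrt H)$ through Fourier inversion and the finite speed of propagation of $\cos(\tau\sqrt H)$ (which, by Sikora's theorem, is equivalent to the Gaussian upper bound), and combining the Sobolev tail bound $\int_{|\tau|>R}|\widehat m(\tau)|\,d\tau\lesssim R^{1/2-s}\|m\|_{H^s}$ with the $L^1\to L^2$ off-diagonal bound on the heat factor. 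The loss of half a derivative here is exactly why the admissible weight exponent is $\sigma<s-\tfrac12$; rescaling gives the same estimate for each $K_{F_j}$ with $t=2^{-j}$, and it is crucial that $s>\tfrac{d+1}{2}$, so that some admissible $\sigma$ exceeds $d/2$.

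Next I would pass to weak type $(1,1)$. One may either invoke the abstract singular-integral theorem for operators with $L^2$ off-diagonal (Davies--Gaffney) bounds (Duong--McIntosh, Blunck--Kunstmann, Duong--Ouhabaz--Sikora), which turns the uniform single-scale bound with $\sigma>d/2$ directly into $\|g(\sqrt H)\|_{L^1\to L^{1,\infty}}\lesssim C$; or verify by hand the H\"{o}rmander integral condition for $K:=\sum_jK_{F_j}$, splitting the $j$-sum at $2^j\sim|y-y'|^{-1}$ and using the $\sigma>d/2$ kernel decay on the high-frequency block and a mean-value/off-diagonal step (gaining a factor $\sim|y-y'|2^j$, reducing to the same single-scale bounds) on the low-frequency block, each block summing geometrically to $\lesssim C$. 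Then Marcinkiewicz interpolation between the weak $(1,1)$ and the $L^2$ bounds gives $\|g(\sqrt H)\|_{L^p\to L^p}\lesssim C\,(p-1)^{-1}$ for $1<p\le2$; since $e^{-tH}$ is self-adjoint, its kernel bound is symmetric, so $g(\sqrt H)^{\ast}=\bar g(\sqrt H)$ satisfies the same hypotheses and duality gives $\|g(\sqrt H)\|_{L^p\to L^p}=\|\bar g(\sqrt H)\|_{L^{p'}\to L^{p'}}\lesssim C\,p$ for $2\le p<\infty$; combining the two ranges yields the asserted bound $\lesssim C\,(p+\tfrac1{p-1})$.

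The main obstacle is the single-scale kernel estimate: starting from only an \emph{upper} Gaussian bound --- no on-diagonal lower bound or Plancherel condition being available --- one is forced through the finite-propagation-speed machinery for $\cos(\tau\sqrt H)$ and a careful bookkeeping of the Fourier support of $m$, and the half-derivative loss in the Sobolev tail estimate is precisely what degrades the sharp order $d/2$ to $\tfrac{d+1}{2}$. The secondary point needing care is uniformity: every constant in the dyadic sum must be independent of $j$, so that only $C=\sup_{t>0}\|\phi(\cdot)g(t\cdot)\|_{H^s}$ survives in the final inequality.
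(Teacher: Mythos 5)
The paper states this theorem without proof, attributing it directly to Cacciafesta--D'Ancona \cite{CD}, so there is no in-paper argument to compare against. Your sketch is a correct reconstruction of the standard H\"{o}rmander--Mikhlin spectral multiplier route that \cite{CD} and its predecessors (Duong--Ouhabaz--Sikora, Blunck--Kunstmann) follow: dyadic decomposition of the multiplier, conversion of the Gaussian upper bound into Davies--Gaffney off-diagonal estimates and finite propagation speed for $\cos(\tau\sqrt H)$, the weighted single-scale kernel estimate with the half-derivative loss that accounts for the threshold $s>\frac{d+1}{2}$, weak type $(1,1)$ via abstract Calder\'{o}n--Zygmund theory, Marcinkiewicz interpolation giving $(p-1)^{-1}$ for $1<p\le 2$, and duality (using $g(\sqrt H)^*=\bar g(\sqrt H)$ and the symmetry of the hypotheses) giving $p$ for $p\ge 2$, which combine to the asserted $p+\frac{1}{p-1}$.
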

Concerning the Gaussian estimates for the magnetic Schr\"odinger operator, the following are known.
\begin{proposition}[\cite{CD}]\label{11192049}
Let $H = (D+A(x))^2 +V$, $A \in L^2 _{loc} (\R^d)$, $V_{+} \in K_d$, $\|V_{-}\|_{K_d} < \frac{\pi^{d/2}}{\Gamma(d/2 -1)}$. Then $H \ge 0$ and $|e^{-tH} (x, y)| \lesssim t^{-d/2} e^{-|x-y|^2 /8t}$ holds.
\end{proposition}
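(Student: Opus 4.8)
The plan is to deduce the Gaussian bound by a three-step reduction followed by a perturbative estimate: first remove the magnetic field via the diamagnetic (Feynman--Kac--It\^o) inequality, then discard the nonnegative part $V_+$ of the potential using positivity of heat kernels, and finally estimate the heat kernel of $-\Delta-V_-$ by summing its Duhamel (Dyson) series, the hypothesis $\|V_-\|_{K_d}<\pi^{d/2}/\Gamma(d/2-1)$ being exactly what makes that series converge and also what forces $H\ge 0$.

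First I would record the nonnegativity. The form $(D+A)^2$ is nonnegative, and since the integral kernel of $(-\Delta)^{-1}$ on $\R^d$ equals $\frac{\Gamma(d/2-1)}{4\pi^{d/2}}|x-y|^{-(d-2)}$, a standard computation (Schur's test against the Newtonian kernel) shows that $\|V_-\|_{K_d}<\pi^{d/2}/\Gamma(d/2-1)$ implies $-\Delta-V_-\ge 0$, hence $-\Delta+V\ge 0$; combined with the form diamagnetic inequality this gives $H=(D+A)^2+V\ge-\Delta+V\ge 0$, so $e^{-tH}$ is well defined. Next, because $A\in L^2_{\mathrm{loc}}$ and $V$ is real and in the Kato class, the Feynman--Kac--It\^o formula yields the pointwise domination $|e^{-tH}f|\le e^{-t(-\Delta+V)}|f|$, whence $|e^{-tH}(x,y)|\le e^{-t(-\Delta+V)}(x,y)$ for the kernels. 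Finally, since $V_+\ge 0$, the Trotter product formula together with the nonnegativity of the kernel of $e^{-s(-\Delta-V_-)}$ (visible from the Dyson expansion below, all of whose terms are nonnegative) gives $0\le e^{-t(-\Delta+V)}(x,y)\le e^{-t(-\Delta-V_-)}(x,y)$. Thus it suffices to bound the kernel of $e^{-t(-\Delta-V_-)}$.

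For that I would expand $e^{-t(-\Delta-V_-)}=\sum_{n\ge 0}\int_{0\le s_1\le\cdots\le s_n\le t} e^{(t-s_n)\Delta}V_-\cdots V_- e^{s_1\Delta}\,ds$, so that the $n$-th kernel is an iterated spatial convolution of the free heat kernels $p_s(x,y)=(4\pi s)^{-d/2}e^{-|x-y|^2/4s}$ against $V_-$. At each node one uses the elementary inequality $\frac{a^2}{u}+\frac{b^2}{v}\ge\frac{(a+b)^2}{u+v}$ to peel off a Gaussian factor $e^{-|x-y|^2/8t}$ while retaining enough decay at the intermediate points to run the $z$-integrations; each such integration costs a factor comparable to $\sup_x\int_0^{t}\!\!\int p_s(x,z)|V_-(z)|\,dz\,ds$, which for small $t$ is bounded by some $\theta<1$ by the Kato smallness. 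Hence the $n$-th term is $\lesssim\theta^n\, t^{-d/2}e^{-|x-y|^2/8t}$, the geometric series sums to the claimed bound for small $t$, and the bound for all $t>0$ follows by the semigroup property (or a Davies--Gaffney-type argument), noting that for the application (Theorem~\ref{11192048}) only the existence of some constant $A>0$ in the exponent is needed.

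The main obstacle is this last step: extracting the Gaussian factor uniformly in the perturbation order $n$ and in $t$ and checking that the remaining spatial and temporal integrals over the simplex contribute only a fixed geometric ratio $\theta<1$ rather than blowing up with $n$. This is precisely where the numerical threshold $\pi^{d/2}/\Gamma(d/2-1)$ enters — it is the reciprocal of the constant in the Newtonian kernel, so $\|V_-\|_{K_d}$ below it turns the Dyson series into a convergent geometric series — and it is also the source of the loss from $e^{-|x-y|^2/4t}$ to $e^{-|x-y|^2/8t}$. The magnetic and $V_+$ reductions are comparatively soft, but they do genuinely rely on the stated hypothesis $A\in L^2_{\mathrm{loc}}$ through the validity of the Feynman--Kac--It\^o formula.
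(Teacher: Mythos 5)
The paper does not prove this proposition; it is quoted verbatim from the reference [CD] (Cacciafesta--D'Ancona), so there is no in-paper argument to compare against. Your three-step reduction (diamagnetic $\Rightarrow$ drop $A$; Trotter/positivity $\Rightarrow$ drop $V_+$; Dyson expansion for $-\Delta-V_-$ with Kato-small negative part) is the standard route to such bounds, and it is essentially the one used in [CD] and its sources, so as an outline of the Gaussian estimate it is sound.

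Two points deserve scrutiny. First, the numerology: the Newtonian kernel of $(-\Delta)^{-1}$ on $\R^d$ has constant $\frac{\Gamma(d/2-1)}{4\pi^{d/2}}$, whose reciprocal is $\frac{4\pi^{d/2}}{\Gamma(d/2-1)}=4c_d$, not $c_d$; the threshold $c_d=\frac{\pi^{d/2}}{\Gamma(d/2-1)}$ is \emph{one quarter} of that reciprocal, and the extra factor of $4$ is precisely the slack you need to absorb the loss from $e^{-|x-y|^2/4t}$ to $e^{-|x-y|^2/8t}$ when you peel Gaussians through the Dyson sum --- so "exactly what makes the series converge" should be replaced by "safely below what makes the series converge." Second, and more seriously, the justification of $H\ge 0$ is incomplete. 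You propose to get $-\Delta-V_-\ge 0$ by a Schur test of $|V_-|^{1/2}(-\Delta)^{-1}|V_-|^{1/2}$ against the Newtonian kernel, but the Kato norm $\|V_-\|_{K_d}=\sup_x\int_{|x-y|<1}|V_-(y)|\,|x-y|^{-(d-2)}\,dy$ only controls the \emph{local} piece of the Newtonian potential, not $\sup_x\int_{\R^d}|V_-(y)||x-y|^{-(d-2)}\,dy$, which is what a Schur bound of the above operator requires; without some additional control of $V_-$ at infinity (which is present in the application in Corollary~\ref{11191637}, but not stated in the proposition) the form inequality does not follow from the hypotheses as written. Relatedly, your small-time Dyson bound gives $\|e^{-t(-\Delta-V_-)}\|_{L^1\to L^1}\le(1-\theta)^{-1}>1$, so one cannot conclude $-\Delta-V_-\ge 0$ (hence not the all-$t$ on-diagonal $t^{-d/2}$ decay) from it directly; either a genuine operator-positivity argument or an explicit $L^2$ contraction bound needs to be supplied before invoking the semigroup/Davies extension to large $t$. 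Since [CD] is a black box here, flagging this gap in your own write-up and pinning down exactly which hypothesis closes it would be the main improvement.
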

Using these estimates, we prove the Littlewood-Paley theorem for $H$. If $A=0$, the following is proved in \cite{M1} under more general assumptions.
\begin{proposition}\label{11192111}
Assume $H$ as in Proposition \ref{11192049} and $\sigma_{p} (H) = \emptyset$. Then we have
\[ \|u\|_{p} \approx \|\|\{\phi_j (\sqrt{H}) u\}\|_{\ell^2}\|_{p}\]
for all $p \in (1, \infty)$. Here $\phi _j$ is the homogeneous Littlewood-Paley decomposition.
\end{proposition}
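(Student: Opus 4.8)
The plan is to prove the two-sided square function estimate by the classical randomization argument, using the spectral multiplier bound of Theorem~\ref{11192048} in place of the Mikhlin--H\"ormander multiplier theorem. The hypothesis of Theorem~\ref{11192048} is supplied for free by the Gaussian heat kernel bound in Proposition~\ref{11192049} (with $A=8$), so the whole machinery of $L^p$ spectral multipliers is available for $H$.

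For the upper bound $\|(\sum_{j\in\Z}|\phi_j(\sqrt H)u|^2)^{1/2}\|_p\lesssim\|u\|_p$ I would first fix $N\in\N$ and a finite family of Rademacher signs $\epsilon=(\epsilon_j)_{|j|\le N}$, and consider the multiplier $g_\epsilon(\lambda)=\sum_{|j|\le N}\epsilon_j\phi(2^{-j}\lambda)$, so that $g_\epsilon(\sqrt H)=\sum_{|j|\le N}\epsilon_j\phi_j(\sqrt H)$ is a finite linear combination of bounded operators. The key point is that, since $\phi$ is supported in $(1/2,2)$, for each $t>0$ the localized dilate $\phi(\cdot)\,g_\epsilon(t\cdot)$ is a sum of a bounded number of terms $\pm\phi(\cdot)\phi(t2^{-j}\cdot)$, independent of $t,N,\epsilon$, in which the dilation parameter $t2^{-j}$ is confined to a fixed compact subset of $(0,\infty)$; hence $\sup_{t>0}\|\phi(\cdot)g_\epsilon(t\cdot)\|_{H^s}\lesssim1$ uniformly in $N$ and $\epsilon$, for any fixed $s>(d+1)/2$. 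Theorem~\ref{11192048} then yields $\|\sum_{|j|\le N}\epsilon_j\phi_j(\sqrt H)u\|_p\lesssim\|u\|_p$ with a constant uniform in $N$ and $\epsilon$. Averaging the $p$-th power over the signs and applying Khintchine's inequality on both sides gives $\|(\sum_{|j|\le N}|\phi_j(\sqrt H)u|^2)^{1/2}\|_p\lesssim\|u\|_p$, and letting $N\to\infty$ by monotone convergence finishes this direction.

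For the reverse inequality I would argue by duality. Pick $\tilde\phi\in C_0^\infty((1/4,4))$ with $\tilde\phi\equiv1$ on $(1/2,2)$ and set $\tilde\phi_j(\lambda)=\tilde\phi(2^{-j}\lambda)$, so that $\tilde\phi_j\phi_j=\phi_j$. Since $\sum_{j\in\Z}\phi_j(\lambda)=1$ for all $\lambda>0$ and $\phi_j(0)=0$, one has $\sum_{j\in\Z}\phi_j(\sqrt H)=I-E_H(\{0\})$, and the assumption $\sigma_p(H)=\emptyset$ forces $E_H(\{0\})=0$, so $\sum_{j\in\Z}\phi_j(\sqrt H)=I$ strongly on $L^2$. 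Then for $u\in L^2\cap L^p$, $v\in L^2\cap L^{p'}$, writing $\langle u,v\rangle=\sum_{j\in\Z}\langle\phi_j(\sqrt H)u,\tilde\phi_j(\sqrt H)v\rangle$ and using Cauchy--Schwarz in $j$ followed by H\"older in $x$, I bound $|\langle u,v\rangle|$ by $\|(\sum_j|\phi_j(\sqrt H)u|^2)^{1/2}\|_p$ times $\|(\sum_j|\tilde\phi_j(\sqrt H)v|^2)^{1/2}\|_{p'}$; the second factor is $\lesssim\|v\|_{p'}$ by the previous paragraph applied to the Littlewood--Paley-type family $\{\tilde\phi_j\}$ and the exponent $p'$. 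Taking the supremum over $\|v\|_{p'}\le1$ and a density argument gives $\|u\|_p\lesssim\|(\sum_j|\phi_j(\sqrt H)u|^2)^{1/2}\|_p$, and combining the two directions yields the claimed equivalence.

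I expect the main obstacle to be the uniform-in-$\epsilon$ verification of the Sobolev condition feeding Theorem~\ref{11192048} — this is precisely where the compact support of the Littlewood--Paley profile $\phi$ is essential — together with the bookkeeping needed to pass from the uniformly bounded finite partial sums to the full series; the latter is where the hypothesis $\sigma_p(H)=\emptyset$ enters, guaranteeing that the homogeneous decomposition resolves the identity on $L^2$ with no mass lost at the bottom of the spectrum.
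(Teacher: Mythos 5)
Your argument is correct and follows essentially the same route as the paper: both use Rademacher randomization plus the spectral multiplier bound of Theorem~\ref{11192048} (fed by the Gaussian heat kernel bound) and Khintchine's inequality to get $\|Su\|_p\lesssim\|u\|_p$, then deduce the reverse bound by duality after resolving the identity via $\sigma_p(H)=\emptyset$. The only differences are cosmetic bookkeeping: you truncate to $|j|\le N$ and pass to the limit where the paper splits into $S_\pm$ with sums over $j\ge 0$, and your duality step pairs $\phi_j$ with fattened $\tilde\phi_j$ where the paper invokes almost-orthogonality of the $\phi_j(\sqrt H)$ to restrict the double sum to $|j-k|\le2$ — both yield the same square-function pairing.
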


\begin{proof}
We follow the argument in \cite{M1}. Set
\[Su(x) = \|\{\phi_j (\sqrt{H}) u\}\|_{\ell^2}, \quad S_{\pm} u=\left(\sum_{\pm j \ge0} |\phi_j (\sqrt{H}) u|^2\right)^{1/2}.\]
Since zero is not an eigenvalue,
\[\langle u, v \rangle = \sum_{j, k \in \Z} \langle \phi_j (\sqrt{H}) u, \phi_k (\sqrt{H}) v \rangle\]
holds. By the almost orthogonality of $\phi_j (\sqrt{H})$ and H\"older's inequality, we have
\begin{align*}
|\langle u, v \rangle| \le \sum_{j, k \in \Z} \int |\phi_j (\sqrt{H}) u \phi_k (\sqrt{H}) v| dx &= \int \sum_{j, k \in \Z, |j-k| \le 2} |\phi_j (\sqrt{H}) u \phi_k (\sqrt{H}) v| dx \\
& \le 5 \int Su(x) Sv(x) \le 5\|Su\|_p \|Sv\|_{p'}.
\end{align*}
If we prove $\|Su\|_p \lesssim \|u\|_p$ for all $p \in (1, \infty)$, we obtain $\|u\|_p \lesssim \|Su\|_p$ by the above inequality and the duality argument. First we consider $S_+$. Let $r_j (t)$ be the Rademacher functions on $[0, 1]$, i.e. $r_0 (t) =1$ for $t \in [0, 1/2]$, $r_0 (t)=-1$ for $t \in (1/2, 1)$, $r_j (t) =r_0 (2^j t)$ for $j \ge 1$. Set
\[ m_t (s) = \sum_{j \ge 0} r_j (t) \phi (\frac{s}{2^j}).\]
By the support property of $\phi_j$, we have $|\partial^k _s m_t (s)| \lesssim s^{-k}$. Therefore $m_t$ satisfies the condition on $g$ in Theorem \ref{11192048} since $\supp \phi$ is away from zero. Hence
\[\|m_t (\sqrt{H})u\|_p \lesssim \|u\|_p\]
holds for $p \in (1, \infty)$. Now we use Khintchin's inequality:
\[ \{a_m\} \in \ell^2, F(t):= \sum_{m \in \Z_{\ge0}} a_m r_m (t) \Rightarrow \|F\|_{L^p _t} \approx \|\{a_m\}\|_{\ell^2}\]
for all $p \in (1, \infty)$. Then we have
\begin{align*}
\|S_{+} u\|_p \approx \|\|m_t (\sqrt{H})u\|_{L^p _t}\|_p = \|m_t (\sqrt{H})u\|_{L^p ([0, 1]: L^p _x)} \lesssim \|u\|_p
\end{align*}
for all $p \in (1, \infty)$. For $S_{-}$, Set $m_t (s) = \sum_{j \ge 0} r_j (t) \phi (2^j s)$. Then $m_t \in C^{\infty} _0$ and the above argument can be also applied to this case. Hence we are done.
\end{proof}
We need one more lemma, which provides a relation between the usual Besov space and the Besov space associated to the Schr\"odinger operator $H$.
\begin{lemma}\label{1120007}
Assume $H =(D+A(x))^2 +V \ge0$, $|A(x)| \lesssim \langle x \rangle^{-1-\epsilon}, |V(x)| \lesssim \langle x \rangle^{-1-\epsilon}$. Furthermore we assume there exists $s_0, s_1$ such that $s_0 < 0<s_1$, $H^{s_j} H^{-s_j} _0$ is bounded on $L^2 (\R^d)$ for $j=0, 1$. Then we have
\[\|u\|_{\dot{B}^0 _{2, q} (H)} \lesssim \|u\|_{\dot{B}^0 _{2, q}}\]
for all $q \in [1, \infty]$. Here $\dot{B}^0 _{2, q} (H)$ is the homogeneous Besov space associated to $H$:
\[\|u\|_{\dot{B}^0 _{2, q} (H)}:= \|\{\|\phi_j (\sqrt{H})u\|_2\}\|_{\ell^q}.\]
\end{lemma}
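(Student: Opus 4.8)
The plan is to reduce the Besov-space inequality to a frequency-localized $L^2$ bound, interpolated between the two regularity levels $s_0<0<s_1$ furnished by the hypothesis. First I would unwind the definitions: by the square-function characterization of the homogeneous Besov norms, it suffices to show
\[
\|\phi_j(\sqrt{H})u\|_2 \lesssim \Big\|\{\|\psi_k(D)u\|_2\}_k\Big\|_{\ell^q(\text{near }j)}
\]
with a summable (in $k-j$) tail, where $\psi_k$ is a fattened Littlewood--Paley family with $\psi_k\phi_k^{(D)}=\phi_k^{(D)}$; once we have, for each $j$,
\[
\|\phi_j(\sqrt{H})u\|_2 \;\lesssim\; \sum_{k\in\Z} c_{|j-k|}\,\|\psi_k(D)u\|_2
\]
with $\{c_m\}\in\ell^1$, Young's inequality for $\ell^1 * \ell^q \to \ell^q$ gives $\|u\|_{\dot B^0_{2,q}(H)}\lesssim \|u\|_{\dot B^0_{2,q}}$. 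So the whole lemma is the single estimate on the operator $\phi_j(\sqrt H)\psi_k(D)$, regarded as a bounded map on $L^2$ with operator norm decaying rapidly in $|j-k|$.

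The key step is therefore a bound of the form $\|\phi_j(\sqrt H)\psi_k(D)\|_{\mathcal B(L^2)} \lesssim c_{|j-k|}$ with $\{c_m\}\in\ell^1$. I would obtain this by writing $\phi_j(\sqrt H) = H^{s} \cdot (H^{-s}\phi_j(\sqrt H))$ and $\psi_k(D) = (\psi_k(D) H_0^{-s}) \cdot H_0^{s}$, then inserting $H^{\pm s}H_0^{\mp s}$, bounded on $L^2$ by hypothesis for $s=s_1>0$ or $s=s_0<0$. Concretely, for $j \ge k$ use
\[
\phi_j(\sqrt H)\psi_k(D) = \big(H^{s_1}H_0^{-s_1}\big)\big(H_0^{s_1}H^{-s_1}\big)\phi_j(\sqrt H)\psi_k(D),
\]
and rearrange the scalar functional-calculus factors $H_0^{s_1}H^{-s_1}\phi_j(\sqrt H)$ and $\psi_k(D)$: since $\phi_j(\sqrt H)$ is supported at frequency $\sim 2^j$ in the $H$-calculus, $H^{-s_1}\phi_j(\sqrt H)$ contributes a factor $2^{-2s_1 j}$, while $\psi_k(D)H_0^{s_1}$ contributes $2^{2s_1 k}$; the remaining operator is $L^2$-bounded uniformly, and we pick up $2^{-2s_1(j-k)}$. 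For $j\le k$ run the same argument with $s_0<0$ in place of $s_1$, getting $2^{2s_0(k-j)} = 2^{-2|s_0|(k-j)}$. Thus $c_m = 2^{-2\min(s_1,|s_0|)\,m}$, which lies in $\ell^1$, and summing over $k$ via Young completes the proof.

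The main obstacle is making the "frequency-localized factors commute up to powers of $2$" step rigorous: $\phi_j(\sqrt H)$ and $H_0^{s}$ are functions of different, noncommuting operators, so one cannot literally move $H_0^{\pm s}$ past $\phi_j(\sqrt H)$. The clean way is to keep the operators $H^{s_j}H_0^{-s_j}$ as the only "junction" between the two calculi and never mix them otherwise: write $\phi_j(\sqrt H)\psi_k(D) = \big(H^{-s}\phi_j(\sqrt H)\big)\,\big(H^{s}H_0^{-s}\big)\,\big(H_0^{s}\psi_k(D)\big)$, where now $H^{-s}\phi_j(\sqrt H) = \widetilde\phi_j(\sqrt H)$ with $\|\widetilde\phi_j(\sqrt H)\|_{\mathcal B(L^2)} = \sup|\lambda^{-s}\phi_j(\lambda)| \lesssim 2^{-2sj}$ by the spectral theorem, and $H_0^{s}\psi_k(D)$ is a Fourier multiplier of norm $\lesssim 2^{2sk}$; the middle factor is bounded by hypothesis. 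This factorization is exact, uses only the spectral theorem on each side plus the single bridging assumption, and yields the claimed operator-norm decay with no commutator error, which is precisely why the hypothesis was stated in the form "$H^{s_j}H_0^{-s_j}$ bounded" for one negative and one positive exponent.
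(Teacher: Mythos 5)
Your argument is correct, and it is a genuinely different (and, for this lemma, arguably more elementary) route than the one in the paper. The paper proceeds via scattering theory: it invokes the existence and asymptotic completeness of the wave operators $W_\pm = s\text{-}\lim_{t\to\pm\infty} e^{itH}e^{-itH_0}$ (from Ionescu--Schlag), uses the intertwining identity $\phi_j(\sqrt H)=W_\pm\phi_j(|D|)W_\pm^*$ together with $\|W_\pm\|_{\mathcal B(L^2)}\le 1$ to reduce the claim to boundedness of $W_\pm^*$ on $\dot B^0_{2,q}$, shows $W_\pm^*$ is bounded on $\dot H^{2s_0}$ and $\dot H^{2s_1}$ from the hypothesis $\|H^{s_j}H_0^{-s_j}\|_{\mathcal B(L^2)}<\infty$ (via $\|H_0^{s_j}W_\pm^*u\|_2=\|W_\pm^*H^{s_j}u\|_2\le\|H^{s_j}u\|_2\lesssim\|H_0^{s_j}u\|_2$), and then applies real interpolation $(\dot H^{2s_0},\dot H^{2s_1})_{\theta,q}=\dot B^0_{2,q}$ at the right $\theta$. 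You instead establish the off-diagonal operator bound $\|\phi_j(\sqrt H)\psi_k(D)\|_{\mathcal B(L^2)}\lesssim 2^{-2\min(s_1,|s_0|)|j-k|}$ directly, via the exact factorization $\bigl(H^{-s}\phi_j(\sqrt H)\bigr)\bigl(H^{s}H_0^{-s}\bigr)\bigl(H_0^{s}\psi_k(D)\bigr)$, choosing $s=s_1$ for $j\ge k$ and $s=s_0$ for $j\le k$, and then conclude by $\ell^1*\ell^q\to\ell^q$. Both proofs are sound; both ultimately use only the boundedness of $H^{s_j}H_0^{-s_j}$ together with the spectral theorem. The difference is that the paper's route additionally requires the wave-operator machinery (hence the decay hypotheses on $A$ and $V$) and the real-interpolation identity, whereas your factorization argument needs neither and would transfer verbatim to any pair $H,H_0\ge 0$ satisfying the two $L^2$ bounds; the price is that you must supply the standard almost-orthogonality/Young bookkeeping explicitly. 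One very minor remark: the auxiliary fattened family $\psi_k$ is not really needed -- the same three-factor decomposition applied directly to $\phi_j(\sqrt H)\phi_k(D)$ gives the off-diagonal decay -- and one should note in passing that $u=\sum_k\phi_k(D)u$ in $L^2$ (the LP pieces reconstruct $u$), which is needed before Young's inequality is invoked.
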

To prove Lemma \ref{1120007}, we use the existence of the wave operator for $H, H_0$. In \cite{IS}, the existence and the asymptotic completeness of the wave operator:
\[W_{\pm} = s-\lim_{t \rightarrow \pm \infty} e^{itH} e^{-itH_0}P_{\ac} (H_0)\]
are proved under more general assumptions on perturbations.
\begin{proof}[Proof of Lemma \ref{1120007}]
By the intertwining property of wave operators, we have
\[f(H) = W_{\pm} f(H_0) W^*_{\pm}\]
for $f \in L^2 _{loc} (\R).$ Here we have used $\sigma (H) = \sigma_{\ac} (H) =[0, \infty)$ since $H \ge0$ ($\sigma_{\ac} (H) =[0, \infty)$ is proved in \cite{IS}). Then
\[\|\phi_j (\sqrt{H})u\|_2 = \|W_{\pm} \phi_j (|D|) W^*_{\pm} u\|_2 \le \|\phi_j (|D|) W^*_{\pm} u\|_2\]
holds. Hence we obtain
\[\|u\|_{\dot{B}^0 _{2, q} (H)} \lesssim \|W^*_{\pm}u\|_{\dot{B}^0 _{2, q}}.\]
By the assumption, we have
\[\|H^{s_j} _0 W^*_{\pm} u\|_2 = \|W^* _{\pm} H^{s_j} u\|_2 \le \|H^{s_j} u\|_2 \lesssim \|H^{s_j} _0 u\|_2 .\]
Hence $W^* _{\pm}$ is a bounded operator on $\dot{H}^{2s_j}$. Since $(\dot{H}^{s_0}, \dot{H}^{s_1})_{\theta, q} = \dot{B}^s _{2, q}$, $s=(1-\theta)s_0 + \theta s_1$, $W^* _{\pm}$ is also bounded on $\dot{B}^0 _{2, q}$. Therefore we obtain
\[\|u\|_{\dot{B}^0 _{2, q} (H)} \lesssim \|u\|_{\dot{B}^0 _{2, q}}.\]
\end{proof}

\begin{proof}[Proof of Corollary \ref{11191637}]
Since $\langle Hu, u \rangle \approx \langle H_{0} u, u \rangle$, $H^{\pm 1/2} H^{\mp 1/2} _0$ is bounded on $L^2$. Hence the assertion in Lemma {\ref{1120007}} holds. By the assumption, Proposition \ref{11192111} also holds. Hence by repeating the proof of Corollary \ref{jjj}, we have the desired estimates.
\end{proof}

Finally we prove Theorem \ref{11102301}. We use the following lemma. This estimate was first proved in \cite{FLLS} when $H=-\Delta$.
\begin{lemma}\label{11131023}
Let $d, p, q$ be as in $(S1)$ , $H$ be as in Theorem \ref{11102301} and $\gamma$ be a solution to 
\[
\left\{
\begin{array}{l}
i\partial_t \gamma=[H, \gamma(t)] + R(t) \\
\gamma(0)=\gamma_0 \in \mathfrak{S}^{\frac{2q}{q+1}}
\end{array}
\right.
\]
i.e. 
\[ \gamma(t) = e^{-itH} \gamma_{0} e^{itH} -i \int_{0}^{t} e^{-i(t-s)H} R(s)e^{i(t-s)H} ds.\]
Then we have
\[ \| \rho_{\gamma(t)}\|_{L^p_t L^q_x} \lesssim \|\gamma_0\|_{\mathfrak{S}^{\frac{2q}{q+1}}} + \left\|\int_{\R} e^{isH} |R(s)| e^{-isH} ds \right\|_{\mathfrak{S}^{\frac{2q}{q+1}}}.\]
Here $|T| = (T^* T)^{1/2}$.
\end{lemma}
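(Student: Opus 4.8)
The plan is to adapt the duality ($TT^{*}$) argument of Frank-Lewin-Lieb-Seiringer and Frank-Sabin to the propagator $e^{-itH}$. The essential input is that, under the hypotheses of Theorem~\ref{11102301}, $\sigma(H)=\sigma_{\ac}(H)$, so $P_{\ac}(H)=I$, and hence Corollaries~\ref{11161517}, \ref{ooo} and Theorem~\ref{11102255}, in the form supplied by Lemma~\ref{1113719}, give (with $\alpha:=\tfrac{2q}{q+1}$)
\[\|\rho(e^{-itH}\gamma\,e^{itH})\|_{L^p_tL^q_x}\lesssim\|\gamma\|_{\mathfrak{S}^{\alpha}}\qquad\text{for every }\gamma\in\mathfrak{S}^{\alpha}(L^2(\R^d)).\]
Splitting the Duhamel formula as $\gamma(t)=e^{-itH}\gamma_0\,e^{itH}-i\int_0^te^{-i(t-s)H}R(s)e^{i(t-s)H}\,ds=:\gamma_{\mathrm h}(t)+\gamma_{\mathrm{inh}}(t)$, the homogeneous part is immediate: for the spectral decomposition $\gamma_0=\sum_n\nu_n|\phi_n\rangle\langle\phi_n|$ one has $\rho(\gamma_{\mathrm h}(t))=\sum_n\nu_n|e^{-itH}\phi_n|^2$ with $\{e^{-itH}\phi_n\}_{n}$ orthonormal, so the displayed estimate yields $\|\rho(\gamma_{\mathrm h})\|_{L^p_tL^q_x}\lesssim\|\nu_n\|_{\ell^\alpha}=\|\gamma_0\|_{\mathfrak{S}^\alpha}$.

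For $\gamma_{\mathrm{inh}}$ I would use a pointwise Cauchy-Schwarz bound for densities. With the polar decomposition $R(s)=W(s)|R(s)|$, write $e^{-i(t-s)H}R(s)e^{i(t-s)H}=T_1(t,s)T_2(t,s)^{*}$ where $T_1(t,s)=e^{-i(t-s)H}W(s)|R(s)|^{1/2}$ and $T_2(t,s)=e^{-i(t-s)H}|R(s)|^{1/2}$; then $T_1T_1^{*}=e^{-i(t-s)H}|R(s)^{*}|e^{i(t-s)H}$ and $T_2T_2^{*}=e^{-i(t-s)H}|R(s)|e^{i(t-s)H}$ are nonnegative, and the diagonal of a product obeys $|\rho(T_1T_2^{*})(x)|\le\rho(T_1T_1^{*})(x)^{1/2}\rho(T_2T_2^{*})(x)^{1/2}$. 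Integrating over the interval $I_t$ between $0$ and $t$, applying Cauchy-Schwarz in $s$, and then discarding the restriction to $I_t$ (legitimate because each integrand is the density of a nonnegative operator, hence pointwise nonnegative), together with $\int_{I_t}e^{-i(t-s)H}|R(s)|e^{i(t-s)H}\,ds\le e^{-itH}\mathcal{S}\,e^{itH}$ (and similarly with $|R(s)^{*}|$ and $\mathcal{S}^{\sharp}$ in place of $|R(s)|$ and $\mathcal{S}$), where $\mathcal{S}:=\int_{\R}e^{isH}|R(s)|e^{-isH}\,ds$ and $\mathcal{S}^{\sharp}:=\int_{\R}e^{isH}|R(s)^{*}|e^{-isH}\,ds$, one obtains
\[|\rho(\gamma_{\mathrm{inh}}(t))(x)|\ \le\ \rho\bigl(e^{-itH}\mathcal{S}^{\sharp}e^{itH}\bigr)(x)^{1/2}\,\rho\bigl(e^{-itH}\mathcal{S}\,e^{itH}\bigr)(x)^{1/2}.\]
Taking the $L^p_tL^q_x$ norm, Hölder in $(t,x)$, and the displayed orthonormal Strichartz estimate for $H$ give $\|\rho(\gamma_{\mathrm{inh}})\|_{L^p_tL^q_x}\lesssim\|\mathcal{S}^{\sharp}\|_{\mathfrak{S}^\alpha}^{1/2}\|\mathcal{S}\|_{\mathfrak{S}^\alpha}^{1/2}$. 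For the commutator source $R(s)=[\,w*\rho_{\gamma(s)},\gamma(s)]$ of \eqref{2311152204} one has $R(s)^{*}=-R(s)$, so $|R(s)^{*}|=|R(s)|$, $\mathcal{S}^{\sharp}=\mathcal{S}$, and therefore $\|\rho(\gamma_{\mathrm{inh}})\|_{L^p_tL^q_x}\lesssim\|\mathcal{S}\|_{\mathfrak{S}^\alpha}$; adding the homogeneous estimate completes the proof.

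This is a clean $TT^{*}$ scheme, so the only points requiring care are technical: one should first run it for $R$ in a dense class (say $|R(\cdot)|$ finite rank and compactly supported in $t$), where all the operator-valued integrals converge in $\mathfrak{S}^{\alpha}$ and the densities are genuine $L^{1}_{\mathrm{loc}}$ functions, and then pass to the limit via the a priori bound. The two algebraic facts doing the work are the factorization $e^{-i(t-s)H}R(s)e^{i(t-s)H}=T_1(t,s)T_2(t,s)^{*}$, which rests on $W(s)|R(s)|=R(s)$ and $W(s)|R(s)|W(s)^{*}=|R(s)^{*}|$, and the identity $\int_{\R}e^{-i(t-s)H}\Lambda(s)e^{i(t-s)H}\,ds=e^{-itH}(\int_{\R}e^{isH}\Lambda(s)e^{-isH}\,ds)e^{itH}$, combined with the monotonicity $0\le\gamma_1\le\gamma_2\Rightarrow\rho(\gamma_1)\le\rho(\gamma_2)$ pointwise. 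It is worth stressing that the proof invokes only the orthonormal Strichartz estimate for $e^{-itH}$ itself; the Kato smoothing bounds have already been consumed in establishing that estimate through Theorem~\ref{2311022111}.
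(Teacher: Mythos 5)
Your proof is correct, but it takes a genuinely different route from the paper's. The paper argues by duality on the \emph{trace} side: it pairs $\rho_{\tilde\gamma}$ against a test function $f\in L^{p'}_tL^{q'}_x$, rewrites $\int f\rho_{\tilde\gamma}$ as an iterated trace, uses the elementary inequality $|\tr(AB)|\le\tr(|A||B|)$ (valid for self-adjoint $A,B$; applied after rotating $R$ to $iR$), extends the $s$-integral to the full half-line by positivity, and then needs the auxiliary Schatten bound $\bigl\|\int_0^\infty e^{itH}|f(t)|e^{-itH}\,dt\bigr\|_{\mathfrak{S}^{2q'}}\lesssim\|f\|_{L^{p'}_tL^{q'}_x}$, which is itself obtained by trace duality from the orthonormal Strichartz estimate. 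You instead work on the \emph{primal} side with a pointwise Cauchy--Schwarz on kernel diagonals, $|\rho(T_1T_2^*)(x)|\le\rho(T_1T_1^*)(x)^{1/2}\rho(T_2T_2^*)(x)^{1/2}$, a Cauchy--Schwarz in $s$, positivity to extend the $s$-integral, and then a single application of the orthonormal Strichartz estimate after H\"older in $(t,x)$ with exponents $(2,2)$. Your scheme avoids introducing the test function and the dual Schatten exponent $2q'$ altogether; the trade-off is that it naturally yields the geometric-mean bound $\|\mathcal S^\sharp\|_{\mathfrak{S}^\alpha}^{1/2}\|\mathcal S\|_{\mathfrak{S}^\alpha}^{1/2}$ rather than $\|\mathcal S\|_{\mathfrak{S}^\alpha}$, so strictly speaking you prove the lemma as stated only when $|R^*|=|R|$ (e.g.\ $R$ self-adjoint or skew-adjoint). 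That is harmless here: in the application $R=[w*\rho_\gamma,\gamma]$ is skew-adjoint, and the paper's own proof also implicitly assumes this (the inequality $|\tr(AB)|\le\tr(|A||B|)$ fails for general $B$, e.g.\ nilpotent $B$, so the paper too needs $f$ real and $R$ self-adjoint up to a phase). Your closing remarks — run the argument on a dense class of finite-rank, compactly supported $R$ and pass to the limit, and note that Kato smoothing is already consumed in Theorem~\ref{2311022111} — are exactly the right technical caveats.
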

\begin{proof}
By the orthonormal Strichartz estimates for $H$, we have
\begin{align*}
\| \rho_{\gamma(t)}\|_{L^p_t L^q_x} &\lesssim \| \rho (e^{-itH} \gamma_{0} e^{itH}) \|_{L^p_t L^q_x} + \left\|\rho \left(\int_{0}^{t} e^{-i(t-s)H} R(s)e^{i(t-s)H} ds \right) \right\|_{L^p_t L^q_x} \\
& \lesssim \|\gamma_0\|_{\mathfrak{S}^{\frac{2q}{q+1}}} + \left\|\rho \left(\int_{0}^{t} e^{-i(t-s)H} R(s)e^{i(t-s)H} ds \right) \right\|_{L^p_t L^q_x}.
\end{align*}
Set $\tilde{\gamma}(t) = \int_{0}^{t} e^{-i(t-s)H} R(s)e^{i(t-s)H} ds$. For $f \in C^{\infty}_{0} (\R \times \R^d)$, we have
\begin{align*}
\left| \int_{0}^{\infty} \int_{\R^d} f(t, x) \rho_{\tilde{\gamma}(t)}(x) dxdt \right| &= \left| \int_{0}^{\infty} \tr (f(t) \tilde{\gamma}(t)) dt \right| \\
& = \left| \int_{0}^{\infty} \int_{0}^{t} \tr (f(t) e^{-i(t-s)H} R(s)e^{i(t-s)H}) dsdt \right| \\
& = \left| \int_{0}^{\infty} \int_{0}^{t} \tr ( e^{itH}f(t) e^{-itH} e^{isH}R(s)e^{-isH}) dsdt \right| \\
& \lesssim  \int_{0}^{\infty} \int_{0}^{t} \tr ( e^{itH}|f(t)| e^{-itH} e^{isH}|R(s)|e^{-isH}) dsdt  \\
& \le  \int_{0}^{\infty} \int_{0}^{\infty} \tr ( e^{itH}|f(t)| e^{-itH} e^{isH}|R(s)|e^{-isH}) dsdt \\
& = \tr \left( \left(\int_{0}^{\infty} e^{itH}|f(t)| e^{-itH} dt \right) \left (\int_{0}^{\infty} e^{isH}|R(s)|e^{-isH} ds\right)\right).
\end{align*}
In the fourth line we have used $|\tr(AB)| \le \tr(|A||B|)$. For $\hat{\gamma} \in \mathfrak{S}^{\frac{2q}{q+1}}$, set $\hat{\gamma}(t) = e^{-itH} \hat{\gamma} e^{itH}$. Then
\begin{align*}
\left| \tr \left( \left( \int_{0}^{\infty} e^{itH}|f(t)| e^{-itH} dt \right) \hat{\gamma} \right)\right| &= \left| \int_{0}^{\infty} \tr (|f(t)| e^{-itH} \hat{\gamma} e^{itH}) dt \right| \\
& = \left| \int_{0}^{\infty} \int_{\R^d} |f(t)| \rho_{\hat{\gamma}(t)} dxdt \right| \\
& \lesssim \|f\|_{L^{p'}_t L^{q'}_x} \cdot \|\rho_{\hat{\gamma}(t)}\|_{L^p_t L^q_x} \\
& \lesssim \|f\|_{L^{p'}_t L^{q'}_x} \|\hat{\gamma}\|_{\mathfrak{S}^{\frac{2q}{q+1}}}.
\end{align*}
In the last line the orthonormal Strichartz estimates are used. Then by the duality of the Schtten spaces, we have
\[ \left\|\int_{0}^{\infty} e^{itH}|f(t)| e^{-itH} dt \right\|_{\mathfrak{S}^{2q'}} \lesssim \|f\|_{L^{p'}_t L^{q'}_x}.\]
Hence
\[\left| \int_{0}^{\infty} \int_{\R^d} f(t, x) \rho_{\tilde{\gamma}(t)}(x) dxdt \right| \lesssim \|f\|_{L^{p'}_t L^{q'}_x} \cdot \left\| \int_{0}^{\infty} e^{isH}|R(s)|e^{-isH} ds \right\|_{\mathfrak{S}^{\frac{2q}{q+1}}} .\]
By the same way we can estimate the integral on $(-\infty, 0] \times \R^d$. So by the duality argument, we have
\[\|\rho_{\tilde{\gamma}(t)}\|_{L^p_t L^q_x} \lesssim \left\| \int_{-\infty}^{\infty} e^{isH}|R(s)|e^{-isH} ds \right\|_{\mathfrak{S}^{\frac{2q}{q+1}}}.\]
\end{proof}
We also use the following lemma for the existence of the propagator to the linear Schr\"odinger equation with time-dependent potentials. 
\begin{lemma}\label{1114032}
Assume $H$ is as in Theorem \ref{11102301}. Let $V(t, x) = V_1 (t, x) +V_2 (t, x)$ satisfy $V_1 \in L^{\alpha}_T L^p _x$ and $V_2 \in L^{\beta} _T L^{\infty} _x$ for some $p \ge 1, \alpha \ge 1, \beta >1, 0 \le 1/\alpha < 1- \frac{d}{2p}$. Here $L^q_T L^r _x$ denotes $L^q ([-T, T]; L^r _x)$ for $T \in (0, \infty]$. Then the integral equation:
\[ u(t) = e^{-i(t-s)H} u_0 -i \int_{s}^{t} e^{-i(t-s')H} V(s')u(s') ds' , \quad  t \in [-T, T]\]
has a unique solution $u \in C_T L^2 _x \cap L^{\theta} _{loc, T} L^q _x$ for $q= \frac{2p}{p-1}, \theta = \frac{4p}{d}$. Note that $(\theta, q)$ is an admissible pair. Furthermore the mass conservation law $\|u(t)\|_2 = \|u_0\|_2$ holds.
\end{lemma}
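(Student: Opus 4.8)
The plan is to solve the integral equation by a fixed-point argument on short time intervals, using the Strichartz estimates for $e^{-itH}$, and then to patch the local solutions; the mass conservation law is obtained afterwards by a density argument. Since the spectrum of $H$ is absolutely continuous we have $P_{\ac}(H)=I$, so the orthonormal estimates of Corollaries~\ref{11161517}, \ref{ooo} or Theorem~\ref{11102255} (specialized to a single $f_0$) give the homogeneous Strichartz estimate $\|e^{-itH}u\|_{L^a_tL^b_x}\lesssim\|u\|_2$ for every admissible pair $(a,b)$, and by the usual $TT^*$ and Christ--Kiselev arguments (see \cite{RS},\cite{BM},\cite{EGS},\cite{M1}) one also has the dual homogeneous estimate and the retarded inhomogeneous estimate
\[
\Bigl\|\int_0^t e^{-i(t-s)H}F(s)\,ds\Bigr\|_{L^\infty_TL^2_x\cap L^\theta_TL^q_x}\lesssim \|F\|_{L^1_TL^2_x}+\|F\|_{L^{\theta'}_TL^{q'}_x},
\]
for the admissible pair $(\theta,q)=(4p/d,\,2p/(p-1))$ attached to $p$. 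These are the only analytic facts about $H$ that enter the proof.

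Next I would run the contraction. Put $X_T:=C([-T,T];L^2_x)\cap L^\theta([-T,T];L^q_x)$ and consider $\Phi(u)(t)=e^{-i(t-s)H}u_0-i\int_s^t e^{-i(t-s')H}\bigl(V_1(s')+V_2(s')\bigr)u(s')\,ds'$ on a ball of $X_T$. For the bounded part, Hölder in $x$ is trivial and Hölder in $t$ gives $\|V_2u\|_{L^1_TL^2_x}\le\|V_2\|_{L^\beta_TL^\infty_x}\|u\|_{L^{\beta'}_TL^2_x}\lesssim T^{1/\beta'}\|V_2\|_{L^\beta_TL^\infty_x}\|u\|_{L^\infty_TL^2_x}$, using $\beta>1$. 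For the $L^p$ part, Hölder in $x$ with $\tfrac1{q'}=\tfrac1p+\tfrac1q$ — an identity that holds precisely because $q=\tfrac{2p}{p-1}$ — gives $\|V_1(t)u(t)\|_{L^{q'}_x}\le\|V_1(t)\|_{L^p_x}\|u(t)\|_{L^q_x}$, and since the hypothesis $\tfrac1\alpha<1-\tfrac d{2p}$ is exactly $\tfrac1\alpha<\tfrac1{\theta'}-\tfrac1\theta$ (because $\tfrac2\theta=\tfrac d{2p}$), Hölder in $t$ with the leftover exponent $\tfrac1r:=\tfrac1{\theta'}-\tfrac1\alpha-\tfrac1\theta>0$ yields $\|V_1u\|_{L^{\theta'}_TL^{q'}_x}\lesssim T^{1/r}\|V_1\|_{L^\alpha_TL^p_x}\|u\|_{L^\theta_TL^q_x}$. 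Feeding these into the Strichartz estimates above gives
\[
\|\Phi(u)\|_{X_T}\le C\|u_0\|_2+C\bigl(T^{1/r}\|V_1\|_{L^\alpha_TL^p_x}+T^{1/\beta'}\|V_2\|_{L^\beta_TL^\infty_x}\bigr)\|u\|_{X_T},
\]
and the analogous bound for $\Phi(u)-\Phi(v)$. On a sufficiently short interval the coefficient is $<1$ — either because $T$ is small, or, when working inside a fixed $[-T_0,T_0]$, because $s\mapsto\|V_1(s)\|_{L^p_x}^\alpha$ is integrable and hence has small integral over short subintervals while $T^{1/\beta'}$ is small — so $\Phi$ is a contraction and produces a unique fixed point there. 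Patching finitely many such subintervals covers $[-T,T]$ and yields existence and uniqueness in $C_TL^2_x\cap L^\theta_{\mathrm{loc},T}L^q_x$.

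Finally, for the mass conservation I would argue by approximation: the contraction estimate shows the data-to-solution map is Lipschitz from $L^2_x$ into $C([-T',T'];L^2_x)$ on each short subinterval, with a uniform local existence time, so it suffices to prove $\|u(t)\|_2=\|u_0\|_2$ for $u_0\in\mathcal S(\R^d)$; approximating in addition $V$ by smooth bounded potentials, for which the flow is classical and $t\mapsto\|u(t)\|_2^2$ is differentiable with derivative $2\,\re\langle-i(Hu+Vu),u\rangle=0$ by the self-adjointness of $H$ and the reality of $V(t,\cdot)$, and passing to the limit, one obtains $\|u(t)\|_2=\|u_0\|_2$ in general. I expect the Hölder bookkeeping (checking $\tfrac1{q'}=\tfrac1p+\tfrac1q$ and $\tfrac1r>0$) and the patching step to be routine; the only genuinely delicate point is this last limiting argument, where one must make sure the approximating solutions and the limiting solution are all defined on a common time interval — which is exactly what the uniform-in-data (and uniform-in-approximation) local existence time from the contraction estimate provides.
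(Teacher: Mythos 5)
Your fixed-point argument coincides, up to notation, with the paper's: both set up the contraction on $C_TL^2_x\cap L^\theta_TL^q_x$ with $q=\tfrac{2p}{p-1}$, $\theta=\tfrac{4p}{d}$, both use the $H$-Strichartz estimates (which the paper obtains by specializing the orthonormal estimates to one function) together with the Hölder bookkeeping you carry out, and both patch short intervals using the small-tail argument for $\|V_1\|_{L^\alpha_T}$; the paper outsources the Hölder step to Yajima's lemma, but the content is identical and your exponent checks ($\tfrac1{q'}=\tfrac1p+\tfrac1q$, $\tfrac1r=\tfrac1{\theta'}-\tfrac1\alpha-\tfrac1\theta>0$) are correct.

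Where you genuinely diverge is the mass conservation, and this is also where your sketch has a gap. You propose to approximate $u_0$ by Schwartz functions and $V(t,\cdot)$ by smooth bounded potentials, assert that the resulting flow is \emph{classical}, differentiate $t\mapsto\|u(t)\|_2^2$, and pass to the limit. The problem is the word ``classical'': even with $V$ smooth and bounded, to differentiate $\|u(t)\|_2^2$ and obtain $2\re\langle -i(Hu+Vu),u\rangle$ one needs $u(t)\in D(H)$ for all $t$, and this is exactly what is hard here. The operators $H$ in Theorem~\ref{11102301} include inverse-square and Kato-class potentials, for which $D(H)$ is not $H^2$, Schwartz data need not lie in $D(H)$, and a bounded multiplicative perturbation need not map $D(H)$ into itself — so propagation of $D(H)$-regularity for the mild solution is not automatic. (This is precisely the obstruction mentioned in the remark after Lemma~\ref{1114032} about reformulating the equation with $e^{it\Delta}$ when $\tilde V$ is singular.) One would additionally have to justify that the approximating solutions converge to the original one on a common interval, which you flag, but the $D(H)$-regularity issue is the deeper one and you do not address it.

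The paper deliberately avoids all of this by following Ozawa~\cite{O}: it expands
\[
\|U(t,0)u_0\|_2^2=\|u_0\|_2^2+\Bigl\|\int_0^t e^{-i(t-s)H}V(s)u(s)\,ds\Bigr\|_2^2-2\im\int_0^t\langle e^{-isH}u_0,V(s)u(s)\rangle\,ds,
\]
checks that each term is finite using only the mild-solution space $C_TL^2_x\cap L^\theta_{\mathrm{loc},T}L^q_x$ and the Strichartz estimates, and then shows by a Fubini/symmetry manipulation and the integral equation itself that the second term equals $+2\im\int_0^t\langle e^{-isH}u_0,V(s)u(s)\rangle\,ds$, so the last two terms cancel. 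No domain considerations, no regularization of $H$, no density argument. If you want to keep your approximation route you would need to supply the missing $D(H)$-propagation argument (e.g.\ via Yosida or resolvent regularization of $H$, as in Yajima~\cite{Y}); otherwise the Ozawa computation is both shorter and more robust for the singular $H$ allowed here.
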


\begin{remark}
The above integral equation is a formulation of the Schr\"odinger equation:
\[
\left\{
\begin{array}{l}
i\partial_t u= Hu(t) + V(t)u(t) \\
u(s)=u_0 \in L^2_x
\end{array}
\right.
.\]
When $H= -\Delta$, the above result was proved in \cite{Y}. Concerning the formulation, it is also possible to consider the equation
\[ u(t) = e^{i(t-s)\Delta} u_0 -i \int_{s}^{t} e^{i(t-s')\Delta} (V(s') + \tilde{V})u(s') ds' , \quad  t \in [-T, T]\]
when $H= -\Delta + \tilde{V}$. However it seems difficult to solve this equation, for example, when $\tilde{V}$ is an inverse square type potential since its singularity at origin is strong. Also note that even if these two integral equations have solutions for the same $u_0$, we do not know whether these two solutions are the same or not. This is because these solutions are not differentiable generally. As for the differentiability of these solutions, Yajima gave sufficient conditions in \cite{Y}.
\end{remark}
From the above lemma we have the following.
\begin{corollary}\label{1114035}
Under the assumptions in Lemma \ref{1114032}, there exists a family of unitary operators $\{U(t, s)\}$ such that
\begin{itemize}

\item $U(t, s)$ is strongly continuous with respect to $(t, s)$.

\item $U(t, s)U(s, r) = U(t, r)$ holds.

\item $U(t, s)u_0$ is a unique solution to the integral equation in Lemma \ref{1114032}.

\end{itemize}

\end{corollary}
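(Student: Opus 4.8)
The plan is to \emph{define} $U(t,s)u_0$ to be the value at time $t$ of the unique solution to the integral equation in Lemma~\ref{1114032} with initial data $u_0$ prescribed at time $s$; its existence, uniqueness, and the mass conservation law $\|U(t,s)u_0\|_2 = \|u_0\|_2$ are precisely the content of that lemma. First I would check that $U(t,s)$ is a bounded linear operator on $L^2_x$: linearity follows from uniqueness (if $u,v$ solve the integral equation with data $u_0,v_0$, then $au+bv$ solves it with data $au_0+bv_0$), and the mass conservation law says $U(t,s)$ is an isometry.

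Next I would establish the cocycle identity $U(t,s)U(s,r)=U(t,r)$. Fix $r$ and $u_0$ and set $u(\cdot):=U(\cdot,r)u_0$. Applying $e^{-i(t-s)H}$ to the identity $u(s)=e^{-i(s-r)H}u_0-i\int_r^s e^{-i(s-s')H}V(s')u(s')\,ds'$, using the group property $e^{-i(t-s)H}e^{-i(s-\sigma)H}=e^{-i(t-\sigma)H}$, and adding $-i\int_s^t e^{-i(t-s')H}V(s')u(s')\,ds'$, one recovers
\[ u(t)=e^{-i(t-s)H}u(s)-i\int_s^t e^{-i(t-s')H}V(s')u(s')\,ds', \]
so $u$ restricted to times $t$ solves the integral equation with data $u(s)=U(s,r)u_0$ at time $s$; by uniqueness, $u(t)=U(t,s)U(s,r)u_0$. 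In particular $U(t,t)=I$ and $U(t,s)U(s,t)=U(s,t)U(t,s)=I$, so each $U(t,s)$ is an invertible isometry of $L^2_x$, hence unitary, with $U(t,s)^{-1}=U(s,t)$.

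Finally I would prove joint strong continuity. Continuity of $t\mapsto U(t,s)u_0$ for fixed $s$ follows from the integral equation: $t\mapsto e^{-i(t-s)H}u_0$ is continuous by strong continuity of the unitary group generated by $H$, and the Duhamel term is continuous in $t$ by the mapping properties (the Strichartz and Hölder estimates used in the proof of Lemma~\ref{1114032}) together with dominated convergence; in particular $U(s,s_0)u_0\to u_0$ in $L^2_x$ as $s\to s_0$. For the general case, the cocycle identity and unitarity give
\[ \|U(t,s)u_0-U(t,s_0)u_0\|_2=\|U(t,s)(u_0-U(s,s_0)u_0)\|_2=\|u_0-U(s,s_0)u_0\|_2, \]
which tends to $0$ as $s\to s_0$, uniformly in $t$; combining with continuity in the first variable yields joint strong continuity of $(t,s)\mapsto U(t,s)u_0$.

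I expect the main obstacle to be the careful verification of the cocycle property — specifically, checking that a solution on $[-T,T]$, restricted to a later subinterval, genuinely satisfies the integral equation with the restricted data, which requires the Duhamel manipulation above — and confirming that continuity of the Duhamel term in $t$ really does follow from the local-in-time estimates already established for Lemma~\ref{1114032} without any new ingredient.
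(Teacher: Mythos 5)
Your proposal is correct and follows essentially the same route as the paper: in both cases the propagator is defined from the unique solution of the integral equation, linearity comes from uniqueness, unitarity comes from the mass conservation law $\|u(t)\|_2=\|u_0\|_2$ already established in Lemma~\ref{1114032} together with invertibility via $U(t,s)U(s,t)=I$, and the cocycle and continuity properties follow from uniqueness and the Duhamel structure. The only difference in emphasis is that the paper proves Lemma~\ref{1114032} and Corollary~\ref{1114035} in one combined argument — constructing $U(t,s)$ explicitly as $(1+iQ)^{-1}$ on short intervals, gluing, and then devoting most of its effort to the Ozawa-style verification of mass conservation — whereas you take the Lemma's conclusions (existence, uniqueness, mass conservation) as given and instead spell out the cocycle identity and joint strong continuity, which the paper leaves implicit in the word ``glueing.'' Both are sound; yours is slightly more modular and makes explicit the two properties the paper does not write out.
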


\begin{proof}[Proof of Lemma \ref{1114032} and Corollary \ref{1114035}]
We follow the argument in \cite{Y}. Let $q, \theta$ be as in the statement. Set
\begin{align*}
&\mathcal{X}(a) = \mathcal{X}(a, q) = C(I; L^2 _x) \cap L^{\theta} (I; L^q _x) \\
&\mathcal{X}^* (a) = \mathcal{X}^* (a, q) = L^1 (I; L^2 _x) + L^{\theta'} (I; L^{q'} _x) \\
&\mathcal{M} (a) = L^{\alpha} (I; L^p _x) + L^{\beta} (I; L^{\infty} _x)
\end{align*}
for $0< a < 1/2$ and $I = [-a, a]$. Then we have 
\[\|Vu\|_{\mathcal{X}^*(a)} \le (2a)^{\gamma} \|V\|_{\mathcal{M}(a)} \|u\|_{\mathcal{X}(a)}\]
for $\gamma = min (1-\frac{1}{\beta}, 1- \frac{d}{2p} -\frac{1}{\alpha})$. See Lemma \ref{2311152210} in \cite{Y} for its proof. By this estimate and the Strichartz estimates for $H$, we have
\begin{align*}
&\|Qu\|_{\mathcal{X}(a)} \lesssim \|Vu\|_{\mathcal{X}^* (a)} \lesssim a^{\gamma} \|V\|_{\mathcal{M}(a)} \|u\|_{\mathcal{X}(a)} \\
& Qu(t) = \int_{0}^{t} e^{-i(t-s)H} V(s)u(s) ds . 
\end{align*}
Hence if we take $a$ sufficiently small, we have
\begin{align*}
&\Phi : \mathcal{X}(a) \rightarrow \mathcal{X}(a), u \mapsto e^{-itH} u_0 -iQu \\
& \|\Phi (u-v)\|_{\mathcal{X}(a)} \le \frac{1}{2} \|u-v\|_{\mathcal{X}(a)} .
\end{align*}
Then $\Phi$ is a contraction mapping on $\mathcal{X}(a)$ and our integral equation for $s=0$ has a unique solution $u(t) = (1+ iQ)^{-1} u_0 \in \mathcal{X}(a)$. By taking $u_0 = e^{isH} u_0$ and $V = V(t+s)$, we have a unique solution $u(t)$ to the integral equation with initial data $u(s) = u_0$ on $[s-a, s+a]$. Then $U(t, s)u_0 := u(t)$ is a linear operator on $L^2 _x$ by construction. By glueing these $U(t, s)$, we have a unique solution on $[-T, T]$ and a family of linear operators $\{U(t, s)\}$ for $t, s \in [-T, T]$. Hence it suffices to show that $\|U(t, s)u_0\|_2 = \|u_0\|_2$ holds. Instead of the regularizing technique in \cite{Y}, we use the method in \cite{O}. We assume $s=0$ for the sake of simplicity. Then
\begin{align*}
\|U(t, 0)u_0\|^2 _2 &= \left\|e^{-itH} u_0 -i \int_{0}^{t} e^{-i(t-s)H} V(s)u(s) ds \right\|^2_2 \\
& = \|u_0\|^2 _2 + \left\| \int_{0}^{t} e^{-i(t-s)H} V(s)u(s) ds \right\|^2 _2 -2 \im \int_{0}^{t} \langle e^{-isH} u_0 , V(s)u(s) \rangle ds .
\end{align*}
Note that the third term is well-defined since $e^{-isH} u_0, u(s) \in C_{T} L^2 \cap L^{\theta} _{loc, T} L^{q} _x$ ensures that 
\begin{align*}
& \int_{0}^{t} |\langle e^{-isH} u_0 , V(s)u(s) \rangle| ds \\
& \lesssim _{t} \|V_1\|_{L^{\alpha} _{[0, t]} L^p _x} \|e^{-itH} u_0\|_{L^{\theta} _T L^q _x} \|u\|_{L^{\theta} _{[0, t]} L^q _x} + \int_{0}^{t} \|V_2 (s)\|_{\infty} \|u_0\|_2 \|u\|_{L^{\infty} _{T} L^2 _x} ds \\
& < {\infty}
\end{align*}
by the relation: $1/p + 1/q + 1/q =1, 1/\alpha + 1/\theta + 1/\theta <1$. We transform the second term as
\begin{align*}
\left\| \int_{0}^{t} e^{-i(t-s)H} V(s)u(s) ds \right\|^2 _2 &= \re \int_{0}^{t} \int_{0}^{t} \langle V(s)u(s) , e^{-i(s-s')H} V(s')u(s') \rangle ds' ds \\
& = 2 \re \int_{0}^{t} \int_{0}^{s} \langle V(s)u(s) , e^{-i(s-s')H} V(s')u(s') \rangle ds' ds \\
& = -2 \im \int_{0}^{t} \langle V(s)u(s) , u(s) + i\int_{0}^{s} e^{-i(s-s')H} V(s')u(s') ds \rangle ds \\
& = 2 \im \int_{0}^{t} \langle e^{-isH} u_0 , V(s)u(s) \rangle ds .
\end{align*}
Then we have $\|U(t, 0)u_0\|_2 = \|u_0\|_2$.
\end{proof}
Now we prove Theorem \ref{11102301}.
\begin{proof}[Proof of Theorem \ref{11102301}]

The proof is divided into two parts.

\noindent
(Step 1) In this step we prove the local existence of a solution. The proof follows the argument in \cite{FS}. Let $R >0$ be such that $\|\gamma_0\|_{\mathfrak{S}^{\frac{2q}{q+1}}} < R$ and $T =T(R)$ be chosen later. Set
\[ X := \left\{ (\gamma , \rho) \in C([0, T]; \mathfrak{S}^{\frac{2q}{q+1}}) \times L^p ([0, T]; L^q _x) \mid \|\gamma\|_{C([0, T]; \mathfrak{S}^{\frac{2q}{q+1}})} + \|\rho\|_{L^p ([0, T]; L^q _x)} \le CR \right\}.\]
Here $C>0$ is later chosen independent of $R$.
We define 
\begin{align*}
&\Phi (\gamma , \rho) =(\Phi_1 (\gamma , \rho), \rho [\Phi_1 (\gamma, \rho)]) \\
& \Phi_1 (\gamma , \rho)(t) = e^{-itH} \gamma_0 e^{itH} -i \int_{0}^{t} e^{-i(t-s)H} [w*\rho , \gamma] e^{i(t-s)H} ds .
\end{align*}
For $(\gamma , \rho) \in X$, by the orthonormal Strichartz estimates and the unitary invariance of the Schatten norms, we have
\begin{align*}
\| \Phi_1 (\gamma , \rho)\|_{C([0, T]; \mathfrak{S}^{\frac{2q}{q+1}})} &\lesssim \|\gamma_0\|_{\mathfrak{S}^{\frac{2q}{q+1}}} + 2\int_{0}^{t} \|w* \rho\|_{\infty} \|\gamma\|_{\mathfrak{S}^{\frac{2q}{q+1}}} ds \\
& \lesssim R + 2T^{1/p'} \|w\|_{q'} \|\rho\|_{L^p ([0, T]; L^q _x)} \|\gamma\|_{C([0, T]; \mathfrak{S}^{\frac{2q}{q+1}}) } \\
& \lesssim R + 2\|w\|_{q'} C^2 T^{1/p'} R^2 .
\end{align*}
By Lemma \ref{11131023}, 
\begin{align*}
\|\rho [\Phi_1 (\gamma , \rho)]\|_{L^p ([0, T]; L^q _x)} &\lesssim R + 2 \int_{0}^{T} \|w* \rho\|_{\infty} \|\gamma\|_{\mathfrak{S}^{\frac{2q}{q+1}}} ds \\
& \le R + 2\|w\|_{q'} C^2 T^{1/p'} R^2 .
\end{align*}
If we take $C$ large enough and $T$ small enough, we have $\Phi : X \rightarrow X$. By the similar computation, $\Phi$ is a contraction mapping on $X$. Hence there exists a unique solution to (H) on $[0, T_{max} )$.

\noindent
(Step 2)  In this step we prove the global existence of the solution. Note that if $T_{max} < \infty$ holds, then $\varlimsup_{t \nearrow T_{max} } \|\gamma (t)\|_{\mathfrak{S}^{\frac{2q}{q+1}}} = \infty$ holds. By Lemma \ref{1114032}, Corollary \ref{1114035} and $\rho_{\gamma} *w \in L^p _{loc} ( [0, T_{max} ); L^{\infty} _x)$, for every $\epsilon>0$, there exists a family of unitary operators $\{U(t)\}$ such that 
\begin{itemize}

\item $U(t)$ is strongly continuous

\item $U(0) = I$

\item $U(t)u_0 = e^{-itH} u_0 -i \int_{0}^{t} e^{-i(t-s)H} w* \rho_{\gamma}(s)U(s)u_0 ds$ on $[0, T_{max} - \epsilon ]$. 
\end{itemize}

Now we claim $\gamma (t) = U(t) \gamma_0 U(t)^* (=: \eta (t))$ holds for $t \in [0, T_{max} - \epsilon ]$. If this equality holds, we have $\|\gamma (t)\|_{\mathfrak{S}^{\frac{2q}{q+1}}} = \|\gamma_0\|_{\mathfrak{S}^{\frac{2q}{q+1}}}$ for $t \in [0, T_{max} - \epsilon ]$ by the unitary invariance of the Schatten norms. Hence $T_{max} = \infty$.
Since
\begin{align*}
&U(t) = e^{-itH} -i \int_{0}^{t} e^{-i(t-s)H} w* \rho_{\gamma}(s)U(s) ds \\
&U(t)^* = e^{itH} + i \int_{0}^{t} U(s)^* w* \rho_{\gamma}(s) e^{i(t-s)H} ds
\end{align*}
hold, we have
\begin{align*}
U(t)\gamma_0 U(t)^* = e^{-itH} \gamma_0 e^{itH} &+ \left( \int_{0}^{t} e^{-i(t-s)H} w* \rho_{\gamma}(s)U(s) ds\right) \gamma_0 \\
&\cdot\left(\int_{0}^{t} U(s')^* w* \rho_{\gamma}(s') e^{i(t-s')H} ds' \right) \\
& + 2\im \left( \int_{0}^{t} e^{-i(t-s)H} w* \rho_{\gamma}(s)U(s) \gamma_0 e^{itH} ds \right).
\end{align*}
For the second term, we transform it as
\begin{align*}
&\left( \int_{0}^{t} e^{-i(t-s)H} w* \rho_{\gamma}(s)U(s) ds\right) \gamma_0 \left(\int_{0}^{t} U(s')^* w* \rho_{\gamma}(s') e^{i(t-s')H} ds' \right) \\
& = \re \left( \int_{0}^{t} \int_{0}^{t} e^{-i(t-s)H} w* \rho_{\gamma}(s)U(s) \gamma_0 U(s')^* w* \rho_{\gamma}(s') e^{i(t-s')H} ds' ds \right) \\
& = 2 \re \left( \int_{0}^{t} \int_{0}^{s} e^{-i(t-s)H} w* \rho_{\gamma}(s)U(s) \gamma_0 U(s')^* w* \rho_{\gamma}(s') e^{i(t-s')H} ds' ds \right) \\
& = 2 \re \left( \int_{0}^{t} e^{-i(t-s)H} w* \rho_{\gamma}(s)U(s) \gamma_0 \left( \int_{0}^{s} U(s')^* w* \rho_{\gamma}(s') e^{i(s-s')H} ds' \right) e^{i(t-s)H} ds\right) \\
& = 2 \re \left( \int_{0}^{t} e^{-i(t-s)H} w* \rho_{\gamma}(s)U(s) \gamma_0 \left( \frac{U(s)^* - e^{isH}}{i} \right) e^{i(t-s)H} ds\right) \\
&= 2 \re \left( \frac{1}{i} \int_{0}^{t} e^{-i(t-s)H} w* \rho_{\gamma}(s) \eta (s) e^{i(t-s)H} ds \right) - 2\im \left( \int_{0}^{t} e^{-i(t-s)H} w* \rho_{\gamma}(s)U(s) \gamma_0 e^{itH} ds \right) \\
& = -i \int_{0}^{t} e^{-i(t-s)H} [w* \rho_{\gamma} (s) , \eta (s)] e^{i(t-s)H} ds - 2\im \left( \int_{0}^{t} e^{-i(t-s)H} w* \rho_{\gamma}(s)U(s) \gamma_0 e^{itH} ds \right) .
\end{align*}
Then we have
\[ \eta(t) = U(t)\gamma_0 U(t)^* = e^{-itH} \gamma_0 e^{itH} -i \int_{0}^{t} e^{-i(t-s)H} [w* \rho_{\gamma} (s) , \eta (s)] e^{i(t-s)H} ds .\]
By taking a difference of this equation and 
\[ \gamma (t) = e^{-itH} \gamma_0 e^{itH} -i \int_{0}^{t} e^{-i(t-s)H} [w*\rho_{\gamma} , \gamma] e^{i(t-s)H} ds,\]
we have
\begin{align*}
\|\gamma (t) - \eta(t)\|_{\mathfrak{S}^{\frac{2q}{q+1}}} &\le \int_{0}^{t} \| [w*\rho_{\gamma} , \gamma - \eta]\|_{\mathfrak{S}^{\frac{2q}{q+1}}} ds \\
& \le \int_{0}^{t} 2\|w\|_{q'} \|\rho_{\gamma} (s)\|_{q} \|\gamma (s) - \eta(s)\|_{\mathfrak{S}^{\frac{2q}{q+1}}} ds .
\end{align*}
By Gronwall's inequality, we have 
\[\gamma(t) = \eta(t) \] 
for $t \in [0, T_{max} - \epsilon]$. Hence we are done.
\end{proof}

\section*{Acknowledgement}
The author thanks his supervisor Kenichi Ito for valuable comments and discussions. He also thanks Sonae Hadama for suggesting this problem. He is partially supported by FoPM, WINGS Program, the University of Tokyo.

\end{document}